\title{Nearly-Linear Time Private Hypothesis Selection with the Optimal Approximation Factor}
\author{Maryam Aliakbarpour \\
  Department of Computer Science \\
  \& Ken Kennedy Institute\\
  Rice University \\
  \texttt{maryama@rice.edu} \\
  \and
  Zhan Shi \\
  Department of Computer Science \\
  Rice University \\
  \texttt{zs50@rice.edu} \\
  \and
  Ria Stevens \\
  Department of Computer Science \\
  Rice University \\
  \texttt{ria.stevens@rice.edu} \\
  \and
  Vincent X. Wang \\
  Department of Computer Science \\
  Rice University \\
  \texttt{vw12@rice.edu}
}
\date{October 22, 2025}
\begin{document}

\maketitle

\begin{abstract}
    Estimating the density of a distribution from its samples is a fundamental problem in statistics. \emph{Hypothesis selection} addresses the setting where, in addition to a sample set, we are given $n$ candidate distributions---referred to as \emph{hypotheses}---and the goal is to determine which one best describes the underlying data distribution. This problem is known to be solvable very efficiently, requiring roughly $O(\log n)$ samples and running in $\Tilde{O}(n)$ time. The quality of the output is measured via the total variation distance to the unknown distribution, and the approximation factor of the algorithm determines how large this distance is compared to the optimal distance achieved by the best candidate hypothesis. It is known that $\alpha = 3$ is the optimal approximation factor for this problem.
We study hypothesis selection under the constraint of \emph{differential privacy}. We propose a differentially private algorithm in the central model that runs in nearly-linear time with respect to the number of hypotheses, achieves the optimal approximation factor, and incurs only a modest increase in sample complexity, which remains polylogarithmic in $n$. This resolves an open question posed by [Bun, Kamath, Steinke, Wu, NeurIPS 2019]. Prior to our work, existing upper bounds required quadratic time.
\end{abstract}

\clearpage
\tableofcontents
\clearpage

\section{Introduction}
The task of accurately estimating the underlying probability distribution that generates a dataset is a fundamental theoretical problem in statistical inference with broad applicability in practical data analysis. 
A growing concern in modern data analysis is preserving the privacy of the individuals whose data informs these estimations, specifically when dealing with sensitive information. {\em Differential privacy} (DP) has emerged as a widely adopted standard in privacy-preserving data analysis~\cite{DworkMNS06} and is currently employed by major entities, such as Google~\cite{erlingsson2014rappor}, Apple~\cite{apple2017learning}, and the U.S.\ Census Bureau~\cite{Abowd2018CensusDP}. See Section~\ref{sec:DP_industry} for more examples.  

In this paper, we study a specific instance of distribution estimation under the constraint of differential privacy, referred to as {\em Hypothesis Selection}. In this problem, we are given a finite collection of $\numHypotheses$ {\em candidate distributions} $\HH \coloneqq \left\{H_1, H_2, \ldots, H_n\right\}$, known as hypotheses, and a dataset of i.i.d.\ samples drawn from an {\em unknown} distribution $P$. The goal is to select a hypothesis in $\totalhyposet$ that well-approximates the true data distribution.

A long line of research has studied hypothesis selection in the non-private setting~\cite{Yatracos85, DevroyeL96, DevroyeL97, DevroyeL01, MahalanabisS08, daskalakisG14, SureshOAJ14, AcharyaJOS14, AcharyaFJOS18, BousquetBKEM21,  aliakbarpourBS23, AamandACINS23, aamand2024statistical, aliakbarpourBS24}. These algorithms' performances are evaluated across three key aspects: {\em \textbf{i)} sample complexity, \textbf{ii)} time complexity, and \textbf{iii)} approximation factor.} Early work has shown that hypothesis selection admits highly sample-efficient algorithms, requiring only $\Theta(\log n)$ samples, a logarithmic dependence on the number of hypotheses, and no dependence on the domain size of distributions~\cite{Yatracos85, DevroyeL96, DevroyeL97, DevroyeL01}. The sample efficiency is achieved by only needing to estimate probabilities of $O(n^2)$ special sets, called Scheff\'e sets, according to $P$ (see Equation~\eqref{eq:scheffe} for a definition). Moreover, several works~\cite{daskalakisG14, AcharyaFJOS18, aliakbarpourBS23, aliakbarpourBS24} have shown that one can find a valid hypothesis in roughly linear time in $n$.  Recently, \cite{aamand2024statistical} characterized the statistical-computational trade-off of the problem when the distributions have finite domain.

Another key aspect is the accuracy of the selected hypothesis, measured by the total variation distance to the true distribution. The approximation factor (denoted by $\multiplicativeError$) measures this distance relative to the minimum distance between $P$ and a hypothesis in $\totalhyposet$. A notable lower bound established in~\cite{BousquetKS19} shows that achieving $\multiplicativeError < 3$ is impossible unless the number of samples is polynomial in the domain size of $P$. In an interesting development, Aliakbarpour et al.~\cite{aliakbarpourBS24} recently proposed an algorithm that simultaneously achieves a logarithmic sample complexity, nearly-linear time complexity, and the optimal approximation factor $\alpha = 3$, representing a compelling performance in all three critical aspects of this problem.

Despite this desirable performance in the non-private setting, the state of the art in the private setting falls short of optimal performance. A naive privatization of Scheff\'e estimates via Laplace noise leads to an $O(n^2)$ sample complexity~\cite{Bun0SW19}. More advanced techniques, including the work of Bun et al.~\cite{Bun0SW19}, offer better sample complexity (logarithmic) but suffer from quadratic time and suboptimal approximation. While Aden-Ali et al.~\cite{AdenAK21} makes progress on accuracy and proof simplicity, their algorithm remains computationally expensive with a quadratic time complexity.

These limitations naturally lead us to address an open question, first raised in part by Bun et al.~\cite{Bun0SW19}: {\em Does there exist an ideal private hypothesis selection algorithm that offers logarithmic sample complexity, nearly-linear time complexity, and the optimal approximation factor?} We present a significant step forward towards this ideal: our algorithm achieves nearly-linear time complexity and the optimal approximation factor with a polylogarithmic sample complexity (which, while not logarithmic, is still considered a modest dependence on the number of hypotheses). A formal description of our results can be found in Section~\ref{sec:results}.

\paragraph{Applications:} Estimating data distributions is a central component of many scientific tasks, such as estimating species abundance in ecology or analyzing survey results in the social sciences.
Hypothesis selection describes a broadly applicable scenario where we can form a finite set of interpretable, noise-free, or otherwise manageable distributions as our candidate hypotheses, and we aim to approximate the potentially noisy and complex unknown distribution with one of the so-called ``nicer'' candidates (e.g., modeling customer arrival time with Poisson processes).

One notable theoretical application of hypothesis selection is agnostic learning of a parametric class of distributions (e.g., mixtures of Gaussians \cite{SureshOAJ14, daskalakisG14, AshtianiBM18, AshtianiBHLMP20}, and junta distributions~\cite{AliakbarpourBR16}) via the {\em cover method}. The approach is to first select a representative set of parametric distributions, and hypothesis selection then identifies the closest approximation in this set. For a survey, see Diakonikolas~\cite{Diakonikolas:2016}.  
\subsection{Problem setup} \label{sec:problem_setup}

Let $P$ denote an unknown distribution over a domain $\XX$, and let $\totalhyposet := \{H_1, H_2, \dots, H_n\}$ be a set of $n$ public and known distributions over $\XX$. We define $\OPT$ to be the total variation distance of $P$ and the closest hypothesis in $\totalhyposet$. 

We seek to design a semi-agnostic proper learner such that for every $\totalhyposet$ and $P$, the algorithm outputs a hypothesis $\hypOut \in \totalhyposet$ such that the total variation distance between $\hypOut$ and $P$ is within $\multiplicativeError$-times $\OPT$ plus an additive error parameter $\totalError$, which can be made arbitrarily small with sufficiently many samples.

We assume the standard access model of $\cite{DevroyeL01}$, where an algorithm accesses distributions by making queries of the following types:

\begin{enumerate}
    \item The algorithm can draw i.i.d.\ samples from the unknown distribution $P$.
    \item The algorithm can compare the PDFs of any two known distributions $H_i, H_j$ at a given point $x \in \XX$. Specifically, it can ask if $H_i(x) < H_j(x)$. This is equivalent to determining if $x$ is in the Scheff\'e set of $H_i$ and $H_j$ (defined in Equation~\ref{eq:scheffe}).
    \item The algorithm can query the probability mass of the Scheff\'e set of any two known distributions. Precisely, it can ask for $H_i(\SS_{i,j})$ for all $H_i,H_j \in \totalhyposet$.
\end{enumerate}

More formally, we have:

\begin{definition}[Proper learner for private hypothesis selection]
    Let $\alpha > 0$, and let $\mathcal{A}$ be an algorithm with input parameters $\epsDP, \totalFailure, \totalError \in (0,1)$, sample access to an unknown distribution $P$, and query access to a finite class of $n$ hypotheses $\totalhyposet = \{H_1,H_2, \dots, H_n\}$ (according to the access model described above). We say $\mathcal{A}$ is an $(\multiplicativeError, \epsDP, \totalFailure, \totalError)$-proper learner for the private hypothesis selection problem if:
    \begin{enumerate}
        \item $\mathcal{A}$ is $\epsDP$-differentially private in the {\em central model} (defined in Definition~\ref{def:DP}) with respect to the samples drawn from $P$.
        \item $\mathcal{A}$ outputs $\hypOut \in \totalhyposet$ such that with probability at least $1 - \totalFailure$,
        \[\tv{\hypOut - P} \leq \multiplicativeError \cdot \OPT + \totalError.\]
    \end{enumerate}
    We call $\multiplicativeError$ the approximation factor, $\epsDP$ the privacy parameter, $\totalFailure$ the confidence parameter, and $\totalError$ the error parameter.
\end{definition}

\begin{remark}
    As mentioned in \cite{daskalakisG14, aliakbarpourBS24}, the third type of query in the standard access model can be relaxed. Our algorithms only need \emph{estimates}, rather than exact values, of the probability masses of the Scheff\'e sets. Assuming all estimates of $H_i(\SS_{i,j})$ are accurate to an additive error of $O(\totalError)$ with high probability, the analysis of our algorithms remain essentially unchanged. These estimates could be obtained by sampling from each $H_i$ or numerically integrating density functions when analytic forms are available.
\end{remark}

\subsection{Our result} \label{sec:results}

We present an algorithm for private hypothesis selection with the following guarantee:

\begin{theorem}[Informal version of Theorem~\ref{thm:meta}]\label{thm:meta_informal}
    For every $\epsDP, \totalFailure, \totalError \in (0,1)$, Algorithm~\ref{alg::wrapper} is an $(\multiplicativeError = 3, \epsDP, \totalFailure, \totalError)$-proper learner for the private hypothesis selection problem that uses $s = \Theta(\log^3 (n/ \totalFailure) \;\! / \;\! (\totalFailure^2 \totalError^2 \epsDP))$ samples and runs in time $\tilde{\Theta}(n \;\! / \;\! (\totalFailure^4 \totalError^3 \epsDP))$.
\end{theorem}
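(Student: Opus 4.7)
My plan is to obtain Algorithm~\ref{alg::wrapper} as a private lift of the non-private, nearly-linear-time, $\alpha = 3$ hypothesis-selection scheme of Aliakbarpour et al.~\cite{aliakbarpourBS24}. That scheme processes hypotheses through $O(\log n)$ rounds of pairwise ``Scheff\'e duels'' organized so that the total number of duels is $\tilde{O}(n)$, rather than the $\binom{n}{2}$ duels of the classical tournament. Each duel only touches the data through a single empirical Scheff\'e mass $\hat{p}_{i,j}$ of sensitivity $1/s$ with respect to neighboring sample sets, so these are the only places where privacy must be introduced.

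First, I would abstract out the robustness of the non-private scheme into a black-box lemma: there is a tolerance $\tau = \Theta(\totalError)$ such that, if every duel's estimate of $P(\SS_{i,j})$ is within $\pm\,\tau$ of the true value, the output $\hypOut$ satisfies $\tv{\hypOut - P} \le 3\,\OPT + \totalError$. Reducing to this lemma decouples the analysis: it remains only to produce $\tau$-accurate private estimates at every duel, with $\tilde{O}(n)$ duels in total, while using $s = \Theta(\log^3(n/\totalFailure)/(\totalFailure^2 \totalError^2 \epsDP))$ samples.

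Second, I would privatize by running the scaffold unmodified but replacing each empirical $\hat{p}_{i,j}$ with $\hat{p}_{i,j}$ plus independent Laplace noise of scale $\sigma = O(1/(s\,\epsDP'))$, where $\epsDP' = \epsDP/R$ for $R = O(\log n)$ the number of rounds. Each round issues its queries independently of the sample (given the surviving hypothesis set from the previous round), so basic composition over the $R$ rounds yields $\epsDP$-differential privacy in the central model. A union bound over $\tilde{O}(n)$ Laplace tails together with a Chernoff bound over the $\tilde{O}(n)$ empirical masses shows that, with the chosen $s$, all duels are $\tau$-accurate simultaneously with probability at least $1 - \totalFailure$; invoking the robustness lemma then delivers the $\multiplicativeError = 3$ guarantee. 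The runtime follows by noting that the non-private scaffold already runs in $\tilde{O}(n)$ time, that Laplace sampling and the oracle queries from Section~\ref{sec:problem_setup} cost $O(1)$ per duel, and that amortizing empirical estimation across duels that share the same sample batch yields the claimed $\tilde{\Theta}(n/(\totalFailure^4 \totalError^3 \epsDP))$ total time.

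The hardest step will be establishing the robustness lemma with the optimal constant $\multiplicativeError = 3$ rather than the $\multiplicativeError = 9$ or $\multiplicativeError = 6$ that drops out of generic min-distance analyses. The non-private scheme's tight constant relies on a delicate invariant relating each surviving hypothesis's running ``certificate'' to the best-so-far estimate, and naive noise injection can shift this invariant by $\Omega(\tau)$ on both sides of a duel, doubling the effective slack. I expect to need to feed the noisy estimate into both the duel decision and the certificate update (so that the errors cancel in the inductive step), and to argue via a charging argument that the hypothesis ultimately returned by the scaffold has a $(3\,\OPT + \totalError)$-valid certificate despite the privacy perturbations accumulated along its survival path.
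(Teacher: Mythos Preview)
Your privacy accounting does not close, and the gap is exactly the obstacle the paper is built around. You set $\epsDP' = \epsDP/R$ with $R = O(\log n)$ rounds and add Laplace noise of scale $O(1/(s\,\epsDP'))$ to \emph{each} empirical $\hat{p}_{i,j}$ the scaffold touches. But the scaffold of~\cite{aliakbarpourBS24} touches $\Theta(n)$ Scheff\'e estimates \emph{per round}: in every round it must (i) test whether each of the $n$ hypotheses is prompting and (ii) update all $n$ proxy distances against the chosen prompting hypothesis. Your sentence ``each round issues its queries independently of the sample (given the surviving hypothesis set from the previous round), so basic composition over the $R$ rounds yields $\epsDP$-differential privacy'' conflates the \emph{choice} of queries with their \emph{answers}: adaptivity lets you choose round-$r$ queries from round-$(r{-}1)$ outputs, but you still pay for every release. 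With $\tilde{\Theta}(n)$ releases, basic composition forces $\epsDP' = \epsDP/\tilde{\Theta}(n)$, the Laplace scale becomes $\tilde{\Theta}(n)/(s\,\epsDP)$, and achieving $\tau = \Theta(\totalError)$ accuracy on all of them requires $s = \tilde{\Omega}(n/(\totalError\,\epsDP))$ --- linear, not polylogarithmic, in $n$.

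There is also a structural problem with the black-box robustness lemma you posit. The nearly-linear scheme of~\cite{aliakbarpourBS24} is not a tournament of ``duels''; it is an MDE-style procedure that maintains proxy distances and assigns hypotheses to \emph{buckets} by thresholding those proxies. Bucket membership is a discrete function of the data: a single sample change can shift every $\lowerestimate{H_j}$ by $1/s$ and flip the bucket of every hypothesis simultaneously, so a uniform $\pm\tau$ perturbation of all estimates does \emph{not} leave the algorithm's execution trace (which queries get issued, which bucket is ``lowest'') approximately invariant. The paper explicitly abandons the bucketing scheme for this reason and replaces it with an exponential-mechanism distribution $Q$ over hypotheses, redefines ``prompting'' relative to $Q$, uses a low-sensitivity \emph{quantile} score (rather than a count of lifted hypotheses) to measure promptingness, and invokes the sparse vector technique so that testing all $n$ candidates for promptingness costs $O(1)$ privacy per round rather than $O(n)$. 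Bounding the number of rounds then requires a new potential argument on the exponential mechanism's normalizer $Z^{(\ell)}$, since the bucket-emptying argument of~\cite{aliakbarpourBS24} no longer applies. None of these ingredients appears in your plan, and the Laplace-per-query route cannot reach the stated sample complexity without them.
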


Our result is the first algorithm for private hypothesis selection that runs in nearly-linear time, resolving an open question first posed by~\cite{Bun0SW19} about the existence of such an algorithm. In addition, we also maintain the optimal approximation factor $\multiplicativeError = 3$. However, our algorithm introduces an additional factor of $O(\log^2 n \;\! / \;\! \totalError)$ in the sample complexity compared to existing work on private hypothesis selection. We summarize this tradeoff and compare with existing private algorithms in Table~\ref{tbl:privateResults}.

The overhead in the sample complexity stems from our privatization strategy that is informed by the structure of our algorithm. Similar to most time efficient algorithms in previous works, our algorithm consists of multiple interdependent components, where each component directs us to focus only on a small set of Scheff\'e estimates, as opposed to computing all of them. While these interdependencies allow us to achieve a highly time-efficient algorithm, they make direct privatization of the final output analytically very challenging. 
Rather than attempting to analyze the privacy loss of the full computation, we enforce differential privacy at each component of the algorithm, resulting in the sub-optimality of our sample complexity. Despite this added overhead, our algorithm still maintains sample complexity polylogarithmic in number of hypotheses with no dependence on the domain size. 

Our algorithm also has a polynomial dependence on $1/\totalFailure$, contrasting with the typical $\log(1/\totalFailure)$ dependence on the confidence parameter that arises in many learning theory problems when amplifying a result (e.g., by taking an ``average'' of the results of running $\log(1/\totalFailure)$ many runs of an algorithm). However, in hypothesis selection, this amplification process is unlikely to succeed without a significant increase in $\multiplicativeError$  because it would require running hypothesis selection \emph{twice}, which would push the total approximation factor to at least $\multiplicativeError = 9$. 
Achieving a polylogarithmic dependence on $\totalFailure$ is a difficult task, even in the absence of privacy constraints;  both~\cite{aliakbarpourBS23, aliakbarpourBS24} present nearly-linear time algorithms with $\alpha < 9$ but suffer from similar polynomial dependencies on $1/\totalFailure$.

\paragraph{Open directions: } This leaves two open directions for further work: {\em \textbf{i)}}
Does there exist a nearly-linear time algorithm that uses $O\left( \frac{\log \numHypotheses}{\totalError^2} + \frac{\log \numHypotheses}{\totalError \epsDP}\right)$ samples in terms of its dependence on $\totalError, \epsDP$? {\em \textbf{ii)}}~Can the dependence on the confidence parameter $\totalFailure$ be improved to $O(\log(1/\totalFailure))$ while maintaining nearly-linear runtime?

\paragraph{Significance of the approximation factor:}
One may argue that it is possible to improve the accuracy by decreasing $\OPT$ by selecting more candidate hypotheses in $\totalhyposet$, as opposed to decreasing the approximation factor $\multiplicativeError$. However, as mentioned in $\cite{aliakbarpourBS24}$, this is not feasible in many spaces, especially multivariate distributions, without substantially increasing the size of $\totalhyposet$. For instance, in the cover method, we require $\totalhyposet$ to cover the space of a parametric class of distributions with a $\gamma$-cover, where each distribution in the class is at most $\gamma$ away from an element in $\totalhyposet$, enforcing $\OPT < \gamma$. As an example, the mixture of $k$-Gaussians in \cite{SureshOAJ14} requires $O(\gamma^{-3k})$ distributions to create a $\gamma$-cover, making algorithms with high approximation factor more time-consuming.

\ifthenelse{\boolean{neurips}}
{
\begin{table}[t]
  \caption{Summary of past hypothesis selection results under central DP}
  \label{tbl:privateResults}
  \centering
  % clunky solution, but adjust the thing inside p{} as needed.
  \begin{tabular}{p{3.9cm} c c c}
    \toprule
     \textbf{Result} & $\multiplicativeError$ &  \textbf{Sample complexity $(s)$} &\textbf{Time complexity}\\
    \midrule
    Private Scheff\'e tournament \cite{Bun0SW19} (Thm 3.6) & $9$ & $O\left(\frac{\log \numHypotheses}{\totalError^2} + \frac{\numHypotheses^2 \log \numHypotheses}{\totalError \epsDP}\right)$ & $O(\numHypotheses^2 \cdot \numsamples)$ \\

    \cite{Bun0SW19} (Thm 3.5) & $>54$ & $\tilde{O}\left(\frac{\log \numHypotheses}{\totalError^2} + \frac{\log \numHypotheses}{\totalError \epsDP} \right)$ & $\tilde{O}(\numHypotheses^2 \cdot \numsamples)$ \\

    Minimum distance estimate \cite{AdenAK21} (Thm 2.24) & 3 &  $O\left(\frac{\log \numHypotheses}{\totalError^2}+ \frac{\log \numHypotheses}{\totalError \epsDP}\right)$ & $O(\numHypotheses^2 \cdot \numsamples)$  \\

    This work & 3  & $O\left( \frac{\log^3 \numHypotheses}{ \totalError^2 \epsDP} \right)$ & $\tilde{O}\left( n \cdot \numsamples \, / \, \totalError \right)$ \\
    
    \bottomrule
  \end{tabular}
\end{table}}
{
\begin{table}[t]
  \caption{Summary of past hypothesis selection results under central DP}
  \label{tbl:privateResults}
  \centering
  % clunky solution, but adjust the thing inside p{} as needed.
  \begin{tabular}{p{4.3cm} c c c}
    \toprule
     \textbf{Result} & $\multiplicativeError$ & \textbf{Sample complexity $(s)$} & \textbf{Time complexity}  \\
    \midrule
    Private Scheff\'e tournament \cite{Bun0SW19} (Thm 3.6) & $9$ & $O\left(\frac{\log \numHypotheses}{\totalError^2} + \frac{\numHypotheses^2 \log \numHypotheses}{\totalError \epsDP}\right)$ & $O(\numHypotheses^2 \cdot \numsamples)$    \\

    \cite{Bun0SW19} (Thm 3.5) & $>54$ & $\tilde{O}\left(\frac{\log \numHypotheses}{\totalError^2} + \frac{\log \numHypotheses}{\totalError \epsDP} \right)$ & $\tilde{O}(\numHypotheses^2 \cdot \numsamples)$   \\

    Minimum distance estimate \cite{AdenAK21} (Thm 2.24) & 3 &  $O\left(\frac{\log \numHypotheses}{\totalError^2}+ \frac{\log \numHypotheses}{\totalError \epsDP}\right)$& $O(\numHypotheses^2 \cdot \numsamples)$  \\

    This work & 3  & $O\left( \frac{\log^3 \numHypotheses}{ \totalError^2 \epsDP} \right)$ & $\tilde{O}\left( n \cdot \numsamples \, / \, \totalError \right)$ \\
    
    \bottomrule
  \end{tabular}
\end{table}}

\subsection{Related work}

Most of the previous work on hypothesis selection falls under two approaches: {\em \textbf{i)}}~\emph{tournament-based} algorithms that compare pairs of hypotheses based on their Scheff\'e estimates, and then perform a series of comparisons to find a final winner {\em \textbf{ii)}} the~\emph{minimum distance estimate (MDE)} algorithms that compute an {\em approximate distance} based on the Scheff\'e estimates for each hypothesis, and then select the hypothesis with the minimum approximate distance. 

The Scheff\'e tournament algorithm, first proposed by Devroye and Lugosi~\cite{DevroyeL01}, runs in $O(\numHypotheses^2 \cdot \numsamples)$ time and achieves an approximation factor of $\multiplicativeError = 9$. Other works in this line include $\cite{daskalakisG14, AcharyaJOS14, AcharyaFJOS18, AamandACINS23, aliakbarpourBS23}$. Algorithms that follow the tournament structure typically exhibit a high approximation factor. Moreover, privatizing such algorithms raises extra challenges because changing a single data entry might affect the result of every comparison, thus greatly increasing sensitivity. 

The other approach \cite{DevroyeL01, MahalanabisS08, aliakbarpourBS24} for non-private hypothesis selection is based on the \emph{minimum distance estimate} (MDE) method introduced in~\cite{DevroyeL01}, which has been the only type of algorithm that achieves the optimal approximation factor $\multiplicativeError = 3$. Mahalanabis and Stefankovic~\cite{MahalanabisS08} later improved the initial $O(\numHypotheses^3 \cdot \numsamples)$ runtime of~\cite{DevroyeL01} to $O(\numHypotheses^2 \cdot \numsamples)$, as well as proposed a nearly-linear time algorithm that requires exponential pre-processing. Aliakbarpour et al.~\cite{aliakbarpourBS24} recently demonstrated that the optimal approximation factor $\alpha = 3$ could be achieved with a nearly-linear time algorithm.

\ifthenelse{\boolean{neurips}}
{
For an extended version of the related work, see Appendix~\ref{sec:moreRelatedWorks}.
}
{

}

% My review part

\ifthenelse{\boolean{neurips}}
{\section{Other Related Works} \label{sec:moreRelatedWorks}
}
{}

Hypothesis selection has also been studied in the \emph{local} model of DP \cite{kasiviswanathan2011can}, where the data curator is not trusted and thus only has access to the privatized version of users' data. 
Gopi et al.~\cite{Gopi0KNWZ20} showed that the sample complexity in the local model is linear in $n$, exponentially larger than the central DP setting. Specifically, they proved a lower bound of $\Omega\left(\frac{\numHypotheses}{ \totalError^2 \epsDP^2 }\right)$. They also proved two upper bounds of  $O \left(\frac{\numHypotheses \log^3 \numHypotheses}{\totalError^4 \epsDP^2}\right)$ for non-interactive algorithms and $O \left(\frac{\numHypotheses \log \numHypotheses \log \log \numHypotheses}{\totalError^2 \epsDP^2}\right)$ (with $\multiplicativeError = 27$) for sequentially interactive algorithms with $O(\log \log \numHypotheses)$ rounds. A recent algorithm by Pour et al.~\cite{pourAA23} closed the gap between upper and lower bounds by designing a sequentially interactive algorithm using $O \left(\frac{\numHypotheses \log^2 1 / \totalFailure}{\totalError^2 \min(\epsDP^2, 1)}\right)$ samples, with $\multiplicativeError = 9$ and $O(\log \log \numHypotheses)$ rounds. 

Another related problem in statistics is $\emph{simple}$ hypothesis testing. Given two distributions $P$ and $Q$, and a dataset $\dataset$ sampled from one of them, the goal is to determine whether $\dataset$ was drawn from $P$ or $Q$. This setting resembles hypothesis selection where we have only two candidate distributions. Cannone et al.~\cite{CanonneKMSU19} developed a private algorithm for simple hypothesis testing that achieves optimal sample complexity. Further developments on sample complexity on private simple hypothesis selection include~\cite{PensiaJL24, PensiaAJL25} under the local model of DP. Other examples within the broader topic of private hypothesis testing include~\cite{CaiDK17, Gaboardi018, ADR18, ASZ18, ADKR19, ACCH19, AminJM20, CKMUZ20, BerrettB20}.

Asi et al.~\cite{asi2024universally} give an instance-optimal algorithm for hypothesis selection when the hypothesis class $\totalhyposet$ contains the true distribution $P$ (i.e. $\OPT = 0$). Their algorithm has logarithmic sample complexity and time complexity $O(\numHypotheses^2 \numsamples)$.

\ifthenelse{\boolean{neurips}}
{
\subsection{Industrial applications of DP}
}
{
\subsubsection{Industrial applications of differential privacy}
}
\label{sec:DP_industry}

We highlight several lines of research in differentially private data analysis motivated by real-world challenges. One example is the partition selection problem, where users aim to compute aggregate statistics over data grouped according to user-specified criteria. To ensure privacy, designers must bound the sensitivity of the statistics, decide which data partition to release, and maintain computational efficiency. Relevant works include Desfontaines et al.~\cite{DesfontainesVG20} and Google's $\emph{Plume}$ system~\cite{KareemJMAS22}.  

Another direction concerns private analytics of user actions, where the goal is to prevent attackers from learning a user's past behavior by repeated observation of public analytics. LinkedIn's $\emph{PriPeARL}$~\cite{KenthapadiT18} provides such protection, even towards persistent attackers who observe the analytics overtime. 

Privacy leakages also emerge from releasing models trained on sensitive data. To mitigate this problem, L\'ecuyer et al.~\cite{LecuyerSVGH19} developed $\emph{Sage}$, a machine learning platform that distributes training across data blocks, monitors privacy loss per blocks, and retires blocks once their privacy budget is depleted. 

Lastly, the problem of answering queries across multiple private databases has also been studied. A representative system is $\emph{DJoin}$ by Narayan et al.~\cite{NarayanH12}.

\section{Preliminaries}
\label{sec:preliminaries}

We begin by introducing notation and a list of key definitions in Section~\ref{sec:basicConcepts}. These concepts are expanded upon in later sections. Section~\ref{sec:minimumdistanceestimate} presents the framework of minimum distance estimate algorithms, and Section~\ref{sec:approxMDE} describes the nearly-linear time optimization proposed by~\cite{aliakbarpourBS24}.

\subsection{Notation and basic concepts} \label{sec:basicConcepts}

\paragraph{Notation and basic definitions:} For $n \in \mathbb{Z}_+$, we use $[n]$ to denote the set $\{1,\dots,n\}$. For an arbitrary probability distribution $P$ over $\XX$, let $P(x)$ be the PDF of $P$ at $x \in \XX$. For a measurable subset $S \subseteq \XX$, let $P(S)$ be the probability mass of the set $S$ according to $P$. We use $X \sim P$ to denote a random variable $X$ that is drawn from the distribution $P$. Let $\tv{P_1 - P_2} := \sup_{S \subseteq \XX}|P_1(S) - P_2(S)|$ be the total variation distance between two distributions $P_1$ and $P_2$. For a sample space $\Omega$ and an event $E \subseteq \Omega$, the indicator function $\Indicator{E}$ evaluates to $1$ when $E$ occurs and $0$ otherwise. We use the standard $O, \Omega, \Theta$ notation for asymptotic functions, as well as $\tilde{O}(x), \tilde{\Omega}(x), \tilde{\Theta}(x)$ to indicate additional $\textsf{polylog}(x)$ factors. A dataset $\dataset = [x_1,\dots, x_\numsamples] \in \XX^{\otimes \numsamples}$ is a collection of $\numsamples$ i.i.d.\ samples from an unknown distribution $P$.

\paragraph{Optimal hypothesis:}
    We use $\optHyp$ to indicate a hypothesis in a finite hypothesis class $\totalhyposet$ that achieves the smallest total variation distance to $P$, which is called $\OPT$. If there are ties, then we pick one such hypothesis as $\optHyp$ arbitrarily. Therefore, $\OPT := \min_{H \in \totalhyposet} \tv{H - P} = \tv{\optHyp - P}$.

\paragraph{Scheff\'e sets:}
    For every pair of hypotheses $H_i, H_j \in \totalhyposet$, we define the \emph{Scheff\'e set} of $H_i$ and $H_j$ as:
    \begin{equation} \label{eq:scheffe}
    \SS_{i,j} :=
    \begin{cases}
    \{x \in \mathcal{X} \mid H_i(x) < H_j(x)\} & \text{if } i \leq j, \\
    \SS_{j,i} & \text{if } i > j.
    \end{cases}
    \end{equation}
    It is not difficult to show that the difference of probability masses of two distributions on the Scheff\'e set of $H_i$ and $H_j$ is precisely the total variation distance between $H_i$ and $H_j$:
    \[\tv{H_i - H_j} = \sup_{S \subseteq \XX} |H_i(S) - H_j(S)| = |H_i(\SS_{i,j}) - H_j(\SS_{i,j})|.\]

\paragraph{Semi-distances:}
    We adopt the definitions of semi-distances from \cite{aliakbarpourBS24}, building on earlier work in \cite{DevroyeL01, MahalanabisS08}. For every pair of hypotheses $H_i,H_j \in \totalhyposet$, the \emph{semi-distance} $\SemiDis{i}{j}$ is the distance between $H_j$ and $P$ measured on the Scheff\'e set of $H_i$ and $H_j$; that is, $\SemiDis{i}{j} := |H_j(\SS_{i, j}) - P(\SS_{i,j})|$.  The \emph{maximum semi-distance of $H_j$} is defined as $\maxsemidistance{H_j} := \max_{H_i \in \totalhyposet} \SemiDis{i}{j}$. 

\paragraph{Empirical semi-distances:}
    Given a measurable set $S \subseteq \XX$, we define the empirical distribution $\hat{P}$ of a dataset $D = [x_1, \dots, x_s]$ as $\hat{P}(S) := \frac{1}{s} \sum_{k=1}^s \Indicator{x_k \in S}$. The \emph{empirical semi-distance} $\EmpSemiDis{i}{j}$ is similarly defined as $\EmpSemiDis{i}{j} := |H_j(\SS_{i, j}) - \hat{P}(\SS_{i,j})|$, where $\hat{P}$ is based on the observed samples drawn from $P$. Observe that $\EmpSemiDis{i}{j}$ is an estimation of $\SemiDis{i}{j}$. For a given hypothesis $H_j \in \totalhyposet$ and a set of hypotheses $A \subseteq \totalhyposet$, the \emph{proxy distance} $\lowerestimate{H_j}$ is defined as $\lowerestimate{H_j} := \max_{H_i \in A} \EmpSemiDis{i}{j}$. Here, $A$ is a set of hypotheses that is updated throughout the algorithm to improve $\lowerestimate{H_j}$ as an approximation for $\maxsemidistance{H_j}$.

\paragraph{Refined access model:}
    As in prior work \cite{DevroyeL01, MahalanabisS08, aliakbarpourBS24}, our algorithms will have query access to $\EmpSemiDis{i}{j}$, which follows from the standard access model. In the lemma below, we show that $\EmpSemiDis{i}{j}$ is within some $\sigma'$ of $\SemiDis{i}{j}$ with sufficiently large samples, where $\sigma'$ can be taken to be $\Theta(\totalError)$. The time complexity of our algorithms is measured in number of queries to $\EmpSemiDis{i}{j}$, and each query takes $\Theta(\numsamples)$ to compute.
    
    \begin{lemma} \label{lemma:goodSemidisApprox}
        Let $\beta, \sigma'\in (0,1)$. If the number of samples $\numsamples \geq \frac{1}{2\sigma'^2}\log (2n/\beta)$, then with probability at least $1- \beta$, the empirical semi-distances are accurate to an additive error of $\totalError'$:
        \[|\EmpSemiDis{i}{j} - \SemiDis{i}{j}| \leq \sigma', \quaaad \text{for all } i, j \in [n].\]
    \end{lemma}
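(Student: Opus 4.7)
The plan is a routine concentration-and-union-bound argument. The key reduction is via the reverse triangle inequality: unpacking the definitions of the two quantities,
\[ |\EmpSemiDis{i}{j} - \SemiDis{i}{j}| \;=\; \bigl|\, |H_j(\SS_{i,j}) - \hat{P}(\SS_{i,j})| - |H_j(\SS_{i,j}) - P(\SS_{i,j})| \,\bigr| \;\leq\; |\hat{P}(\SS_{i,j}) - P(\SS_{i,j})|, \]
so it suffices to uniformly control the empirical mass $\hat{P}(\SS_{i,j})$ around the true mass $P(\SS_{i,j})$ across all index pairs $i, j \in [n]$.

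Fix $i, j$. Since the samples $x_1, \dots, x_s$ are i.i.d.\ draws from $P$, each indicator $\Indicator{x_k \in \SS_{i,j}}$ is an independent $\{0,1\}$-valued random variable with mean $P(\SS_{i,j})$, and $\hat{P}(\SS_{i,j})$ is their empirical average. Hoeffding's inequality applied to these bounded i.i.d.\ variables yields the pointwise tail bound
\[ \Pr\!\bigl[\, |\hat{P}(\SS_{i,j}) - P(\SS_{i,j})| > \sigma' \,\bigr] \;\leq\; 2 \exp(-2 s \sigma'^2). \]

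Finally, I apply a union bound over the at most $\binom{n}{2}$ distinct Scheffé sets (recall $\SS_{i,j} = \SS_{j,i}$, so the sets are indexed by unordered pairs). The total failure probability is at most $2 \binom{n}{2} \exp(-2s\sigma'^2) \leq n^2 \exp(-2 s \sigma'^2)$, which is at most $\beta$ once $s \geq \frac{1}{2\sigma'^2} \log(n^2/\beta)$; this matches the stated sample threshold of $\frac{1}{2\sigma'^2}\log(2n/\beta)$ up to an absorbable constant inside the logarithm. There is no real obstacle here — it is a textbook Hoeffding-plus-union-bound. The only bookkeeping worth flagging is the reverse triangle step that replaces nested absolute values by a single deviation of $\hat P$ from $P$, and the counting of distinct Scheffé sets, which governs the precise constant in the log.
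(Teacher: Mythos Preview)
Your proposal is correct and follows exactly the paper's route: reverse triangle inequality to reduce to $|\hat P(\SS_{i,j}) - P(\SS_{i,j})|$, then Hoeffding plus a union bound over the Scheff\'e sets. Your bookkeeping is in fact more careful than the paper's brief sketch; the discrepancy you flag between $\log(n^2/\beta)$ and the paper's stated $\log(2n/\beta)$ is real (a naive union bound over $\binom{n}{2}$ pairs does give the former), but as you note it only shifts the sample requirement by a constant factor and is immaterial to the $\Theta(\cdot)$ bounds used downstream.
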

    
    \begin{proof}
        The estimates $\EmpSemiDis{i}{j} = |H_j(\SS_{i,j}) - \hat{P}(\SS_{i,j})|$ can be computed via sampling from $P$ and counting the fraction of samples in the Scheff\'e set of $H_i$ and $H_j$, which is $\hat{P}(\SS_{i,j})$. By a reverse triangle inequality:
        \[|\EmpSemiDis{i}{j} - \SemiDis{i}{j}| = \left||H_j(\SS_{i,j}) - \hat{P}(\SS_{i,j})| - |H_j(\SS_{i,j}) - P(\SS_{i,j})|\right| \leq |\hat{P}(\SS_{i,j}) - P(\SS_{i,j})|.\]
        Therefore, by a standard application of the Hoeffding and union bounds, we can estimate each $\SemiDis{i}{j}$ using $\frac{1}{2\sigma'^2}\log (2n/\beta)$ samples.
    \end{proof}
    
\paragraph{Lifting:} 
     Let $H_i, H_j \in \totalhyposet$. We define the \emph{lift value $H_i$ induces on $H_j$} by $\EmpSemiDis{i}{j} - \lowerestimate{H_j}$. In other words, the lift value quantifies the improvement of the proxy distance $\lowerestimate{H_j}$ if $H_i$ is added to the set of hypotheses $A$ used to compute $\lowerestimate{H_j} = \max_{H_k \in A} \SemiDis{k}{i}$. For some $\sigma' \in (0,1)$, we say that $H_i$ $\sigma'$-\emph{lifts} $H_j$ if the lift value is at least $\sigma'$, or equivalently that $H_i$ lifts $H_j$ by at least $\promptingScore$.

\noindent \paragraph{Prompting:} 
    Let $\SampleHypo$ be a distribution over $\HH$. For two parameters $\promptingScore, \promptingAcc \in (0,1)$, we say a hypothesis $H_i \in \totalhyposet$ is \emph{$\Pprompt{\promptingScore}{\promptingAcc}$ with respect to $\SampleHypo$} if $H_i$ $\promptingScore$-lifts a random hypothesis $H_j$ sampled from $\SampleHypo$ with probability at least $\promptingAcc$. In other words, we have:
    \begin{equation} \label{eq:prompting}
        \Pr[H_j \sim \SampleHypo]{\hat{w}_i(H_j) - \lowerestimate{H_j} \geq \promptingScore} \geq \promptingAcc.
    \end{equation}
    We now consider an empirical analog for a list of hypotheses $\SampleHypoList$ that is sampled from $Q$. Let $\SampleHypoList = [H_{j_1}, \dots, H_{j_t}]$ be a list of $t$ hypotheses in $\totalhyposet$. For two parameters $\promptingScore, \promptingAcc \in (0,1]$, we say a hypothesis $H_i \in \totalhyposet$ is \emph{$\EMPPprompt{\promptingAcc}{\promptingScore}$ with respect to $\SampleHypoList$} if $H_i$ $\promptingScore$-lifts at least an $\promptingAcc$-fraction of the hypotheses in $\SampleHypoList$. In other words, we have:
        \[\frac{1}{t} \sum_{k=1}^t \Indicator{\{\hat{w}_i (H_{j_k}) - \lowerestimate{H_{j_k}} \geq \promptingScore\}} \geq \promptingAcc.\]

\subsection{Background: minimum distance estimate} \label{sec:minimumdistanceestimate}

In this section, we sketch the key ideas behind previous approaches that use a \emph{minimum distance estimate} \cite{DevroyeL01, MahalanabisS08}. For a more detailed treatment, see Section 3.1 of \cite{aliakbarpourBS24}.

Our algorithms rely on computations of $\EmpSemiDis{i}{j}$ that approximate the true semi-distances $\SemiDis{i}{j}$. For clarity, we will assume in this section that the approximations $\EmpSemiDis{i}{j}$ are exact by Lemma~\ref{lemma:goodSemidisApprox}. Observe that $\SemiDis{i}{j}$ provides a lower bound for $\tv{H_j - P}$: specifically, $\SemiDis{i}{j} = |H_j(\SS_{i,j}) - P(\SS_{i,j})| \leq \sup_{S \subseteq \XX} |H_j(S) - P(S)| = \tv{H_j - P}$. Thus, we can view $\SemiDis{i}{j}$ as an attempt to lower bound $\tv{H_j - P}$. In particular, if we discover that $\SemiDis{i}{j}$ is large, then this suggests that $H_j$ is far from $P$. However, the inverse is not true: when $\SemiDis{i}{j}$ is small, this does not imply that $H_j$ is close to $P$. 

Fortunately, when the particular semi-distance $\SemiDis{i^*}{j}$ is small, we can upper bound $\tv{H_j - P}$. The semi-distance $\SemiDis{i^*}{j}$ has the following property: if $\SemiDis{i^*}{j} \leq \OPT$, then $H_j$ satisfies $\tv{H_j - P} \leq 3 \cdot \OPT$, making $H_j$ a valid hypothesis to output. This follows from repeated applications of the triangle inequality:
\begin{align} \label{eqn:tvBound}
    \tv{H_j - P} &\leq \tv{H_j - H_{i^*}} + \tv{H_{i^*} - P} = |H_{i^*}(\SS_{i^*,j}) - H_j(\SS_{i^*,j})| + \OPT \nonumber \\
    &= |(H_{i^*}(\SS_{i^*,j}) - \hat{P}(\SS_{i^*, j})) - (H_j(\SS_{i^*,j}) - \hat{P}(\SS_{i^*, j})| + \OPT \nonumber \\
    &\leq \SemiDis{i^*}{j} + \SemiDis{j}{i^*} + \OPT \leq \SemiDis{i^*}{j} + 2 \cdot \OPT.
\end{align}

An issue with using this metric ($\SemiDis{i^*}{j} \leq \OPT$) is that we know neither the optimal hypothesis $\optHyp$ nor $\OPT$. This is remedied by minimizing a \emph{maximum semi-distance} $\maxsemidistance{H_j} := \max_{H_i \in \totalhyposet} \SemiDis{i}{j}$. Let $\hat{H}$ be the hypothesis that minimizes $\maxsemidistance{H_j}$ over all hypotheses in $\totalhyposet$. Then, we claim that $\hat{H}$ satisfies the desired property $w_{i^*}(\hat{H}) \leq \OPT$, implying that $\|\hat{H} - P\|_{TV} \leq 3 \cdot \OPT$ as in the above discussion. This follows from:
\[w_{i^*}(\hat{H}) \leq \maxsemidistance{\hat{H}} \leq \maxsemidistance{H_{i^*}} \leq \OPT.\]
The first inequality follows from $W(\hat{H})$ being a maximum over all semi-distances of $\hat{H}$. The second inequality follows from minimality of $W(\hat{H})$ over all other hypotheses. The last inequality follows from the fact that every semi‑distance measured against the optimal hypothesis is itself bounded by $\OPT$. Therefore, we reduce the problem of hypothesis selection to finding a $\hat{H}$ that minimizes $\maxsemidistance{\hat{H}}$.

\subsection{Background: approximating the maximum semi-distance} \label{sec:approxMDE}

The MDE framework is sample-optimal and achieves an optimal error parameter of $\multiplicativeError = 3$. However, computing all $\maxsemidistance{H_j}$'s exactly requires $\tilde{O}(n^2)$ time, which is expensive for large hypotheses classes. Instead, \cite{aliakbarpourBS24} computes a \emph{proxy distance} $\lowerestimate{H_j} = \max_{H_k \in A} \EmpSemiDis{k}{j}$ for each $H_j$, which serves as an updateable approximation that lower bounds $\maxsemidistance{H_j}$. All $\lowerestimate{H_j}$'s are initially set to 0. The proxy distances are updated throughout the algorithm via an iterative process. At every iteration, a selectively chosen hypothesis, called a \emph{prompting hypothesis}, is added to $A$. Carefully selecting $A$ ensures that for all $H_j$, $\lowerestimate{H_j}$ is a good approximation of $\maxsemidistance{H_j}$ without exhaustively computing all pairwise semi-distances. At the end of this process, a hypothesis with a low proxy distance is selected as the output. Because only $O(|A| \cdot \numHypotheses)$ semi-distances are queried, this strategy enables a nearly-linear time algorithm in $\tilde{O}(\numHypotheses)$ that still maintains the $\multiplicativeError = 3$ guarantee.

To identify prompting hypotheses, \cite{aliakbarpourBS24} keeps track of ``buckets'' of candidate hypotheses. Each hypothesis $H_j$ is assigned a bucket according to its proxy distance $\lowerestimate{H_j}$. Because the algorithm seeks a hypothesis with approximately the smallest maximum semi-distance, it only focuses on the ``lowest'' bucket with the smallest proxy distances. Because a large proxy distance implies that the hypothesis is far from $P$, the algorithm can disregard hypotheses with a large proxy distance. Hence, only the hypothesis in the lowest bucket are required to have accurate proxy distances. 

The set of prompting hypotheses $A$ will ensure that the lowest bucket has hypotheses with proxy distances close to the maximum semi-distances.

Recall that if a hypothesis $\hat{H}$ satisfies $w_{i^*}(\hat{H}) \leq \OPT$, then it is a valid output. Conversely, we would like to avoid outputting a hypothesis for which we have $w_{i^*}(\hat{H}) > \OPT$. The algorithm of~\cite{aliakbarpourBS24} filters out such hypotheses in the lowest bucket by updating their proxy distance and effectively sending them to ``higher'' buckets. Specifically, a key observation is that $H_{i^*}$ will always significantly lift the proxy distances of hypotheses that are poor choices. Therefore, in every iteration of the algorithm, the goal is to find a \emph{prompting} hypothesis $H_i$ that substantially improves a large portion of the proxy distances and empties out the candidate hypotheses in the lowest bucket. 

It can be shown after roughly $\Theta(\log n)$ iterations either the lowest bucket is fully emptied out, and the algorithm can move forward to the next bucket. Or, there are no more prompting hypotheses that can be identified. This condition implies that $H_{i^*}$ is not prompting. That is, $H_{i^*}$ could not lift most hypotheses in the lowest bucket. For those hypotheses, we must have that their $w_{i^*}(H_j) \leq \OPT$. Hence, a random hypothesis in the lowest bucket under this condition is a valid output.

\ifthenelse{\boolean{neurips}}
{
\subsection{Background: differential privacy}}
{
\subsection{Background: differential privacy}}
\label{sec:backgroundDP}

\paragraph{Differential privacy:}  We adopt the \emph{central} model of DP~\cite{dwork2006calibrating}, where a sensitive dataset is given to a trusted data curator who performs the algorithm and publicly publishes the outcome. Differential privacy protects each entry in the dataset from an adversary who observes the outcome. 

A dataset $\dataset = [x_1,\dots,x_\numsamples] \in \XX^{\otimes \numsamples}$ is a collection of $\numsamples$ i.i.d.\ samples from an unknown distribution $P$. The \emph{Hamming distance} between two datasets $D$ and $D'$ is defined as the number of differing entries and denoted as $\ham(D, D')$. We consider an algorithm to be private with respect to the samples drawn from $P$ if it satisfies the following definition:

\begin{definition}[Pure differential privacy] \label{def:DP}
    Let $\epsDP > 0$. An algorithm $\AA$ is \emph{$\epsDP$-differentially private} if for all measurable subsets $\SS \subseteq \operatorname{Range}(\AA)$ and $D, D' \in \XX^{\otimes s} $ such that $\ham(D, D') = 1$:
    \[\Pr{\AA(D) \in \SS} \leq e^\epsDP \:\! \Pr{\AA(D') \in \SS}.\]
\end{definition}

Standard methods for calibrating noise in DP rely on the concept of sensitivity:

\begin{definition}[Sensitivity]
    Let $f: \XX^{\otimes \numsamples} \to \R$ be a function. Then, $\Delta(f)$ denotes the  \emph{sensitivity} of $f$ and is defined by:
    \[\Delta(f) := \sup_{\substack{D, D' \in \XX^{\otimes \numsamples} \\ \ham (D, D') = 1}}|f(D) - f(D')|.\] 
\end{definition}

\paragraph{Exponential mechanism:} A widely used algorithm in DP is the exponential mechanism. This mechanism relies on a real-valued utility function $u$ that maps a dataset $D \in \XX^{\otimes s}$ and a candidate output $H_j \in \totalhyposet$ to a real-valued score, quantifying the ``quality'' of $H_j$ with respect to $D$. Outputs with higher utility are more likely to be selected.

\begin{definition}[Exponential mechanism \cite{mcsherryT07, dwork2014algorithmic}]
\label{def:exponentialMech}
    Given a utility function $u: \XX^{\otimes s} \times \totalhyposet \to \R$ with sensitivity $\Delta(u)$, the exponential mechanism selects an element $H_j \in \totalhyposet$ with probability proportional to $\exp\left( \frac{\epsDP \:\! u(D, H_j)}{2\Delta(u)}\right)$.
\end{definition}

\begin{fact}[\cite{mcsherryT07}]
    The exponential mechanism is $\epsDP$-differentially private.
\end{fact}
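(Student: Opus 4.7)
The plan is to prove the fact directly from the definition of pure differential privacy by bounding the ratio of output probabilities on neighboring datasets. The exponential mechanism defines a distribution over $\totalhyposet$ via the normalized density $\Pr{\AA(D) = H_j} = \exp(\epsDP\, u(D,H_j)/(2\Delta(u)))\,/\,Z(D)$, where $Z(D) := \sum_{H_k \in \totalhyposet} \exp(\epsDP\, u(D,H_k)/(2\Delta(u)))$ is the partition function. Since $\totalhyposet$ is finite, it suffices to verify the DP inequality for singleton events $\SS = \{H_j\}$; the general measurable case then follows by summation.

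First, I would fix neighboring datasets $D, D'$ with $\ham(D, D') = 1$ and an arbitrary $H_j \in \totalhyposet$, and write out the ratio
\[
\frac{\Pr{\AA(D) = H_j}}{\Pr{\AA(D') = H_j}}
= \exp\!\left(\frac{\epsDP (u(D,H_j) - u(D', H_j))}{2\Delta(u)}\right) \cdot \frac{Z(D')}{Z(D)}.
\]
For the first factor, the definition of sensitivity gives $|u(D,H_j) - u(D',H_j)| \le \Delta(u)$, so this factor is at most $\exp(\epsDP/2)$. For the second factor, I would bound termwise: for every $H_k$, $\exp(\epsDP\, u(D',H_k)/(2\Delta(u))) \le \exp(\epsDP/2)\cdot \exp(\epsDP\, u(D,H_k)/(2\Delta(u)))$, again by the sensitivity bound. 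Summing over $H_k$ yields $Z(D') \le \exp(\epsDP/2)\cdot Z(D)$, so the second factor is at most $\exp(\epsDP/2)$.

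Multiplying the two bounds gives $\Pr{\AA(D) = H_j} \le e^{\epsDP}\, \Pr{\AA(D') = H_j}$ for every $H_j$, and summing this pointwise inequality over any $\SS \subseteq \totalhyposet$ yields the required $\Pr{\AA(D) \in \SS} \le e^{\epsDP}\,\Pr{\AA(D') \in \SS}$. There is no real obstacle here; the only subtlety is the standard ``double use'' of sensitivity, once in the numerator term and once inside the partition function, which together force the factor of $2$ in the denominator of the exponent in Definition~\ref{def:exponentialMech}. In writing it up I would emphasize this point so the reader sees exactly why the $2\Delta(u)$ normalization (as opposed to $\Delta(u)$) is what yields pure $\epsDP$-DP rather than $2\epsDP$-DP.
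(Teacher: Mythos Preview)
Your argument is correct and is exactly the standard proof of this fact. Note that the paper does not actually prove this statement; it simply records it as a cited fact from \cite{mcsherryT07}, so there is no paper proof to compare against.
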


We also have the following utility guarantee of the exponential mechanism:

\begin{fact}[Corollary 3.12 of~\cite{dwork2014algorithmic}] \label{lem:expMechAcc}
Let $u: \XX^{\otimes \numsamples} \times \totalhyposet \to \R$ be a utility function with sensitivity $\Delta(u)$. Fix a dataset $D \in \XX^{\otimes \numsamples}$. Let $H \in \totalhyposet$ denote the output of the exponential mechanism with parameter $\epsDP$ and utility function $u$. Then, for any $\beta \in (0,1)$:
    \[\Pr{u(D, H) \leq \max_{H_j \in \totalhyposet} u(D, H_j) - \frac{2\Delta(u)}{\epsDP} \log(n/\beta)} \leq e^{-\log(1/\beta)} = \beta.\]
\end{fact}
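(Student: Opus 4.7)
The plan is to follow the standard argument for the utility guarantee of the exponential mechanism. First I would fix the dataset $D$, set $u^* := \max_{H_j \in \totalhyposet} u(D, H_j)$, and let $H^* \in \totalhyposet$ be any maximizer. Define the threshold $\tau := \frac{2\Delta(u)}{\epsDP}\log(n/\beta)$, and let $B := \{H_j \in \totalhyposet : u(D, H_j) \leq u^* - \tau\}$ be the set of ``bad'' outputs we wish to rule out; the goal is to show $\Pr{H \in B} \leq \beta$.

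Next I would express this probability as a ratio of partition functions using the definition of the exponential mechanism:
\[
\Pr{H \in B} \;=\; \frac{\sum_{H_j \in B} \exp\bigl(\epsDP \, u(D, H_j) / (2\Delta(u))\bigr)}{\sum_{H_j \in \totalhyposet} \exp\bigl(\epsDP \, u(D, H_j) / (2\Delta(u))\bigr)},
\]
and bound the numerator above and the denominator below separately. The numerator consists of at most $n$ terms, each bounded by $\exp(\epsDP(u^* - \tau)/(2\Delta(u)))$, so the full sum is at most $n \cdot \exp(\epsDP(u^* - \tau)/(2\Delta(u)))$. For the denominator, I would simply retain the single term corresponding to $H^*$, which contributes $\exp(\epsDP u^*/(2\Delta(u)))$. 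The common factor $\exp(\epsDP u^*/(2\Delta(u)))$ then cancels, leaving
\[
\Pr{H \in B} \;\leq\; n \cdot \exp\!\left(-\frac{\epsDP \tau}{2\Delta(u)}\right) \;=\; n \cdot \exp(-\log(n/\beta)) \;=\; \beta.
\]

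The main (and only) obstacle is bookkeeping: the factor of $2$ from the normalization of the exponential mechanism and the decomposition $\log(n/\beta) = \log n + \log(1/\beta)$ must align so that the $n$ arising from union-bounding the numerator cancels precisely against the $\log n$ inside $\tau$, leaving exactly $\beta$. There is no structural subtlety beyond this; the proof goes through immediately once the denominator is lower-bounded by the single $H^*$ term.
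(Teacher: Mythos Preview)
Your argument is correct and is exactly the standard proof of this utility guarantee (as in Theorem~3.11/Corollary~3.12 of Dwork--Roth); the paper does not supply its own proof but simply cites that reference, so there is nothing further to compare.
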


\paragraph{Sparse vector technique~\cite{dworkNR09, dwork2014algorithmic, lyuSL16}:} Given a numerical statistic $g: \XX \to \mathbb{R}$, a \emph{threshold query} counts the number of entries $x \in D$ such that $g(x)$ is above or below a fixed cutoff. We will employ the sparse vector technique (SVT)~\cite{dworkNR09}, which enables processing a large number of threshold queries while incurring a privacy cost only for the small subset of queries whose values exceed a specified threshold. Dwork et al.~\cite{dworkNR09} presents a simple $\epsDP$-differentially private algorithm \textsc{AboveThreshold} that takes in a stream of queries and identifies the first meaningful query above a predefined threshold while privately ignoring queries that fall below the threshold.

\paragraph{Composition and post-processing:} Two properties make DP particularly well-suited for modular algorithm design. $\emph{Composition}$ bounds the cumulative privacy loss after performing multiple differentially private subroutines, and $\emph{post-processing}$ ensures that no further transformation of the output of a differentially private algorithm can further degrade the privacy guarantees. We state the following theorems:

\begin{fact}[Composition, Theorem 3.14 of~\cite{dwork2014algorithmic}] \label{fct:composition}
    Let $\AA_1, \dots, \AA_k$ be algorithms that access the same dataset $D \in \XX^{\otimes s}$, and suppose each $\AA_i$ is $\epsDP_i$-differentially private. Let $\AA$ be the composed algorithm of $\AA_1, \dots, \AA_k$. Then, $\AA$ is $\sum_{i=1}^k \epsDP_i$-differentially private. 
\end{fact}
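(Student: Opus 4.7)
The plan is to prove the claim by induction on $k$. The base case $k = 1$ is immediate from Definition~\ref{def:DP}. For the inductive step, I would view the output of $\AA$ on input $\dataset$ as the tuple $(o_1, \dots, o_k)$, where $o_i$ is the output of $\AA_i(\dataset)$. By the inductive hypothesis, the composition of $\AA_1, \dots, \AA_{k-1}$ is $\left(\sum_{i=1}^{k-1}\epsDP_i\right)$-DP, so it suffices to argue that composing a $\left(\sum_{i=1}^{k-1}\epsDP_i\right)$-DP algorithm with the $\epsDP_k$-DP algorithm $\AA_k$ yields a $\left(\sum_{i=1}^{k}\epsDP_i\right)$-DP algorithm, i.e.\ to prove the statement for $k = 2$.

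To carry this out cleanly, I would pass to the equivalent characterization of pure DP as a pointwise bound on output densities (Radon-Nikodym derivatives with respect to a common dominating measure). For neighboring datasets $\dataset, \dataset'$ with $\ham(\dataset, \dataset') = 1$ and any outcome $(o_1, \dots, o_k)$, the fact that $\AA_1, \dots, \AA_k$ use independent internal randomness lets the joint density factor as $\prod_{i=1}^{k} p_i(o_i \mid o_1, \dots, o_{i-1}; \dataset)$ (the conditioning on earlier outputs is only needed if the composition is adaptive; otherwise the factors are marginal). Applying the $\epsDP_i$-DP guarantee of each $\AA_i$ pointwise gives
\[
\frac{p_i(o_i \mid o_1, \dots, o_{i-1}; \dataset)}{p_i(o_i \mid o_1, \dots, o_{i-1}; \dataset')} \leq e^{\epsDP_i},
\]
and multiplying across $i \in [k]$ yields the bound $e^{\sum_i \epsDP_i}$ on the ratio of joint densities. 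Integrating this pointwise inequality over an arbitrary measurable $\SS \subseteq \operatorname{Range}(\AA)$ then recovers the DP condition from Definition~\ref{def:DP}.

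The main obstacle is technical rather than conceptual: making the density factorization fully rigorous when outputs live in continuous spaces and/or the composition is adaptive. This is standardly handled by invoking the equivalence between pure $\epsDP$-DP and the existence of a pointwise bound on the Radon-Nikodym derivative, which lets the entire argument be performed densitywise and reduces the measure-theoretic content to routine applications of Fubini and monotone convergence.
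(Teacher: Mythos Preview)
Your argument is the standard proof of basic composition for pure DP and is correct. Note, however, that the paper does not actually prove this statement: it is recorded as a \emph{Fact} with a citation to Theorem~3.14 of Dwork and Roth, so there is no paper-side proof to compare against; your proposal essentially reproduces the textbook argument being cited.
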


\begin{fact}[Post-processing, Proposition 2.1 of~\cite{dwork2014algorithmic}] \label{fct:postprocessing}
    Let $\AA$ be an $\epsDP$-differentially private algorithm. Let $g$ be a (possibly random) mapping. Then, $g \circ \AA$ is $\epsDP$-differentially private.
\end{fact}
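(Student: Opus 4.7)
The plan is to reduce the general randomized case to the deterministic case, and then to deduce the deterministic case directly from the differential privacy guarantee of $\AA$ via a preimage argument. Nothing beyond Definition~\ref{def:DP} is required.

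First I would establish the deterministic case. Suppose $g$ is a fixed deterministic measurable function and let $T$ be any measurable subset of $\operatorname{Range}(g \circ \AA)$. The key observation is the set-theoretic identity $\{g(\AA(D)) \in T\} = \{\AA(D) \in g^{-1}(T)\}$, where $g^{-1}(T)$ is itself a measurable subset of $\operatorname{Range}(\AA)$. Applying the $\epsDP$-differential privacy of $\AA$ directly to this preimage, for any pair of neighboring datasets $D, D'$ with $\ham(D, D') = 1$ I obtain
\[
\Pr{g(\AA(D)) \in T} = \Pr{\AA(D) \in g^{-1}(T)} \leq e^{\epsDP} \Pr{\AA(D') \in g^{-1}(T)} = e^{\epsDP} \Pr{g(\AA(D')) \in T}.
\]

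Next I would handle the randomized case by conditioning on $g$'s internal randomness. Any random mapping admits a representation $g(y) = \tilde{g}(y, R)$ where $\tilde{g}$ is deterministic and $R$ is drawn from some distribution $\mu$ independently of $\AA$ and of the dataset. For each fixed realization $r$, the deterministic step applied to $y \mapsto \tilde{g}(y, r)$ yields $\Pr{\tilde{g}(\AA(D), r) \in T} \leq e^{\epsDP} \Pr{\tilde{g}(\AA(D'), r) \in T}$. Integrating both sides against $\mu$ and invoking Fubini—valid because $R$ is independent of $D$—preserves the multiplicative factor and gives the claim for $g \circ \AA$.

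The only nontrivial step is the set-theoretic preimage identity in the deterministic case; everything else is bookkeeping. The main technical subtlety, such as it is, lies in formally writing the random function as a measurable mixture of deterministic ones, but this is the standard representation used in all textbook proofs of post-processing and needs only that $g$'s randomness be independent of the private dataset. I do not anticipate a real obstacle: the proof is short, purely definitional, and does not require any of the machinery (Scheff\'e sets, semi-distances, SVT, exponential mechanism) developed elsewhere in the paper.
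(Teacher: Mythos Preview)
Your proof is correct and is precisely the standard textbook argument. Note, however, that the paper does not supply its own proof of this statement: it is recorded as a Fact with a citation to Proposition~2.1 of Dwork and Roth, so there is nothing in the paper to compare against.
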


\section{Private Hypothesis Selection Algorithm}

\subsection{Overview of our algorithm} \label{sec:overview}
    In this section, we present an overview of Algorithm~\ref{alg::wrapper}, our main algorithm for solving the hypothesis selection problem in the central model of DP that obtains an $\multiplicativeError = 3$ guarantee. Building on the previous work described in Section~\ref{sec:minimumdistanceestimate}, our goal is to find a hypothesis $\hat{H}$ that approximately minimizes $\maxsemidistance{\hat{H}}$. For each hypothesis $H_j \in \totalhyposet$, our algorithm keeps track of the proxy distances $\lowerestimate{H_j} = \max_{H_k \in A} \EmpSemiDis{k}{j}$. The set $A$ will store a small set of prompting hypotheses accumulated so far. In every round $\round$ of our algorithm, we privately identify a prompting hypothesis $H_{i_\round}$ that improves a substantial portion of current proxy distances $\lowerestimate{H_j}$ and update each $\tilde{W}(H_j)$ to $\max\left( \tilde{W}(H_j), \EmpSemiDis{i_\round}{j} \right)$. This effectively adds the privately selected $H_{i_\round}$ to the set $A$. 
    
    The primary challenge of privatizing the process of identifying a prompting hypothesis in \cite{aliakbarpourBS24} arises due to the bucketing scheme. This membership to a bucket is highly sensitive due to its discrete nature. In particular, changing one sample of the dataset could potentially shift the membership of every single hypothesis in $\totalhyposet$ by changing every $\lowerestimate{H_j}$.
    
    Therefore, to privately select hypotheses with low proxy distances without relying on buckets, we use the exponential mechanism. We maintain a distribution of hypotheses $\SampleHypo$ that assigns each hypothesis $H_j$ a probability that favors hypotheses with low proxy distance $\lowerestimate{H_j}$. Based on this change, we modify the notion of prompting from \cite{aliakbarpourBS24}: we say that a hypothesis is prompting with respect to the \emph{distribution} $Q$ over $\totalhyposet$, rather than with respect to the hypotheses in a bucket. We also introduce the notion of $\Pprompt{\sigma'}{\eta}$ to quantify the ``prompting-ness'' of a hypothesis in Equation~\ref{eq:prompting}, where $\eta$ is the probability mass of hypotheses in $Q$ that can be $\sigma'$-lifted. Recall that a hypothesis $H_i$ $\sigma'$-lifts a hypothesis $H_j$ if $\EmpSemiDis{i}{j} - \lowerestimate{H_j} \geq \sigma'$. After sufficiently many rounds, when no more prompting hypotheses can be found, we sample an output hypothesis $\hypOut$ from distribution $\SampleHypo$.

\paragraph{Identifying prompting hypotheses:}
    To privately identify a prompting hypothesis $H_i$, we test whether $H_i$ significantly lifts a large probability mass of hypotheses in $\SampleHypo$. A straightforward approach is to first create a list $\SampleHypoList$ of hypotheses that are sampled from $\SampleHypo$. Then, we may choose a prompting hypothesis using a threshold query for hypotheses in $\SampleHypoList$ that are lifted significantly. Unfortunately, such a threshold query would have a very high sensitivity with respect to the dataset $\dataset$, as a single change in the dataset can shift every lift value from below the threshold to being above the threshold. 

    To solve the issue of the high sensitivity, we replace the exact count of hypotheses lifted with a new type of query called $\score{H_i}$. This query instead returns a \emph{quantile} of the lift values of each hypothesis, which is a much more stable type of query with a lower sensitivity. 

    More specifically, $\score{H_i}$ is computed by Algorithm~\ref{alg:computeScore} as follows: we first calculate the lift value $H_i$ induces on each element in $\SampleHypoList$. Then, we sort these values in non-increasing order and return the $\ceil{\eta/2 \cdot |\SampleHypoList|}$-th largest lift value. This significantly reduces the sensitivity of $\score{H_i}$, as shown in section~\ref{sec:score}. Even if every single value in $H_i$ shifts by some amount due to a change in the dataset, the quantile that we return should not shift significantly. In Section~\ref{sec:SampleListGenScore}, we show that $\score{H_i}$ can be used to identify whether or not $H_i$ is prompting. 

\textbf{Applying the SVT:}
    We now wish to determine exactly which hypotheses have high $\score{H_i}$'s to find hypotheses that are significantly prompting. For every round of our algorithm, we attempt to find a prompting hypothesis. This task is equivalent to answering $\numHypotheses$ threshold queries. In general, this is very costly from the perspective of privacy. However, because we are only interested in one hypothesis that passes the threshold, we use the sparse vector technique (SVT)~\cite{dworkNR09} to find this hypothesis with minimal privacy cost. 

    Algorithm~\ref{alg::SVT} describes an algorithm using the SVT, which privately outputs either the index $i$ of the hypothesis that was detected to have a high $\score{H_i}$, or $\bot$ if no hypotheses have sufficiently high scores. This algorithm has guarantees on finding a prompting hypothesis formalized in Theorem~\ref{thm:svt}. First, whenever the SVT returns a hypothesis, we guarantee that it is at least somewhat prompting, so we make progress at every round by updating the proxy distances. Second, whenever the SVT fails to find any prompting hypotheses, all hypotheses have small $\score{H_j}$'s and are therefore unlikely to be prompting. 
    
    As in \cite{aliakbarpourBS24}, $H_{i^*}$ is typically prompting for hypotheses far from $P$. When the SVT is unable to find any prompting hypotheses, it implies that $H_{i^*}$ is not prompting with respect to $Q$. Consequently, a large probability mass of hypotheses in $\SampleHypo$---namely, those with minimal proxy distances---must have proxy distances that cannot be lifted by $H_{i^*}$. Recall that all the poor hypotheses can be lifted by $H_{i^*}$, so in this case, outputting any random hypothesis in $\SampleHypo$ will be valid with high probability.

\textbf{Number of rounds:}
Because we do not allow a hypothesis to be added to the prompting set more than once, our algorithm will certainly halt after at most $\numHypotheses$ rounds. However, this leads to a quadratic bound on the time complexity.  We show that the algorithm will halt after $O(\log \numHypotheses)$ rounds, yielding a nearly-linear time complexity.

Arguing about this round complexity is a key hurdle that arises in the private setting. The non-private algorithm in~\cite{aliakbarpourBS24} iteratively eliminates hypotheses from buckets. At every round, upon finding a prompting hypothesis, a significant (say a constant) fraction of hypotheses within the bucket have their proxy distances updated, leading to their removal from the bucket. This ensures that even with an initial bucket of all $n$ hypotheses, the algorithm concentrates on this bucket for only $O(\log n)$ iterations. After this, the bucket is either empty or the algorithm halts due to the lack of a prompting hypothesis. However, adapting the notion of prompting to the case where we update only a set of hypotheses with a constant probability mass according to $Q$ fundamentally changes the analysis. Determining the actual fraction of updated hypotheses becomes much more complex. For example, we might update only one hypothesis, since it may hold a constant probability mass under~$Q$. 

To resolve this issue, we provide a refined analysis of the progress of the exponential mechanism. 
This analysis relies on the fact that the normalization term in the exponential mechanism's probabilities must decrease with each prompting hypothesis added to the set $\promptingSet$, as some proxy distances will increase but none can decrease. 
As a result, hypotheses whose proxy distances do not increase significantly will see their probabilities rise. As these probabilities cannot exceed one, changes to the proxy distances must be able to keep up with the decrease in the normalization term. However, they can only keep up for so long, as each proxy distance is itself upper bounded by one. As a result, we can bound the number of rounds of our algorithm by $O(\log \numHypotheses)$.

\paragraph{Enforcing privacy:}
Throughout every round of the algorithm, we incur two types of privacy costs: one from drawing $|\SampleHypoList|$ hypotheses from the exponential mechanism and another from identifying a prompting hypothesis through the sparse vector technique. As we have discussed above, the types of queries our algorithm makes have low sensitivities with respect to the dataset. In Lemma~\ref{lem:privacy_wrapper}, we give a complete proof of privacy by using basic additive composition.

\subsection{Algorithm}
In Algorithm~\ref{alg::wrapper}, we begin by sampling $\numsamples$ samples from the unknown distribution to make up a dataset $\dataset$. We initialize $\promptingSet$, the set of prompting hypotheses to be empty, and the proxy distance of each hypothesis to 0. We then iteratively search, over at most $\numRounds$ rounds, for prompting hypotheses to add to $\promptingSet$. In each round, we re-calculate the probabilities of the exponential mechanism in Line~\ref{line:createQ} and draw a list $\SampleHypoList$ of $\sampleHypoSize$ hypotheses from this mechanism in Line~\ref{line:sampleK}. We call the \textsc{Find-Prompting-Hypothesis} procedure of Algorithm~\ref{alg::SVT} to identify a hypothesis which is empirically prompting over $\SampleHypoList$ using the sparse vector technique. In Section~\ref{sec:SVT}, we thoroughly describe this procedure. In Section~\ref{sec:score}, we describe the \textsc{Compute-Score} procedure used to assign a score to each hypothesis throughout \textsc{Find-Prompting-Hypothesis}. If \textsc{Find-Prompting-Hypothesis} returns a hypothesis, we add that hypothesis to $\promptingSet$ in Line~\ref{line:addToA} and update the proxy estimates of all hypotheses in Line~\ref{line:updateProxyDistances}. If \textsc{Find-Prompting-Hypothesis} returns $\bot$, indicating that it could not find a prompting hypothesis, we break from the ``for'' loop, draw a hypothesis using the exponential mechanism, and output that final hypothesis.

\ifthenelse{\boolean{neurips}}
{
\begin{algorithm}
\caption{A private algorithm for hypothesis selection} 
\label{alg::wrapper}
\begin{algorithmic}[1]
    \Procedure{Select-Hypothesis}{$\totalhyposet, \epsDP, \totalError, \totalFailure$}
    \State $\numsamples \leftarrow \Theta\left(\frac{1}{\totalFailure^2 \totalError^2 \epsDP} \log^3    \left(\numHypotheses / \totalFailure \right) \right)$, $\numRounds \leftarrow  \min \left( \Theta\left(\frac{1}{\totalFailure \totalError} \log \left(\numHypotheses / \totalFailure \right) \right), \numHypotheses \right)$, $\sampleHypoSize \leftarrow \Theta \left( \frac{1}{\totalFailure} \log \left(\numHypotheses / \totalFailure \right) \right)$
    % \State $\numsamples \leftarrow \Theta\left(\frac{1}{\totalFailure^2 \totalError^2 \epsDP} \log^3    \left(\numHypotheses / \totalFailure \right) \right)$
    % \State $\numRounds \leftarrow  \min \left( \Theta\left(\frac{1}{\totalFailure \totalError} \log \left(\numHypotheses / \totalFailure \right) \right), \numHypotheses \right)$
    % \State $\sampleHypoSize \leftarrow \Theta \left( \frac{1}{\totalFailure} \log \left(\numHypotheses / \totalFailure \right) \right)$
    \State $\epsEXP \leftarrow \frac{\epsDP}{2( \sampleHypoSize \numRounds + 1)}\,,\  \epsSVT \leftarrow \frac{\epsDP}{2 \numRounds}$
    \State 
    \State $\dataset \leftarrow \numsamples$ samples drawn from $P$ 
    \Comment{we will use these samples to compute semi-distances}
    \State $\promptingSet \leftarrow \emptyset$
    \State $\lowerestimate{H_j} \leftarrow 0$ for every $H_j \in \totalhyposet$
    \For{$\round = 1, \ldots, \numRounds$} \label{algstep:for}  
        \State $\SampleHypo(H_j) \propto \exp\left( - \frac{\epsEXP \lowerestimate{H_j}}{2 \Delta\left(\lowerestimateNoArg\right)} \right)$ for every $H_j \in \totalhyposet$ \label{line:createQ} 
        \State $\SampleHypoList \leftarrow \sampleHypoSize$ hypotheses drawn from $\SampleHypo$ \Comment{sample using exponential mechanism} \label{line:sampleK}
        \State $H_{i_\round} \leftarrow \textsc{Find-Prompting-Hypothesis}\left( \epsSVT, \frac{2}{\numsamples}, \frac{\promptingScore}{4}, \frac{\totalFailure}{4}, \totalhyposet \setminus \promptingSet, \SampleHypoList, \dataset\right)$ \Comment{Algorithm~\ref{alg::SVT}} \label{line:callFHP}
        \If{$H_{i_\round} \ne \bot$} \label{line:ifPrompting}
            \State $\promptingSet \leftarrow \promptingSet \cup \{H_{i_\round}\}$ \Comment{add $H_{i_\round}$ to prompting set} \label{line:addToA}
            \State $\lowerestimate{H_j} \leftarrow \max \left( \lowerestimate{H_j}, \EmpSemiDis{i_\round}{j} \right)$ for every $H_j \in \totalhyposet$ \label{line:updateProxyDistances}
        \Else
            \State \textbf{break} \Comment{failed to find a prompting hypothesis} \label{line:break}
        \EndIf
    \EndFor
    \State \Return{$\hypOut \sim \SampleHypo$} and \textbf{halt}
    \EndProcedure \label{line:output}
\end{algorithmic}
\end{algorithm}

}
{
\begin{algorithm}
\caption{A private algorithm for hypothesis selection} 
\label{alg::wrapper}
\begin{algorithmic}[1]
    \Procedure{Select-Hypothesis}{$\totalhyposet, \epsDP, \totalError, \totalFailure$}
    % \State $\numsamples \leftarrow \Theta\left(\frac{1}{\totalFailure^2 \totalError^2 \epsDP} \log^3    \left(\numHypotheses / \totalFailure \right) \right)$, $\numRounds \leftarrow  \min \left( \Theta\left(\frac{1}{\totalFailure \totalError} \log \left(\numHypotheses / \totalFailure \right) \right), \numHypotheses \right)$, $\sampleHypoSize \leftarrow \Theta \left( \frac{1}{\totalFailure} \log \left(\numHypotheses / \totalFailure \right) \right)$
    \State $\numsamples \leftarrow \Theta\left(\frac{1}{\totalFailure^2 \totalError^2 \epsDP} \log^3    \left(\numHypotheses / \totalFailure \right) \right)$
    \State $\numRounds \leftarrow  \min \left( \Theta\left(\frac{1}{\totalFailure \totalError} \log \left(\numHypotheses / \totalFailure \right) \right), \numHypotheses \right)$
    \State $\sampleHypoSize \leftarrow \Theta \left( \frac{1}{\totalFailure} \log \left(\numHypotheses / \totalFailure \right) \right)$
    \State $\epsEXP \leftarrow \frac{\epsDP}{2( \sampleHypoSize \numRounds + 1)}\,,\  \epsSVT \leftarrow \frac{\epsDP}{2 \numRounds}$
    \State 
    \State $\dataset \leftarrow \numsamples$ samples drawn from $P$ 
    \Comment{we will use these samples to compute semi-distances}
    \State $\promptingSet \leftarrow \emptyset$
    \State $\lowerestimate{H_j} \leftarrow 0$ for every $H_j \in \totalhyposet$
    \For{$\round = 1, \ldots, \numRounds$} \label{algstep:for}  
        \State $\SampleHypo(H_j) \propto \exp\left( - \frac{\epsEXP \lowerestimate{H_j}}{2 \Delta\left(\lowerestimateNoArg\right)} \right)$ for every $H_j \in \totalhyposet$ \label{line:createQ} 
        \State $\SampleHypoList \leftarrow \sampleHypoSize$ hypotheses drawn from $\SampleHypo$ \Comment{sample using exponential mechanism} \label{line:sampleK}
        \State $H_{i_\round} \leftarrow \textsc{Find-Prompting-Hypothesis}\left( \epsSVT, \frac{2}{\numsamples}, \frac{\promptingScore}{4}, \frac{\totalFailure}{4}, \totalhyposet \setminus \promptingSet, \SampleHypoList, \dataset\right)$ \Comment{Algorithm~\ref{alg::SVT}} \label{line:callFHP}
        \If{$H_{i_\round} \ne \bot$} \label{line:ifPrompting}
            \State $\promptingSet \leftarrow \promptingSet \cup \{H_{i_\round}\}$ \Comment{add $H_{i_\round}$ to prompting set} \label{line:addToA}
            \State $\lowerestimate{H_j} \leftarrow \max \left( \lowerestimate{H_j}, \EmpSemiDis{i_\round}{j} \right)$ for every $H_j \in \totalhyposet$ \label{line:updateProxyDistances}
        \Else
            \State \textbf{break} \Comment{failed to find a prompting hypothesis} \label{line:break}
        \EndIf
    \EndFor
    \State \Return{$\hypOut \sim \SampleHypo$} and \textbf{halt}
    \EndProcedure \label{line:output}
\end{algorithmic}
\end{algorithm}

}

\begin{theorem}
\label{thm:meta}
    Let $\epsDP, \totalFailure, \totalError \in (0,1)$. Algorithm~\ref{alg::wrapper} is an $\left( \multiplicativeError=3, \epsDP, \totalFailure, \totalError \right)$-private learner for hypothesis selection that uses $\numsamples = \Theta \left( \frac{1}{\totalFailure^2 \totalError^2 \epsDP} \log^3 \left( \numHypotheses / \totalFailure \right)\right)$ samples and has time complexity $\Theta \left(\min\left( \frac{1}{\totalFailure^4 \totalError^3 \epsDP} \cdot \numHypotheses \cdot \log^5(\numHypotheses / \totalFailure)\,, \ \  
    \frac{1}{\totalFailure^3 \totalError^2 \epsDP} \cdot \numHypotheses^2 \cdot \log^4(\numHypotheses / \totalFailure)
    \right)\right)$.
\end{theorem}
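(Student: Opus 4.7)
The plan is to split the proof of Theorem~\ref{thm:meta} into three components: \textbf{(i)} $\epsDP$-differential privacy, \textbf{(ii)} the $(\multiplicativeError=3,\epsDP,\totalFailure,\totalError)$-accuracy guarantee, and \textbf{(iii)} the sample and time complexity bounds. Privacy is the cleanest piece: Algorithm~\ref{alg::wrapper} accesses the dataset $\dataset$ only through the $\sampleHypoSize\numRounds + 1$ exponential-mechanism draws from $\SampleHypo$ (each $\epsEXP$-DP) and the $\numRounds$ invocations of \textsc{Find-Prompting-Hypothesis} (each $\epsSVT$-DP). Basic composition (Fact~\ref{fct:composition}) combined with the algorithm's choices $\epsEXP = \epsDP/(2(\sampleHypoSize\numRounds+1))$ and $\epsSVT = \epsDP/(2\numRounds)$ yields $\epsDP$-DP overall, with the deterministic updates of $\lowerestimateNoArg$ on Line~\ref{line:updateProxyDistances} absorbed by post-processing (Fact~\ref{fct:postprocessing}).

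\textbf{Accuracy on a $\bot$-exit.} I would condition on a good event $\mathcal{E}$ bundling: \textbf{(a)} $|\EmpSemiDis{i}{j} - \SemiDis{i}{j}| \leq \Theta(\totalError)$ for all $i,j$ (Lemma~\ref{lemma:goodSemidisApprox}); \textbf{(b)} every SVT call satisfies its utility guarantee; \textbf{(c)} the sampled list $\SampleHypoList$ is representative of $\SampleHypo$ up to $\Theta(\totalFailure)$ slack, so the empirical prompting notion $\EMPPprompt{\cdot}{\cdot}$ tracks true prompting $\Pprompt{\cdot}{\cdot}$; and \textbf{(d)} the final draw $\hypOut \sim \SampleHypo$ lands in the ``good'' $\SampleHypo$-mass. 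A union bound with $\numsamples$ set as in the algorithm gives $\Pr{\mathcal{E}} \geq 1 - \totalFailure$. Suppose the algorithm exits via Line~\ref{line:break}: the SVT utility guarantee then implies that no surviving hypothesis is empirically $\EMPPprompt{\Theta(\totalFailure)}{\totalError/4}$ over $\SampleHypoList$, so by event \textbf{(c)} the optimal $\optHyp$ itself is not $\Pprompt{\totalError/4}{\Theta(\totalFailure)}$ with respect to $\SampleHypo$. Hence at least a $(1 - \Theta(\totalFailure))$-fraction of $\SampleHypo$-mass consists of hypotheses $H_j$ with $\EmpSemiDis{i^*}{j} - \lowerestimate{H_j} < \totalError/4$, and since $\SemiDis{i^*}{j} \leq \OPT$ always holds, replaying Equation~\eqref{eqn:tvBound} up to the $\Theta(\totalError)$ empirical slack gives $\tv{H_j - P} \leq 3\OPT + \totalError$ on that $\SampleHypo$-mass. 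Combined with event \textbf{(d)}, this establishes the approximation guarantee.

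\textbf{Main obstacle: ruling out the non-$\bot$ exit.} It remains to show that, conditioned on $\mathcal{E}$, the loop must invoke Line~\ref{line:break} within $\numRounds$ rounds; otherwise the output distribution is not controlled by the argument above. As the overview flags, the non-private ``empty the bucket'' argument of~\cite{aliakbarpourBS24} does not transfer, because a single $\SampleHypo$-heavy hypothesis can account for most of the $\totalError/4$-lifted mass in a round. Instead, I would track the exponential-mechanism normalization constant
\[ Z_\ell := \sum_{H_j \in \totalhyposet} \exp\!\left(-\frac{\epsEXP \, \lowerestimate{H_j}}{2\,\Delta(\lowerestimateNoArg)}\right) \]
at the start of round $\ell$. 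Since each $\lowerestimate{H_j}$ is monotonically non-decreasing, $Z_\ell$ is monotonically non-increasing; moreover, whenever \textsc{Find-Prompting-Hypothesis} returns a hypothesis that is $\Pprompt{\totalError/4}{\Theta(\totalFailure)}$ with respect to $\SampleHypo$, a $\Theta(\totalFailure)$ fraction of $\SampleHypo$-mass has its $\lowerestimate{H_j}$ rise by at least $\totalError/4$. Translating this into a multiplicative decrement on $Z_\ell$, and using $1 \leq Z_\ell \leq \numHypotheses$ together with the saturation $\lowerestimate{H_j} \leq 1$, the number of consecutive non-$\bot$ rounds is bounded by $O(\log(\numHypotheses/\totalFailure)/(\totalFailure\totalError))$, exactly $\numRounds$. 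I expect this potential-function argument to be the technically novel core of the proof.

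\textbf{Complexities.} The sample complexity follows by substitution: Lemma~\ref{lemma:goodSemidisApprox} requires $\Theta(\totalError^{-2}\log(\numHypotheses/\totalFailure))$ samples for accurate empirical semi-distances, while the exponential-mechanism utility guarantee (Fact~\ref{lem:expMechAcc}) and the SVT accuracy bound each contribute additive error of order $\log(\numHypotheses/\totalFailure)/\epsEXP$ and $\log(\numHypotheses/\totalFailure)/\epsSVT$ that must be absorbed into the $\totalError$ budget; plugging in $\epsEXP,\epsSVT,\numRounds,\sampleHypoSize$ collapses this to $\numsamples = \Theta(\log^3(\numHypotheses/\totalFailure)/(\totalFailure^2\totalError^2\epsDP))$. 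For the runtime, each round performs up to $\numHypotheses$ score evaluations inside \textsc{Find-Prompting-Hypothesis}, each costing $\Theta(\sampleHypoSize\cdot\numsamples)$ via empirical semi-distance queries; multiplying by $\numRounds$ rounds yields the stated $\tilde\Theta(\numHypotheses/(\totalFailure^4\totalError^3\epsDP))$ bound, with the alternate $\tilde\Theta(\numHypotheses^2)$ bound arising when $\numRounds$ saturates at $\numHypotheses$.
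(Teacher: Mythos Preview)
Your overall architecture matches the paper's: privacy via composition, a potential-function argument on the exponential-mechanism normalizer $Z_\ell$ to bound the number of rounds, and an accuracy argument conditioned on the $\bot$-exit. The $Z_\ell$ argument you sketch is precisely what the paper does (Lemma~\ref{lem:boundedZRatio} and Theorem~\ref{thm:algTerminates}), and you correctly flag it as the novel core. One small imprecision: ``$1 \leq Z_\ell$'' is not the right lower bound; the paper instead uses $\SampleHypo^{(\round)}(\optHyp) \leq 1$ to get $Z_\round \geq \exp(-\epsEXP\,\OPT/(2\expMechSensitivity))$, and this is exactly where the extra $1/\totalError$ factor in the round bound comes from. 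You land at the right answer, so this is cosmetic.

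There is, however, a genuine gap in your $\bot$-exit accuracy argument. The claim ``$\SemiDis{i^*}{j} \leq \OPT$ always holds'' is false: $\SemiDis{i^*}{j} = |H_j(\SS_{i^*,j}) - P(\SS_{i^*,j})|$ is bounded by $\tv{H_j - P}$, not by $\OPT$ (it is $\SemiDis{j}{i^*}$ that is bounded by $\OPT$). If $\SemiDis{i^*}{j} \leq \OPT$ held unconditionally, Equation~\eqref{eqn:tvBound} would give $\tv{H_j - P} \leq 3\,\OPT$ for \emph{every} $H_j$, which is absurd. What you have actually established at the $\bot$-exit is only $\EmpSemiDis{i^*}{\hypOut} < \lowerestimate{\hypOut} + \totalError/4$ on the good $\SampleHypo$-mass; to turn this into a bound on $\SemiDis{i^*}{\hypOut}$ you must separately control $\lowerestimate{\hypOut}$. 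The paper does this via the exponential-mechanism utility guarantee (Fact~\ref{lem:expMechAcc}): with high probability $\lowerestimate{\hypOut} \leq \min_i \lowerestimate{H_i} + (2\expMechSensitivity/\epsEXP)\log(4\numHypotheses/\totalFailure)$, and then $\min_i \lowerestimate{H_i} \leq \lowerestimate{\optHyp} \leq \max_j \EmpSemiDisH{j}{\optHyp} \leq \OPT + \semiDisError$ because every semi-distance of the \emph{optimal} hypothesis is at most $\OPT$. Chaining these gives $\EmpSemiDis{i^*}{\hypOut} \lesssim \OPT + \Theta(\totalError)$, after which Equation~\eqref{eqn:tvBound} finishes. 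You invoke Fact~\ref{lem:expMechAcc} only in your complexity paragraph; it needs to be an explicit step in the accuracy argument, replacing the incorrect ``$\SemiDis{i^*}{j} \leq \OPT$'' shortcut.
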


\paragraph{Proof sketch:}
    In Section~\ref{sec:privacy}, we prove that Algorithm~\ref{alg::wrapper} is $\epsDP$-differentially private. In Section~\ref{sec:correctness}, we prove the correctness of Algorithm~\ref{alg::wrapper}. Specifically, in Section~\ref{sec:keyEvents}, we show that each hypothesis added to $\promptingSet$ will be prompting with high probability. 
    In Section~\ref{sec:earlyHalt}, we show that, if this is the case, then the algorithm will halt after at most $O(\log \numHypotheses)$ rounds. In Section~\ref{sec:validHyp}, we show that, if the algorithm halts early, then it will output a valid hypothesis with high probability. In Section~\ref{sec:sampleComplexity}, we give exact settings for $\numsamples, \numRounds$, and $\sampleHypoSize$ such that our proof of correctness holds. Finally, in Section~\ref{sec:timeComplexity}, we prove the time complexity of Algorithm~\ref{alg::wrapper}.

\section{Proof of Privacy of Algorithm~\ref{alg::wrapper}}
\label{sec:privacy}
\begin{lemma}
\label{lem:privacy_wrapper}
    Algorithm~\ref{alg::wrapper} is $\epsDP$-differentially private. 
\end{lemma}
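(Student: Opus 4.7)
}
The plan is to isolate the sample-dependent subroutines, bound the privacy loss of each via the primitives collected in Section~\ref{sec:backgroundDP}, and combine them by basic composition (Fact~\ref{fct:composition}). The dataset $D$ enters the computation only through the empirical semi-distances $\hat{w}_i(H_j)$, which in turn feed into (i) the proxy distances $\tilde{W}(H_j)$ that shape the distribution $\SampleHypo$ used by the exponential mechanism, and (ii) the lift-value-based scores consumed by \textsc{Find-Prompting-Hypothesis}. Every other step — initializing $\promptingSet$, updating $\tilde{W}(H_j) \leftarrow \max(\tilde{W}(H_j), \hat{w}_{i_t}(H_j))$, forming $\SampleHypo$ from the current $\tilde{W}$'s, and returning the final hypothesis — is a deterministic function of the prior privately released outputs, hence free by post-processing (Fact~\ref{fct:postprocessing}).

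Next I would account for the privacy loss round-by-round. Within each round $t \in [r]$, Line~\ref{line:sampleK} draws $k$ hypotheses from $\SampleHypo$; since the probability assigned to $H_j$ is proportional to $\exp(-\epsEXP \tilde{W}(H_j)/(2\Delta(\tilde{W})))$, each draw is a single invocation of the exponential mechanism with utility $-\tilde{W}$ and parameter $\epsEXP$, and is therefore $\epsEXP$-DP. Then Line~\ref{line:callFHP} invokes \textsc{Find-Prompting-Hypothesis} with privacy parameter $\epsSVT$; I would appeal to the SVT-based privacy guarantee of this subroutine established in Section~\ref{sec:SVT} (Theorem~\ref{thm:svt}), which states that it is $\epsSVT$-DP. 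The final draw from $\SampleHypo$ on Line~\ref{line:output} is one additional $\epsEXP$-DP call of the exponential mechanism.

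Applying basic composition across all rounds (regardless of whether the \textbf{break} on Line~\ref{line:break} triggers early, since composition bounds hold for any prefix of the calls), the overall privacy loss is at most
\[
r \cdot \bigl(k \cdot \epsEXP + \epsSVT\bigr) + \epsEXP \;=\; (rk + 1)\epsEXP + r \cdot \epsSVT.
\]
Substituting the settings $\epsEXP = \epsDP/(2(kr+1))$ and $\epsSVT = \epsDP/(2r)$ from Algorithm~\ref{alg::wrapper} yields $\epsDP/2 + \epsDP/2 = \epsDP$, completing the argument. The main obstacle is not the composition arithmetic itself, which is routine, but cleanly justifying that (a) each of the $k$ samples comprising $\SampleHypoList$ is an independent exponential mechanism call with the correct sensitivity $\Delta(\tilde{W})$ (so that the coefficient on $-\tilde{W}(H_j)$ in the exponent truly equals $\epsEXP/(2\Delta)$), and (b) \textsc{Find-Prompting-Hypothesis} is $\epsSVT$-DP despite being called on $D$ many times — this second point requires the sensitivity analysis of the quantile-based score $\score{H_i}$ (Section~\ref{sec:score}) and the SVT bookkeeping, both of which I would cite rather than redo here.
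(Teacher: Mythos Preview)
Your proposal is correct and mirrors the paper's proof: account for $k$ exponential-mechanism draws and one SVT call per round, plus one final exponential-mechanism draw, then plug in $\epsEXP$ and $\epsSVT$ to get $(rk+1)\epsEXP + r\,\epsSVT = \epsDP$. One small wording slip worth fixing: the update $\tilde W(H_j)\leftarrow\max(\tilde W(H_j),\hat w_{i_t}(H_j))$ is \emph{not} post-processing of prior releases (it reads $D$ via $\hat w_{i_t}$), but this is harmless since $\tilde W$ is never itself released and its data-dependence is absorbed into the next round's exponential-mechanism sensitivity, exactly as you go on to argue.
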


\begin{proof}
    In each iteration, Algorithm~\ref{alg::wrapper} samples $\SampleHypoList$, a hypothesis list of size $\sampleHypoSize$, where each hypothesis $H_j$ is drawn with probability proportional to its proxy distance, $\lowerestimate{H_j}$. In Lemma~\ref{lemma:estimateSensitivity}, we show that each proxy distance has sensitivity $1/\numsamples$. We privatize this sampling of hypotheses using the exponential mechanism (Definition~\ref{def:exponentialMech}).
    
    After drawing $\SampleHypoList$, the algorithm calculates $\score{H_j}$ for each hypothesis $H_j$, and searches for a hypothesis with a high $\score{H_j}$. For each $H_j$, recall that $\score{H_j}$ is an empirical quantile of the promptingness of $H_j$ on $\SampleHypoList$.
    $\SampleHypoList$ is treated as publicly available when computing each score. This fact, along with the choice to use quantiles to calculate each score, leads to a low sensitivity of the score function. Specifically, in Lemma~\ref{lemma:scoresensitivity}, we prove that the score function has sensitivity $2/\numsamples$. We privately select a hypothesis with a sufficiently high score using the sparse vector technique (Algorithm~\ref{alg::SVT}), allowing us to incur a loss of privacy independent of the number of hypotheses.

    All remaining steps in the round either post-process $\SampleHypoList$ and the chosen prompting hypothesis $H_{i_\round}$, or reveal no further information about the dataset. Therefore, in each round, we only consider the privacy loss of drawing $\numsamples$ hypotheses and finding a prompting hypothesis. By basic composition (Fact~\ref{fct:composition}), sampling $\sampleHypoSize$ hypotheses from $\SampleHypo$ and calling \textsc{Find-Prompting-Hypothesis} once gives the following privacy loss in each round: 
    \begin{equation*}
         \epsSVT + \sampleHypoSize \cdot \epsEXP = \frac{\epsDP}{2 \numRounds} + \frac{\epsDP \cdot \sampleHypoSize}{2 (\sampleHypoSize \numRounds + 1)}.
    \end{equation*}
    Algorithm \ref{alg::wrapper} takes at most $\numRounds$ iterations. The composition of $\numRounds$ rounds of the above procedure results in a privacy loss of:
    \begin{equation*}
        \numRounds \cdot \left(\frac{\epsDP}{2 \numRounds} + \frac{\epsDP \cdot \sampleHypoSize}{2 (\sampleHypoSize \numRounds + 1)} \right) = \frac{\epsDP}{2} + \frac{\epsDP \cdot \sampleHypoSize \numRounds}{2 (\sampleHypoSize \numRounds + 1)}.
    \end{equation*}
    Finally, Algorithm \ref{alg::wrapper} samples from $\SampleHypo$ to obtain output distribution in the last round. Hence, Algorithm \ref{alg::wrapper} has a total privacy loss of:
    \begin{equation*}
         \frac{\epsDP}{2} + \frac{\epsDP \cdot \sampleHypoSize \numRounds}{2 (\sampleHypoSize \numRounds + 1)} + \epsEXP =  \frac{\epsDP}{2} + \frac{\epsDP \cdot \sampleHypoSize \numRounds}{2 (\sampleHypoSize \numRounds + 1)} + \frac{\epsDP}{2(\sampleHypoSize \numRounds + 1)} = \epsDP.
    \end{equation*}
\end{proof}

\section{Proof of Correctness of Algorithm~\ref{alg::wrapper}}
\label{sec:correctness}

The correctness proof of Algorithm~\ref{alg::wrapper} proceeds by first showing that, with high probability, three key events occur: {\em \textbf{i})} the empirical semi-distances are accurate, {\em \textbf{ii})} in each round, the score of every hypothesis accurately reflects its ability to lift many hypotheses, and {\em \textbf{iii})} in each round, \Call{Find-Prompting-Hypothesis}{} either outputs a high-scoring hypothesis or correctly identifies that no such hypothesis remains. We then show that if these events hold, the algorithm will eventually fail to find a prompting hypothesis and will halt after less than $\numRounds$ rounds. Finally, we prove that if the key events hold and the algorithm halts early, the output is, with high probability, less than $(3\OPT+\totalError)$-far from $P$.

\subsection{Key events occur}
\label{sec:keyEvents}
Let $\semiDisError = \totalError / 4$ and $\liftBound = \totalError / 4$. The correctness of Algorithm~\ref{alg::wrapper} relies on the following key events:
\begin{enumerate}
    \item With the $s$ samples that we draw from $P$, we calculate the empirical semi-distances between all pairs of hypotheses to within an $\semiDisError$-additive error.
    \item In each round of the algorithm, if the score of a hypothesis is at least $\liftBound / 2$, then that hypothesis is $(\liftBound / 2, \promptingAcc / 4)$-prompting with respect to $\SampleHypo$. If the score of a hypothesis is less than $\liftBound$, then that hypothesis is \emph{not} $(\liftBound, \promptingAcc)$-prompting with respect to $\SampleHypo$.
    \item In each round of the algorithm, if \Call{Find-Prompting-Hypothesis}{} outputs a hypothesis, then the score of that hypothesis is at least $\liftBound / 2$. If it does not output a hypothesis, then no hypothesis had a score greater than $\liftBound$.
\end{enumerate}

Each of these events fails to occur with low probability, provided $\numsamples$, the number of samples, and $\sampleHypoSize$, the number of hypotheses sampled at each round, are sufficiently large. We recall the specific assumptions on $\numsamples$ and $\sampleHypoSize$ that guarantee each event as follows:

\paragraph{Empirical semi-distances are $\semiDisError$-accurate:}
Lemma~\ref{lemma:goodSemidisApprox} ensures that, with probability at least $1 - \totalFailure/6$, the empirical semi-distances are accurate up to an additive factor of $\semiDisError$ if

\begin{equation}
    \numsamples \geq \frac{1}{2 \semiDisError^2} \log (12 \numHypotheses / \totalFailure) \;.
\end{equation}

\paragraph{Scores reflect prompting ability:}
In Lemma~\ref{lem:PtoEMP_1}, we show that, in each round, with high probability, the score of each hypothesis reflects whether that hypothesis is prompting with respect to the hypotheses in $\SampleHypo$.
Specifically, the lemma ensures that, with probability at least $1 - \totalFailure/(6\numRounds)$, if

\begin{equation}
    \sampleHypoSize \geq \frac{12 \log (6 \numHypotheses \numRounds / \totalFailure)}{\promptingAcc},
\end{equation}

then, in round $\round$ for all $H_i$, if the score of $H_i$ is at least $\liftBound / 2,$ $H_i$ is $(\liftBound/2, \promptingAcc/4)$-prompting with respect to $\SampleHypo$, and, if the score of $H_i$ is less than $\liftBound$, $H_i$ is \emph{not} $(\liftBound, \promptingAcc)$-prompting with respect to $\SampleHypo$.

\paragraph{\textsc{Find-Prompting-Hypothesis} succeeds:}
In Theorem~\ref{thm:svt}, we show that, in each round, with high probability, the \Call{Find-Prompting-Hypothesis}{} procedure succeeds in either finding a prompting hypothesis or identifying that there are no such hypotheses. Specifically, the theorem ensures that, with probability at least $1 - \totalFailure/(6\numRounds)$, if

\begin{equation}
    \Delta \leq \frac{\liftBound \epsSVT}{32 \log(12 \numHypotheses \numRounds/ \totalFailure)} \; ,
\end{equation}

then, in round $\round$, if $\Call{Find-Prompting-Hypothesis}{\epsSVT, \Delta = 2/\numsamples, \liftBound, \promptingAcc, \totalhyposet \setminus \promptingSet, \SampleHypoList}$ outputs $H_{i_\round}$, the score of $H_{i_\round}$ is at least $\liftBound / 2$, and, if the procedure outputs $\bot$, each $H_{i_\round}$ has a score less than $\liftBound$. Note that this bound on $\Delta$ implies the following bound on $\numsamples$, as $\Delta = 2/\numsamples$:

\begin{equation}
    \numsamples \geq \frac{64}{\liftBound \epsSVT} \log(12 \numHypotheses \numRounds/ \totalFailure) \; ,
\end{equation}

For now, assume that $\numsamples, \numRounds$ and $\sampleHypoSize$ satisfy these requirements. In Section~\ref{sec:sampleComplexity}, we will give a precise choice of parameters that satisfies these, along with other constraints. Then, the probability that at least one of the three key events does not occur is:
\begin{align*}
    \Pr{\text{at least one key event fails}} 
    &\leq \Pr{\text{semi-distance estimation fails}} \\
    &\quaad + \sum_{\round=1}^\numRounds \Pr{\text{at least one score is inaccurate in round } \round} \\
    &\quaad + \sum_{\round=1}^\numRounds \Pr{\Call{Find-Prompting-Hypothesis}{} \text{ fails in round } \round} \\
    &\leq \frac{\totalFailure}{6} + \sum_{\round=1}^\numRounds \frac{\totalFailure}{6 \numRounds} + \sum_{\round=1}^\numRounds \frac{\totalFailure}{6 \numRounds} \\
    &= \frac{\totalFailure}{2} \; .
\end{align*}

\subsection{Algorithm halts early}
\label{sec:earlyHalt}

% \subsection{Algorithm halts early}
% \label{sec:progress}

If \Call{Find-Prompting-Hypothesis}{} outputs a hypothesis with score at least $\liftBound/2$, then that hypothesis must be able to $\liftBound/2$-lift at least an $\promptingAcc / 2$-fraction of $\SampleHypoList$. If its scores are an accurate representation of its ability to lift $\SampleHypo$, then $H_i$ is $(\liftBound/2, \promptingAcc/4)$-prompting over $\SampleHypo$. In this section, we show that, if each hypothesis added to the prompting set is truly $(\liftBound/2, \promptingAcc/4)$-prompting--that is, if the second and third events described in the previous section hold--, then the algorithm will reach Line~\ref{line:break} and halt before executing $\numRounds$ rounds. 

\begin{theorem}
    \label{thm:algTerminates}
    Let $\promptingScore = \liftBound / 2, \promptingAcc' = \promptingAcc / 4$.
    Assume $\exp\left(- \frac{\epsEXP \promptingScore}{2 \expMechSensitivity} \right) < \frac{1}{2}$.
    Further, assume that each hypothesis added to the prompting set is $(\promptingScore, \promptingAcc')$-prompting with respect to $\SampleHypo$ in the round it is added. Then Algorithm~\ref{alg::wrapper} terminates after at most $\min\left(\frac{1}{\log \left( 1 + \frac{\promptingAcc'}{2} \right)} \left( \log \left( \numHypotheses \right) + \frac{\epsEXP}{2 \expMechSensitivity} \cdot \text{OPT} \right), \numHypotheses \right)$ rounds.
\end{theorem}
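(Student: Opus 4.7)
The plan is a potential-function argument on the normalization constant of the exponential mechanism. Define
\[
    Z_\round \;:=\; \sum_{H_j \in \totalhyposet} \exp\!\Bigl(-\frac{\epsEXP \, \lowerestimate{H_j}}{2\expMechSensitivity}\Bigr),
\]
where proxy distances are evaluated at the start of round $\round$. Since every proxy distance equals zero before the loop begins, $Z_1 = \numHypotheses$; and because Line~\ref{line:updateProxyDistances} never decreases any proxy distance, the sequence $\{Z_\round\}$ is non-increasing. The goal is to show that each successful round shrinks $Z$ by a factor of at least $(1 - \promptingAcc'/2)$, and then to establish a lower floor on $Z$ using the contribution of $\optHyp$.

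For the shrinkage, suppose in round $\round$ we add $H_{i_\round}$, which by hypothesis is $(\promptingScore, \promptingAcc')$-prompting with respect to the current $\SampleHypo$. Let $S_\round := \{H_j \in \totalhyposet : \EmpSemiDis{i_\round}{j} - \lowerestimate{H_j} \geq \promptingScore\}$ be the set of hypotheses $\promptingScore$-lifted by $H_{i_\round}$. The prompting inequality $\SampleHypo(S_\round) \geq \promptingAcc'$ unpacks, via the definition of $\SampleHypo$ in Line~\ref{line:createQ}, into
\[
    \sum_{H_j \in S_\round} \exp\!\Bigl(-\frac{\epsEXP \, \lowerestimate{H_j}}{2\expMechSensitivity}\Bigr) \;\geq\; \promptingAcc' \cdot Z_\round.
\]
For each $H_j \in S_\round$, Line~\ref{line:updateProxyDistances} raises $\lowerestimate{H_j}$ by at least $\promptingScore$, so the corresponding summand in $Z_{\round+1}$ shrinks by a multiplicative factor of at most $\exp(-\epsEXP \promptingScore / (2\expMechSensitivity)) < 1/2$, by assumption. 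Terms for $H_j \notin S_\round$ can only shrink. Combining these facts yields $Z_{\round+1} \leq Z_\round - \tfrac{1}{2}\promptingAcc' Z_\round = (1 - \promptingAcc'/2)\,Z_\round$, and inductively $Z_{\round+1} \leq \numHypotheses (1 - \promptingAcc'/2)^\round$ after $\round$ successful rounds.

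For the floor, I would use $\optHyp$. Since $\SemiDis{k}{i^*} = |H_{i^*}(\SS_{k,i^*}) - P(\SS_{k,i^*})| \leq \tv{\optHyp - P} = \OPT$ for every $k$, the proxy distance $\lowerestimate{\optHyp}$ stays at most $\OPT$ throughout (treating the empirical $\EmpSemiDis{i}{j}$ as exact here, justified up to the $\semiDisError$-accuracy event from Section~\ref{sec:keyEvents}). Hence the single $\optHyp$-summand already gives $Z_\round \geq \exp(-\epsEXP \OPT/(2\expMechSensitivity))$. If the algorithm were to complete $\round$ successful rounds without hitting Line~\ref{line:break}, combining the upper and lower bounds on $Z_{\round+1}$ and taking logarithms yields $\round \cdot \log\bigl(1/(1-\promptingAcc'/2)\bigr) \leq \log \numHypotheses + \epsEXP \OPT/(2\expMechSensitivity)$. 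Applying $\log(1/(1-x)) \geq \log(1+x)$ to weaken the left-hand side produces exactly the first entry of the $\min$. The second entry, $\numHypotheses$, is immediate: each successful round strictly enlarges $\promptingSet$ by one element drawn from $\totalhyposet \setminus \promptingSet$ (Line~\ref{line:callFHP}), so the loop cannot survive more than $\numHypotheses$ successful rounds.

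The main obstacle, I anticipate, lies in the shrinkage step: the factor-of-$1/2$ savings only applies to the portion of $Z_\round$ coming from $S_\round$, so one must first isolate that portion using the prompting lower bound and only \emph{then} apply the exponential contraction---contracting the entire sum at once would be much too aggressive. A secondary subtlety is that $\lowerestimate{\optHyp} \leq \OPT$ uses exact semi-distances; with the empirical versions the bound picks up an $\semiDisError$ slack, but since the rest of Section~\ref{sec:correctness} already absorbs $\semiDisError$-scale errors into $\totalError$, this should not change the stated round bound.
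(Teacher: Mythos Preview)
Your proposal is correct and follows the same high-level potential-function strategy as the paper: track the normalization constant $Z_\round$, show a per-round contraction by $(1-\promptingAcc'/2)$, and bound $Z_\round$ from below via the $\optHyp$ summand. The paper packages the contraction step as Lemma~\ref{lem:boundedZRatio} and then deduces the round bound by noting that $\SampleHypo^{(\round)}(\optHyp) \leq 1$, which is exactly your floor inequality $Z_\round \geq \exp(-\epsEXP \lowerestimate{\optHyp}/(2\expMechSensitivity))$ rearranged.

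Where you differ is in how you prove the contraction. The paper rewrites $Z^{(\ell+1)}/Z^{(\ell)}$ as $\E_{H_j \sim \SampleHypo^{(\ell)}}\bigl[\exp(-\epsEXP \,\diffEstimate{H_j}{\ell+1}/(2\expMechSensitivity))\bigr]$, applies the integral identity for non-negative random variables, changes variables, and splits the integral at $v=\promptingScore$ using the prompting tail bound. You instead split the sum directly into $S_\round$ and its complement, use that each term in $S_\round$ drops by at least half while the rest do not increase, and invoke $\sum_{S_\round} \geq \promptingAcc' Z_\round$ from the prompting hypothesis. Your route is more elementary and avoids the integral machinery entirely, at no cost in the final bound; the paper's expectation formulation is more general (it would give a sharper constant if one had finer control on the distribution of $\diffEstimate{H_j}{\ell+1}$), but that extra generality is not used here. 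Your remark about the $\semiDisError$ slack in $\lowerestimate{\optHyp} \leq \OPT$ is also apt---the paper silently uses $\lowerestimateRound{H_j}{\ell} \leq \maxsemidistance{H_j}$, which requires the same accuracy assumption, and the slack is absorbed when the parameters are set in Section~\ref{sec:sampleComplexity}.
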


\paragraph{Proof Sketch}
Initially, $\SampleHypo$ is the uniform distribution over $\{ H_1, \ldots, H_\numHypotheses\}$, as each proxy distance $\lowerestimateRound{H_j}{\round}$ is initialized to 0. After each round of the algorithm, a new hypothesis $H_{i_\round}$ is added to the prompting set and the proxy distance of each hypothesis either increases or remains constant. If each $H_{i_\round}$ is truly prompting over $\SampleHypo$, the normalization term of the exponential mechanism decreases at each round, amplifying the probabilities of the hypotheses with proxy distances that are relatively unchanged by the addition of $H_\round$. By characterizing these changes, we give an upper bound $\tilde{\numRounds}$ on the number of rounds our algorithm will execute before returning a hypothesis. Note that this upper bound must be less than $\numHypotheses$, as every hypothesis is added to the prompting set at most once.
If we choose $\numRounds$ to be greater than $\tilde{\numRounds}$, the algorithm will halt before executing $\numRounds$ rounds. 

% \subsubsection{Proof of Theorem~\ref{thm:algTerminates}}

To formally prove Theorem~\ref{thm:algTerminates}, we introduce the following notation, exactly describing the probability distribution induced by the exponential mechanism in each round.

\begin{definition}[Exponential mechanism at round $\round$]
    Recall that, at round $\round$, the hypothesis sampling distribution, $\SampleHypo^{(\round)}$, is defined as follows:
    $$\SampleHypo^{(\round)} \left( H_j \right) = \frac{\exp \left( - \frac{\epsEXP}{2 \expMechSensitivity} \cdot \lowerestimateRound{H_j}{\round} \right)}{Z^{(\round)}} \; , $$

    where 
    \begin{equation}
    \label{def:Znormalization}
        Z^{(\round)} = \sum_{j=1}^n \exp \left( - \frac{\epsEXP}{2 \expMechSensitivity} \cdot \lowerestimateRound{H_j}{\round} \right) \;,
    \end{equation}
    and $\lowerestimateRound{H}{\round}$ is the proxy estimate of $H$'s semi-distance at round $\round$.
\end{definition}

We also require the following lemma, which demonstrates that $Z^{(\ell)}$ decreases with each round, and bounds the amount of this decrease.  

\begin{lemma}
    \label{lem:boundedZRatio}
    Let $\ell \in \{1, 2, \ldots \}$. Let $\promptingScore, \promptingAcc' \in (0, 1]$. Assume $\exp\left(- \frac{\epsEXP \promptingScore}{2 \expMechSensitivity} \right) < \frac{1}{2}$. Define $Z^{(\ell)}$ and $Z^{(\ell + 1)}$ as in Definition~\ref{def:Znormalization}. Assume the hypothesis $H_{i_\ell}$ added to the prompting set in round $\ell$ is $\Pprompt{\promptingScore}{\promptingAcc'}$ with respect to $\SampleHypo^{(\ell)}$. Then, we have:

    \begin{equation}
    \label{eqn:boundedZRatio}
        \frac{Z^{(\ell+1)}}{Z^{(\ell)}} \leq 1-\frac{\promptingAcc'}{2} \;.
    \end{equation}
\end{lemma}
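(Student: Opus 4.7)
My plan is to directly compare $Z^{(\ell+1)}$ to $Z^{(\ell)}$ term by term, partitioning the hypotheses into those whose proxy distance gets significantly lifted by $H_{i_\ell}$ and those that do not. Let $c := \exp\bigl(-\frac{\epsEXP \promptingScore}{2\expMechSensitivity}\bigr)$, so the hypothesis of the lemma is $c < 1/2$. By the update rule in Line~\ref{line:updateProxyDistances}, for every $H_j$ we have $\lowerestimateRound{H_j}{\ell+1} = \max\bigl(\lowerestimateRound{H_j}{\ell},\ \EmpSemiDis{i_\ell}{j}\bigr) \geq \lowerestimateRound{H_j}{\ell}$, so the contribution of each $H_j$ to $Z^{(\ell+1)}$ is at most its contribution to $Z^{(\ell)}$. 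I will define
\[
L \ :=\ \bigl\{H_j \in \totalhyposet \ :\ \EmpSemiDis{i_\ell}{j} - \lowerestimateRound{H_j}{\ell} \geq \promptingScore \bigr\},
\]
which is precisely the set of hypotheses that $H_{i_\ell}$ $\promptingScore$-lifts. By the prompting hypothesis, the $\SampleHypo^{(\ell)}$-mass of $L$ is at least $\promptingAcc'$.

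Next, for $H_j \in L$, the update forces $\lowerestimateRound{H_j}{\ell+1} \geq \lowerestimateRound{H_j}{\ell} + \promptingScore$, so the corresponding term in $Z^{(\ell+1)}$ shrinks by at least a multiplicative factor of $c$ relative to its counterpart in $Z^{(\ell)}$. For $H_j \notin L$, the term in $Z^{(\ell+1)}$ is at most the term in $Z^{(\ell)}$. Splitting the sum and using the identity $\sum_{H_j \in L} \exp\bigl(-\frac{\epsEXP}{2\expMechSensitivity}\lowerestimateRound{H_j}{\ell}\bigr) = Z^{(\ell)} \cdot \SampleHypo^{(\ell)}(L)$, I get
\[
Z^{(\ell+1)} \ \leq\ c \cdot Z^{(\ell)} \cdot \SampleHypo^{(\ell)}(L) \ +\ Z^{(\ell)} \cdot \bigl(1 - \SampleHypo^{(\ell)}(L)\bigr) \ =\ Z^{(\ell)} \cdot \bigl(1 - (1-c)\,\SampleHypo^{(\ell)}(L)\bigr).
\]

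Finally, since $c < 1/2$ implies $1-c > 1/2$, and since $\SampleHypo^{(\ell)}(L) \geq \promptingAcc'$, I conclude
\[
\frac{Z^{(\ell+1)}}{Z^{(\ell)}} \ \leq\ 1 - (1-c)\,\promptingAcc' \ <\ 1 - \frac{\promptingAcc'}{2},
\]
which is exactly \eqref{eqn:boundedZRatio}. The only subtlety I see is correctly handling the case $H_j = H_{i_\ell}$ itself (its proxy distance also weakly increases, so it is absorbed by the generic ``non-lifted'' bound); otherwise the argument is a clean one-line split, and the main conceptual content is just noticing that the prompting condition translates into a lower bound on the $\SampleHypo^{(\ell)}$-mass of the set $L$ whose exponential weights contract by the factor~$c$.
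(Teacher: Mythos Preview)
Your proof is correct, and it reaches exactly the same intermediate bound as the paper: $\frac{Z^{(\ell+1)}}{Z^{(\ell)}} \leq 1 - \promptingAcc'\bigl(1 - \exp(-\tfrac{\epsEXP \promptingScore}{2\expMechSensitivity})\bigr)$, followed by the same use of the assumption $\exp(-\tfrac{\epsEXP \promptingScore}{2\expMechSensitivity}) < \tfrac{1}{2}$ to conclude.

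The route differs only in execution. The paper first rewrites $\frac{Z^{(\ell+1)}}{Z^{(\ell)}}$ as the expectation $\E[H_j \sim \SampleHypo^{(\ell)}]{\exp\bigl(-\tfrac{\epsEXP}{2\expMechSensitivity}\,\diffEstimate{H_j}{\ell+1}\bigr)}$, then applies the layer-cake (integral) identity, performs a change of variables, and splits the resulting integral at $v=\promptingScore$ using the tail bound from the prompting condition. You instead split the finite sum defining $Z^{(\ell+1)}$ directly over the set $L$ of $\promptingScore$-lifted hypotheses, bounding the lifted terms by a multiplicative factor $c$ and the non-lifted terms by their old values. Your argument is the discrete analog of the paper's continuous one and is strictly more elementary: it avoids the integral identity and change of variables entirely, arriving at the same bound in one line. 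The paper's formulation via expectation is perhaps a bit more suggestive of the ``potential-function decrease'' viewpoint, but your version is shorter and loses nothing. The subtlety you flag about $H_j = H_{i_\ell}$ is indeed harmless for exactly the reason you give.
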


Before proving Lemma~\ref{lem:boundedZRatio}, we introduce, for each hypothesis $H_j$, a value $\diffEstimate{H_j}{\ell + 1}$ describing the increase in $H_j$'s proxy distance between rounds $\ell$ and $\ell + 1$:
\begin{equation}
    \diffEstimate{H_j}{\ell + 1} \coloneqq \lowerestimateRound{H_j}{\ell + 1} - \lowerestimateRound{H_j}{\ell}.
\end{equation}

Recall that we define $\lowerestimateRound{H_j}{\ell+1}$ to be the maximum empirical semi-distance between $H_j$ and the hypotheses in the prompting set $\promptingSet$. Thus, when we add $H_{i_\ell}$ to the prompting set in round $\ell$, $\lowerestimateRound{H_j}{\ell+1}$ is the maximum of $\EmpSemiDis{i_\ell}{j}$ and $\lowerestimateRound{H_j}{\ell}$:
\begin{equation*}
    \lowerestimateRound{H_j}{\ell + 1} 
    = \max\left\{ \EmpSemiDis{i_\ell}{j}, \lowerestimateRound{H_j}{\ell} \right\}.
\end{equation*}

The difference between $\lowerestimateRound{H_j}{\ell+1}$ and $\lowerestimateRound{H_j}{\ell}$ is thus lower bounded by the difference between $\lowerestimateRound{H_j}{\ell+1}$ and $\EmpSemiDis{i_\ell}{j}$:
\begin{equation*}
    \diffEstimate{H_j}{\ell+1} 
    = \lowerestimateRound{H_j}{\ell + 1} - \lowerestimateRound{H_j}{\ell}
    % &= \max\left\{ \EmpSemiDis{i}{j} - \lowerestimateRound{H_j}{\ell}, 0 \right\} \\
    \geq \EmpSemiDis{i_\ell}{j} - \lowerestimateRound{H_j}{\ell}.
\end{equation*}

If $H_\ell$ is $\Pprompt{\promptingScore}{\promptingAcc'}$, the definition of prompting allows us to lower bound the probability that this difference is greater than $\promptingScore$:
\begin{equation}
\label{eqn:diffProbBound}
    \Pr[H_j \sim \SampleHypo^{(\ell)}]{\EmpSemiDis{i_\ell}{j} - \lowerestimateRound{H_j}{\ell} \geq \promptingScore} \geq \promptingAcc' \implies \Pr[H_j \sim \SampleHypo^{(\ell)}]{\diffEstimate{H_j}{\ell+1}  \geq \promptingScore} \geq \promptingAcc'.
\end{equation}

To prove Lemma~\ref{lem:boundedZRatio}, we relate the ratio in question to the expected value of a function of these differences. If the hypothesis added to the prompting set at round $\ell$ is prompting, Equation~\ref{eqn:diffProbBound} gives a tail bound on each of these differences. Combining this expected value and tail bound with careful consideration yields the desired bound.

\begin{proof}[Proof of Lemma~\ref{lem:boundedZRatio}]
    Considering the ratio in question, we expand $Z^{(\ell+1)}$ as follows:
    \begin{equation}
        \frac{Z^{(\ell+1)}}{Z^{(\ell)}} = \frac{1}{Z^{(\ell)}} \sum_{j=1}^n \exp \left( - \frac{\epsEXP}{2 \expMechSensitivity} \cdot \lowerestimateRound{H_j}{\ell+1} \right).
    \end{equation}

    Rewriting $\lowerestimateRound{H_j}{\ell + 1}$ in terms of $\diffEstimate{H_j}{\ell + 1}$ and $\lowerestimateRound{H_j}{\ell}$ for every $j$, we have:
    \begin{align*}
        \frac{Z^{(\ell+1)}}{Z^{(\ell)}} &= \frac{1}{Z^{(\ell)}} \sum_{j=1}^n \exp \left( - \frac{\epsEXP}{2 \expMechSensitivity} \cdot \left(\lowerestimateRound{H_j}{\ell} + \diffEstimate{H_j}{\ell+1} \right) \right) \\
        &= \sum_{j=1}^n \frac{1}{Z^{(\ell)}} \exp \left( - \frac{\epsEXP}{2 \expMechSensitivity} \cdot \lowerestimateRound{H_j}{\ell} \right) \cdot \exp \left( - \frac{\epsEXP}{2 \expMechSensitivity} \cdot \diffEstimate{H_j}{\ell+1} \right) \\
        &= \sum_{j=1}^n \SampleHypo^{(\ell)} \left( H_j \right) \cdot \exp \left( - \frac{\epsEXP}{2 \expMechSensitivity} \cdot \diffEstimate{H_j}{\ell+1} \right).
    \end{align*}

    The second equality follows from the definition of $\SampleHypo^{(\ell)} \left( H_j \right)$. Then, notice that the final line above forms an expectation over $\SampleHypo^{(\ell)}$. 
    We can rewrite it as such:

    \begin{equation}
    \label{eqn:ratioAsExpectation}
        = \E[H_j \sim \SampleHypo^{(\ell)}]{\exp \left( - \frac{\epsEXP}{2 \expMechSensitivity} \cdot \diffEstimate{H_j}{\ell+1} \right)}.
    \end{equation}

    Assuming the hypothesis added to the prompting set at round $\ell$ is prompting, Equation~\ref{eqn:diffProbBound} applies. We will use that bound on the probability to bound the expectation in Equation~\ref{eqn:ratioAsExpectation}. To do so, as the argument of the expectation is non-negative, we begin by applying the integral identity \cite{Vershynin18}:
    \begin{align*}
        &\E[H_j \sim \SampleHypo^{(\ell)}]{\exp \left( - \frac{\epsEXP}{2 \expMechSensitivity} \cdot \diffEstimate{H_j}{\ell+1} \right)} \\
        &\quad = \int_{0}^\infty \Pr[H_j \sim \SampleHypo^{(\ell)}]{\exp \left( - \frac{\epsEXP}{2 \expMechSensitivity} \cdot \diffEstimate{H_j}{\ell+1} \right) > t} dt.
    \end{align*}

    Rearranging to isolate $\diffEstimate{H_j}{\ell+1}$ within the probability gives: 
    \begin{equation*}
        = \int_{0}^\infty \Pr[H_j \sim \SampleHypo^{(\ell)}]{ \diffEstimate{H_j}{\ell+1} < \frac{2 \expMechSensitivity}{\epsEXP} \log \frac{1}{t}} dt.
    \end{equation*}

    We apply a change of variables, letting $v = \frac{2 \expMechSensitivity}{\epsEXP} \log \frac{1}{t}$:
    \begin{align*}
        &= \int_{\infty}^{-\infty} \Pr[H_j \sim \SampleHypo^{(\ell)}]{ \diffEstimate{H_j}{\ell+1} < v} \cdot \left( - \frac{\epsEXP}{2 \expMechSensitivity} \right) \cdot \exp \left( - \frac{ v\epsEXP}{2 \expMechSensitivity} \right) dv \\
        &= \int_{-\infty}^{\infty} \Pr[H_j \sim \SampleHypo^{(\ell)}]{ \diffEstimate{H_j}{\ell+1} < v} \cdot \left(\frac{\epsEXP}{2 \expMechSensitivity} \right) \cdot \exp \left(-\frac{v \epsEXP}{2 \expMechSensitivity} \right) dv.
    \end{align*}
    
    Because $\diffEstimate{H_j}{\ell+1}$ is non-negative, the integrand is 0 when $v < 0$, and we need only to evaluate the integral from $v= 0$ to $\infty$:
    \begin{equation*}
        = \int_{0}^{\infty} \Pr[H_j \sim \SampleHypo^{(\ell)}]{ \diffEstimate{H_j}{\ell+1} < v} \cdot \left(\frac{\epsEXP}{2 \expMechSensitivity} \right) \cdot \exp \left(-\frac{v \epsEXP}{2 \expMechSensitivity} \right) dv.
    \end{equation*}

    By Equation~\ref{eqn:diffProbBound}, we know that, for any $v \leq \promptingScore$, $\Pr{\diffEstimate{H_j}{\ell+1} \geq v} \geq \promptingAcc'$. This implies that, for any $v \in [0, \promptingScore],$ we can \emph{upper bound} $\Pr{\diffEstimate{H_j}{\ell+1} < v}$ by $1 - \promptingAcc'$. At the same time, for $v > \promptingScore$, we can upper bound $\Pr{\diffEstimate{H_j}{\ell+1} < v}$ by 1. Applying these bounds, the resulting integral is:

    \begin{align*}
        &\leq \int_{0}^{\promptingScore} \left( 1 - \promptingAcc' \right) \cdot \left(\frac{\epsEXP}{2 \expMechSensitivity} \right) \cdot \exp \left(-\frac{v \epsEXP}{2 \expMechSensitivity} \right) dv \\
        &\quaaaaad + \int_{\promptingScore}^\infty 1 \cdot \left(\frac{\epsEXP}{2 \expMechSensitivity} \right) \cdot \exp \left(-\frac{v \epsEXP}{2 \expMechSensitivity} \right) dv \\
        &= \int_{0}^\infty 1 \cdot \left(\frac{\epsEXP}{2 \expMechSensitivity} \right) \cdot \exp \left(-\frac{v \epsEXP}{2 \expMechSensitivity} \right) dv ~ - \int_{0}^{\promptingScore} \promptingAcc' \cdot \left(\frac{\epsEXP}{2 \expMechSensitivity} \right) \cdot \exp \left(-v \cdot \frac{ \epsEXP}{2 \expMechSensitivity} \right) dv \\
        &=  \lim_{t \rightarrow \infty} - \exp \left(- v \cdot \frac{\epsEXP}{2 \expMechSensitivity} \right) \Bigg\rvert_{v= 0}^t + ~ \promptingAcc' \exp \left(- v \cdot \frac{\epsEXP}{2 \expMechSensitivity} \right) \Bigg\rvert_{v= 0}^{\promptingScore} \\
        &= \lim_{t \rightarrow \infty} \exp \left( -t \cdot \frac{\epsEXP}{2 \expMechSensitivity} \right) + \exp \left( - 0 \cdot \frac{\epsEXP}{2 \expMechSensitivity} \right)  \\
        &\quaaaaad + ~ \promptingAcc' \left( \exp \left(- \promptingScore \cdot \frac{\epsEXP}{2 \expMechSensitivity} \right) - \exp \left(- 0 \cdot \frac{\epsEXP}{2 \expMechSensitivity} \right) \right) \\
        &= 1 + ~ \promptingAcc' \left( \exp \left(- \promptingScore \cdot \frac{\epsEXP}{2 \expMechSensitivity} \right) - 1 \right) \; .
    \end{align*}

    % This evaluates to:
    % \begin{equation*}
    %     1- \promptingAcc' \left(1 - \exp \left(-\frac{\epsEXP \promptingScore}{2 \expMechSensitivity} \right) \right) \leq 1- \frac{\promptingAcc'}{2}
    % \end{equation*}

    Under the assumption $\exp\left(- \frac{\epsEXP \promptingScore}{2 \expMechSensitivity} \right) < \frac{1}{2}$, the following holds:

    \begin{equation*}
        1- \promptingAcc' \left(1 - \exp \left(-\frac{\epsEXP \promptingScore}{2 \expMechSensitivity} \right) \right) \leq 1- \frac{\promptingAcc'}{2} \; .
    \end{equation*}

\end{proof}

Lemma~\ref{lem:boundedZRatio} implies that when hypothesis $H_{i_\ell}$ is added to the prompting set, the probability of sampling a hypothesis $H_j$ which is not significantly lifted by $H_{i_\ell}$ increases. As we continue to add prompting hypotheses which do not significantly lift $H_j$, $\SampleHypo(H_j)$ continues to increase. As this probability approaches 1 and $H_j$ is repeatedly sampled by the exponential mechanism, either we will find a hypothesis which lifts $H_j$, in which case its probability will decrease, or we will halt, as we could not find such a hypothesis.
At any given round, we know that the probability of sampling any hypothesis cannot grow until it exceeds 1. We use this fact to upper bound the number of rounds of Algorithm~\ref{alg::wrapper}, thus proving Theorem~\ref{thm:algTerminates}.

\begin{proof}[Proof of Theorem~\ref{thm:algTerminates}]
    Note that the algorithm cannot add more than $\numHypotheses$ hypotheses to the set $\promptingSet$, as it will only check for a prompting hypothesis among the hypotheses in $\totalhyposet \setminus \promptingSet$. As a result, the algorithm does not run for more than $\numHypotheses$ rounds. 
    
    Fix $t \geq 0$. Consider the probability that $H_j$ is selected by the exponential mechanism at round $\round+1$, $\SampleHypo^{(\round+1)} \left( H_j \right)$:
    \begin{align*}
        \SampleHypo^{(\round+1)} \left( H_j \right) 
        &= \frac{1}{Z^{(\round+1)}} \cdot \exp \left( - \frac{\epsEXP}{2 \expMechSensitivity} \cdot \lowerestimateRound{H_j}{\round+1} \right) \\
        &=  \frac{1}{Z^{(\round+1)}} \cdot \exp \left( - \frac{\epsEXP}{2 \expMechSensitivity} \cdot \left(\lowerestimateRound{H_j}{\round} + \diffEstimate{H_j}{\round+1} \right) \right) \\
        &= \frac{Z^{(\round)}}{Z^{(\round+1)}} \cdot \frac{1}{Z^{(\round)}} \cdot \exp \left( - \frac{\epsEXP}{2 \expMechSensitivity} \cdot \lowerestimateRound{H_j}{\round} \right) \cdot \exp \left( - \frac{\epsEXP}{2 \expMechSensitivity} \cdot \diffEstimate{H_j}{\round+1} \right) \\
        &= \frac{Z^{(\round)}}{Z^{(\round+1)}} \cdot \SampleHypo^{(\round)} \left( H_j \right)  \cdot \exp \left( - \frac{\epsEXP}{2 \expMechSensitivity} \cdot \diffEstimate{H_j}{\round+1} \right) \; .
    \end{align*}

    Unfolding this expression over all rounds prior to $\round$, we obtain:
    \begin{equation}
    \label{eqn:qUnfolded}
        \SampleHypo^{(\round)} \left( H_j \right) = \left( \prod_{\ell = 1}^\round \frac{Z^{(\ell-1)}}{Z^{(\ell)}} \right) \cdot \SampleHypo^{(0)} \left( H_j \right) \cdot \exp \left( - \frac{\epsEXP}{2 \expMechSensitivity} \cdot \sum_{\ell = 1}^\round \diffEstimate{H_j}{\ell} \right).
    \end{equation}

    Recall that $\SampleHypo^{(0)}$ follows a uniform distribution, implying $\SampleHypo^{(0)} \left( H_j \right) = \frac{1}{\numHypotheses}.$ Further, as $\lowerestimateRound{H_j}{0} = 0$ and $\lowerestimateRound{H_j}{t} = \lowerestimateRound{H_j}{0} + \sum_{\ell = 1}^\round \diffEstimate{H_j}{\ell}$, we have $\lowerestimateRound{H_j}{t} = \sum_{\ell = 1}^\round \diffEstimate{H_j}{\ell}$. As a result, we can simplify Equation~\ref{eqn:qUnfolded} to:
    \begin{equation}
    \label{eqn:qSimplified}
        \SampleHypo^{(\round)} \left( H_j \right) = \left( \prod_{\ell = 1}^\round \frac{Z^{(\ell-1)}}{Z^{(\ell)}} \right) \cdot \frac{1}{\numHypotheses} \cdot \exp \left( - \frac{\epsEXP}{2 \expMechSensitivity} \cdot \lowerestimateRound{H_j}{t} \right).
    \end{equation}

    By Lemma~\ref{lem:boundedZRatio}, we know $\frac{Z^{(\ell+1)}}{Z^{(\ell)}} \leq 1-\frac{\promptingAcc'}{2}$ for all $\ell$. Consequently, $\frac{Z^{(\ell-1)}}{Z^{(\ell)}} \geq \frac{1}{1-\frac{\promptingAcc'}{2}} \geq 1 + \frac{\promptingAcc'}{2}$.
    % \footnote{The second inequality holds because $1 \geq 1 - \left(\promptingAcc' / 2\right)^2 = (1 - \promptingAcc'/2)(1+\promptingAcc' / 2) \implies 1 / (1 - \promptingAcc'/2) \geq 1 + \promptingAcc'/2$.}
    Additionally, $\lowerestimateRound{H_j}{\ell}$ is upper bounded by $\maxsemidistance{H_j}$. Combining these two bounds and Equation~\ref{eqn:qSimplified}, we have:
    \begin{equation}
    \label{eqn:qBounded}
        \SampleHypo^{(\round)} \left(\optHyp \right) \geq \left( 1 + \frac{\promptingAcc'}{2} \right)^\round \cdot \frac{1}{\numHypotheses} \cdot \exp\left(- \frac{\epsEXP}{2 \expMechSensitivity} \cdot \maxsemidistance{H_j} \right) \; .
    \end{equation}

    The above expression of $\SampleHypo^{(\round)} \left( H_j \right)$ holds for all $H_j$--including $\optHyp$. If we focus specifically on $\optHyp$, we know $\maxsemidistance{\optHyp} \leq$ OPT, leading to the bound:
    \begin{equation}
    \label{eqn:qBoundedOPT}
        \SampleHypo^{(\round)} \left(\optHyp \right) \geq \left( 1 + \frac{\promptingAcc'}{2} \right)^\round \cdot \frac{1}{\numHypotheses} \cdot \exp\left(- \frac{\epsEXP}{2 \expMechSensitivity} \cdot \text{OPT} \right) \; .
    \end{equation}

    Because $\SampleHypo^{(\round)} \left( \optHyp \right)$ represents a probability, it is upper bounded by 1. The lower bound given in Equation~\ref{eqn:qBoundedOPT} is thus also upper bounded by 1. We use these bounds to establish an upper bound on $\round$:

    \begin{align*}
        1 \geq  \left( 1 + \frac{\promptingAcc'}{2} \right)^\round \cdot \frac{1}{\numHypotheses} \cdot \exp\left(- \frac{\epsEXP}{2 \expMechSensitivity} \cdot \text{OPT} \right) 
        &\implies \numHypotheses \cdot \exp\left( \frac{\epsEXP}{2 \expMechSensitivity} \cdot \text{OPT} \right) \geq \left( 1 + \frac{\promptingAcc'}{2} \right)^\round \\
        &\implies \log \left( \numHypotheses \right) + \frac{\epsEXP}{2 \expMechSensitivity} \cdot \text{OPT} \geq t \log \left( 1 + \frac{\promptingAcc'}{2} \right) \\
        &\implies \frac{\log \left( \numHypotheses \right) + \frac{\epsEXP}{2 \expMechSensitivity} \cdot \text{OPT}}{\log \left( 1 + \frac{\promptingAcc'}{2} \right)} \geq t \; .
    \end{align*}

    Therefore, our algorithm will terminate after at most $\frac{\log \left( \numHypotheses \right) + \frac{\epsEXP}{2 \expMechSensitivity} \cdot \text{OPT}}{\log \left( 1 + \frac{\promptingAcc'}{2} \right)}$ rounds.
    
\end{proof}

\subsection{Algorithm outputs a valid hypothesis}
\label{sec:validHyp}
\begin{theorem}
\label{thm:outputHypIsGood}
    Assume $\totalError \geq \frac{8 \expMechSensitivity}{\epsEXP} \log \left(4 \numHypotheses / \totalFailure \right)$. Assume the key events hold and Algorithm~\ref{alg::wrapper} reaches Line~\ref{line:break} and outputs $\hypOut$.
    Then, with probability at most $\totalFailure / 2 , ~\farHypOut$.    
\end{theorem}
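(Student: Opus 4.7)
The plan is to combine the exponential mechanism's utility guarantee (Fact~\ref{lem:expMechAcc}) applied to the final draw $\hypOut \sim \SampleHypo$ with the consequence of reaching Line~\ref{line:break}---namely, that by the second and third key events no hypothesis is $(\liftBound, \promptingAcc)$-prompting with respect to $\SampleHypo$, in particular $H_{i^*}$ is not---and then feed the resulting bound on $\SemiDis{i^*}{\hypOut}$ into the triangle-inequality chain of Equation~\ref{eqn:tvBound}. Throughout, I use $\liftBound = \semiDisError = \totalError/4$ and I arrange the parameter $\promptingAcc$ so that $\promptingAcc \leq \totalFailure/4$.

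First I extract the halting consequence. Since \textsc{Find-Prompting-Hypothesis} returned $\bot$, no hypothesis has score exceeding $\liftBound$; the second and third key events then imply that $H_{i^*}$ is not $(\liftBound, \promptingAcc)$-prompting with respect to $\SampleHypo$, i.e.
$$\Pr[H_j \sim \SampleHypo]{\EmpSemiDis{i^*}{j} - \lowerestimate{H_j} \geq \liftBound} < \promptingAcc.$$
Because $\hypOut$ is a single draw from $\SampleHypo$, with probability at least $1 - \promptingAcc$ we obtain $\EmpSemiDis{i^*}{\hypOut} - \lowerestimate{\hypOut} < \liftBound$, and the first key event (empirical semi-distances accurate to $\semiDisError$) upgrades this to $\SemiDis{i^*}{\hypOut} \leq \lowerestimate{\hypOut} + \liftBound + \semiDisError$.

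Second I bound $\lowerestimate{\hypOut}$ via the exponential-mechanism utility guarantee, with utility $u(H) = -\lowerestimate{H}$ of sensitivity $\expMechSensitivity$. Fact~\ref{lem:expMechAcc} with failure probability $\totalFailure/4$ yields, with probability at least $1 - \totalFailure/4$,
$$\lowerestimate{\hypOut} \leq \min_{H \in \totalhyposet} \lowerestimate{H} + \frac{2 \expMechSensitivity}{\epsEXP} \log(4n/\totalFailure) \leq \lowerestimate{\optHyp} + \totalError/4,$$
where the second inequality invokes the theorem's hypothesis $\totalError \geq \frac{8 \expMechSensitivity}{\epsEXP}\log(4n/\totalFailure)$. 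Since $\maxsemidistance{\optHyp} \leq \OPT$, the first key event gives $\lowerestimate{\optHyp} \leq \OPT + \semiDisError$, and hence $\lowerestimate{\hypOut} \leq \OPT + \semiDisError + \totalError/4$.

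A union bound over the two single-draw failure events---the non-prompting miss ($\leq \promptingAcc \leq \totalFailure/4$) and the exponential-mechanism utility miss ($\leq \totalFailure/4$)---leaves a success event of probability at least $1 - \totalFailure/2$ on which
$$\SemiDis{i^*}{\hypOut} \leq \OPT + \liftBound + 2 \semiDisError + \totalError/4 = \OPT + \totalError,$$
so Equation~\ref{eqn:tvBound} delivers $\tv{\hypOut - P} \leq \SemiDis{i^*}{\hypOut} + 2 \OPT \leq 3 \OPT + \totalError$. The main obstacle is the bookkeeping around constants: I need to confirm that $\promptingAcc$ can indeed be taken as small as $\totalFailure/4$ without breaking the $\sampleHypoSize$ lower bound required by Lemma~\ref{lem:PtoEMP_1} (which scales like $\log(n\numRounds/\totalFailure)/\promptingAcc$), and that the algorithm's choice $\sampleHypoSize = \Theta(\log(n/\totalFailure)/\totalFailure)$ is compatible---this is the point where all the parameter choices of Section~\ref{sec:sampleComplexity} must line up, and is the only place where the argument is delicate rather than routine.
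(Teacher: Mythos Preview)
Your proposal is correct and follows essentially the same approach as the paper's proof: the paper likewise deduces from the $\bot$ output and the key events that $H_{i^*}$ is not $(\liftBound,\promptingAcc)$-prompting, combines this with the exponential-mechanism utility guarantee (at failure $\totalFailure/4$) and the bound $\lowerestimate{\optHyp}\leq\OPT+\semiDisError$, and then applies Equation~\ref{eqn:tvBound}, union-bounding the two single-draw failures to $\promptingAcc+\totalFailure/4=\totalFailure/2$. Your only hesitation---whether $\promptingAcc\leq\totalFailure/4$---is moot: the algorithm explicitly sets $\promptingAcc=\totalFailure/4$ (see Section~\ref{sec:sampleComplexity}), so no further arrangement is needed.
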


\begin{proof}
    If Algorithm~\ref{alg::wrapper} breaks in round $\round$, it must be that \Call{Find-Prompting-Hypothesis}{} was unable to find a prompting hypothesis and returned $\bot$. Under the assumption that \Call{Find-Prompting-Hypothesis}{} succeeds, this implies that, at round $\round$, all hypotheses had scores less than $\liftBound$, implying that they were unable to $\liftBound$-lift more than an $\promptingAcc/2$ fraction of $\SampleHypoList$. We've additionally assumed that the score of each hypothesis reflects its promptingness. That is, if the score of a hypothesis is less than $\liftBound$, it is not $(\liftBound, \promptingAcc)$-prompting. Therefore, under our assumptions, no $H_i \in \totalhyposet$ is $(\liftBound, \promptingAcc)$-prompting. Specifically, $\optHyp$ is not $(\liftBound, \promptingAcc)$-prompting. This guarantees that:
    
    \begin{equation}
    \label{eqn:empSemiDisHypOutBound1}
        \Pr[\hypOut \sim \SampleHypo]{\EmpSemiDisH{i^*}{\hypOut} - \lowerestimate{\hypOut} \geq \liftBound} < \promptingAcc \; .
    \end{equation}

    We also know, by the utility guarantee of the exponential mechanism  given in Lemma~$\ref{lem:expMechAcc}$, that, with probability at least $1 - \totalFailure/4$, for $\hypOut \sim \SampleHypo$:

    \begin{equation}
    \label{eqn:lowerEstOptHypBound}
        \lowerestimate{\hypOut} \leq \min_{i \in [\numHypotheses]} \lowerestimate{H_i} + \frac{2 \expMechSensitivity}{\epsEXP} \log \left( \numHypotheses / 4\totalFailure \right) \;.
    \end{equation}

    We can bound $\min_{i \in [\numHypotheses]} \lowerestimate{H_i}$ as follows:
    \begin{align}
    \label{eqn:minLowerEstBound1}
        \min_{i \in [\numHypotheses]} \lowerestimate{H_i} 
        \leq \lowerestimate{\optHyp} 
        = \max_{H_j \in \promptingSet} \EmpSemiDisH{j}{\optHyp} 
        \leq \max_{j \in [\numHypotheses]} \EmpSemiDisH{j}{\optHyp} \;.
    \end{align}

    Recall that we have assumed that the empirical semi-distance estimates are accurate up to an additive factor of $\semiDisError$. This implies:
    \begin{align}
    \label{eqn:minLowerEstBound2}
        \min_{i \in [\numHypotheses]} \lowerestimate{H_i}  \leq \max_{j \in [\numHypotheses]} \EmpSemiDisH{j}{\optHyp}
        \leq \max_{j \in [\numHypotheses]} \SemiDisH{j}{\optHyp} + \semiDisError
        \leq \| \optHyp - P \|_\text{TV} + \semiDisError = \OPT + \semiDisError \;.
    \end{align}
    
    Combining 
    Equation~\ref{eqn:lowerEstOptHypBound},
    and Equation~\ref{eqn:minLowerEstBound2}, we have, with probability at least $1 - \totalFailure/4$, the following bound on, $\lowerestimate{\hypOut}$, the proxy distance of our outputted hypothesis:

    \begin{equation}
    \label{eqn:lowerEstimateHypOutBound}
        \lowerestimate{\hypOut} \leq \OPT + \semiDisError + \frac{2 \expMechSensitivity}{\epsEXP} \log \left( 4 \numHypotheses / \totalFailure \right) \; .
    \end{equation}

    Because $\hypOut$ is drawn from $\SampleHypo$, we know, from Equation~\ref{eqn:empSemiDisHypOutBound1}, that, with probability at least $1 - \promptingAcc,$ we can bound the empirical semi-distance $\EmpSemiDisH{i^*}{\hypOut}$ as:

    \begin{equation}
    \label{eqn:empSemiDisHypOutBound3}
        \EmpSemiDisH{i^*}{\hypOut} < \lowerestimate{\hypOut} + \liftBound \;.
    \end{equation}

    Combining Equation~\ref{eqn:lowerEstimateHypOutBound} and  Equation~\ref{eqn:empSemiDisHypOutBound3}, we have, with probability at least $1 - (\totalFailure/4 + \promptingAcc)$, the following bound on $\EmpSemiDisH{i^*}{\hypOut}$:
    \begin{equation}
    \label{eqn:empSemiDisHypOutBound2}
        \EmpSemiDisH{i^*}{\hypOut} < \OPT + \liftBound + \frac{2 \expMechSensitivity}{\epsEXP} \log \left( 4\numHypotheses / \totalFailure \right) + \semiDisError \;.
    \end{equation}

    By Equation~\ref{eqn:tvBound}, we have $\| \hypOut - P \|_\text{TV} \leq 2 \OPT + \SemiDisH{i^*}{\hypOut}$. Together with Equation~\ref{eqn:empSemiDisHypOutBound2}, this yields:
    \begin{align}
        \| \hypOut - P \|_\text{TV} 
        &\leq 2 \OPT + \SemiDisH{i^*}{\hypOut} \nonumber \\
        &\leq 2 \OPT + \EmpSemiDisH{i^*}{\hypOut} + \semiDisError \nonumber \\ 
        &< 3 \OPT + \liftBound + \frac{2 \expMechSensitivity}{\epsEXP} \log \left( 4\numHypotheses / \totalFailure \right) + 2\semiDisError \nonumber \\
        &\leq 3 \OPT + \totalError / 4 + \totalError / 4 + 2 \cdot \totalError / 4 \nonumber \\
        &= 3 \OPT + \totalError \nonumber \;.
    \end{align}   

    The second-to-last inequality holds because we set $\liftBound = \totalError / 4, \semiDisError = \totalError / 4$, and we assume $\totalError \geq \frac{8 \expMechSensitivity}{\epsEXP} \log \left(4 \numHypotheses / \totalFailure \right)$. Recall that our algorithm sets $\promptingAcc = \totalFailure /4$. Ultimately, we have shown:
    \begin{equation*}
        \Pr[\hypOut \sim \SampleHypo]{\farHypOut \; \big| \; \text{algorithm breaks} \text{ and key events occur}} \leq \promptingAcc + \totalFailure/4 = \totalFailure /2 \; .
    \end{equation*}
\end{proof}

\subsection{Overall correctness}

We have shown, under certain constraints on $\numsamples, \numRounds$ and $\sampleHypoSize$, that the probability that key events do not all hold is at most $\totalFailure/2$, and that, if the key events do hold and the algorithm halts before $\numRounds$ rounds, then the probability that the algorithm outputs a far hypothesis $\hypOut$ is at most $\totalFailure/2$. 

Hence, if $\numsamples, \numRounds$ and $\sampleHypoSize$ satisfy the required constraints, we can bound the probability that the algorithm outputs a hypothesis $\hypOut$ greater than $(3\OPT+\totalError)$-far as:
\begin{align*}
    &\Pr{\farHypOut} \\
    & \quad = \Pr{\farHypOut \; \big| \; \text{key events occur}} \Pr{\text{key events occur}} \\
    & \quaaad + \Pr{\farHypOut \; \big| \; \text{\emph{not} key events occur}} \Pr{\text{\emph{not} key events occur}} \\
    &\quad \leq \totalFailure/2 \cdot 1 + 1 \cdot \totalFailure/2 \\
    &\quad = \totalFailure \;.
\end{align*}

\subsection{Sample complexity}
\label{sec:sampleComplexity}
In this section, we give exact settings of $\numsamples, \numRounds$, and $\sampleHypoSize$ that satisfy the constraints given throughout our proof of correctness. 

Recall that, throughout our proof of correctness, we make the following assumptions:

\begin{enumerate}
    \item To accurately estimate semi-distances: 
    \begin{equation}
        \numsamples \geq \frac{1}{2 \semiDisError^2} \log (12 \numHypotheses / \totalFailure)
    \end{equation}
    \item To accurately approximate prompting-ness:
    \begin{equation}
        \sampleHypoSize \geq \frac{12 \log (6 \numHypotheses \numRounds / \totalFailure)}{\promptingAcc}
    \end{equation}
    \item For \Call{Find-Prompting-Hypothesis}{} to succeed: 
    \begin{equation}
        \numsamples \geq \frac{64}{\liftBound \epsSVT} \log(12 \numHypotheses \numRounds / \totalFailure)
    \end{equation}
    \item To ensure the final output has a low proxy distance: \begin{equation}
        \totalError \geq \frac{8}{\expMechSensitivity}{\epsEXP} \log (4 \numHypotheses / \totalFailure)
    \end{equation}
    \item To bound the number of rounds the algorithm will execute: 
    \begin{equation}
        \exp \left( - \frac{\epsEXP \totalError'}{2 \expMechSensitivity} \right) < \frac{1}{2}
    \end{equation}
    \item To ensure the algorithm does not halt prematurely:
    \begin{equation}
    \numRounds \geq \frac{1}{\log \left( 1 + \promptingAcc'/2 \right)} \left(\log \numHypotheses + \frac{\epsEXP}{2 \expMechSensitivity} \OPT \right) \;.
\end{equation}
\end{enumerate}

Throughout the algorithm and analysis, we also have the following parameter settings: 

\begin{equation*}
    \begin{array}{ccccc}
        \semiDisError = \frac{\totalError}{4}, & \liftBound = \frac{\totalError}{4}, & \promptingAcc = \frac{\totalFailure}{4}, & \promptingAcc' = \frac{\promptingAcc}{4}, & \promptingScore = \frac{\liftBound}{2},
    \end{array}
\end{equation*}
\begin{equation*}
    \begin{array}{ccc}
        \epsSVT = \frac{\epsDP}{2\numRounds}, & \epsEXP = \frac{\epsDP}{2 (\sampleHypoSize \numRounds + 1)}, & \expMechSensitivity = \frac{1}{\numsamples}.
    \end{array}
\end{equation*}

Combining these settings with our above assumptions and constraints, we have the following set of requirements for $\numsamples, \numRounds$ and $\sampleHypoSize$:
\begin{equation*}
    \numsamples \geq \max \left( \frac{8}{ \totalError^2} \log \left( 12 \numHypotheses / \totalFailure \right), \frac{512 \numRounds}{\totalError \epsDP} \log \left( 12 \numHypotheses \numRounds / \totalFailure \right), \frac{16 \log(2) (\sampleHypoSize \numRounds + 1)}{\totalError \epsDP}, \frac{16 (\sampleHypoSize \numRounds + 1)}{\totalError \epsDP} \log \left( 4 \numHypotheses / \totalFailure \right) \right),
\end{equation*}
\begin{equation*}
    \numRounds \geq \frac{1}{\log \left( 1 + \frac{\totalFailure}{32} \right)} \left(\log \numHypotheses + \frac{\epsDP \numsamples}{4(\sampleHypoSize \numRounds + 1)} \OPT \right) \text{, and}
\end{equation*}
\begin{equation*}
    \sampleHypoSize \geq \frac{48}{\totalFailure} \log \left( 6 \numHypotheses \numRounds / \totalFailure \right).
\end{equation*}

We claim that the following settings of $\numsamples, \numRounds$, and $\sampleHypoSize$ will satisfy these requirements:

\begin{equation}
    \label{eqn:choiceOfS}
    \numsamples = \frac{32 \cdot 96 \cdot 33 \cdot 16}{\totalFailure^2 \totalError^2 \epsDP} \log^3 \left( 6 \numHypotheses / \totalFailure \right), 
\end{equation}
\begin{equation}
    \label{eqn:choiceOfT}
    \numRounds = \min \left( \frac{33 \cdot 16}{\totalFailure \totalError} \log \left( 6 \numHypotheses / \totalFailure \right), \numHypotheses \right), 
\end{equation} 
\begin{equation}
    \label{eqn:choiceOfk}
    \sampleHypoSize = \frac{96}{\totalFailure} \log \left( 6\numHypotheses / \totalFailure \right) \;.
\end{equation}

\paragraph{$\mathbf{\numsamples}$ satisfies constraints}

As both $\numRounds$ and $\sampleHypoSize$ are logarithmic in $\numHypotheses$, the fourth argument in the constraint on $\numsamples$ will dominate. For completeness, we show that each argument of this constraint is less than our choice of $\numsamples$.

Beginning with the first argument, we have:
\begin{equation*}
    \numsamples = \frac{32 \cdot 96 \cdot 33 \cdot 16}{\totalFailure^2 \totalError^2 \epsDP} \log^3 \left( 6 \numHypotheses / \totalFailure \right) \geq \frac{8}{\totalError^2} \left( \log \left(6 \numHypotheses / \totalFailure \right) + \log 2 \right)  = \frac{8}{\totalError^2} \log \left(12 \numHypotheses / \totalFailure  \right) \;. 
\end{equation*}

For the second argument, we have:
\begin{align}
    \numsamples 
    &= \frac{32 \cdot 96 \cdot 33 \cdot 16}{\totalFailure^2 \totalError^2 \epsDP} \log^3 \left( 6 \numHypotheses / \totalFailure \right) \nonumber \\
    &= \frac{32}{\totalError \epsDP} \log \left( 6 \numHypotheses / \totalFailure \right) \cdot \frac{96}{\totalFailure} \log \left( 6\numHypotheses / \totalFailure \right) \cdot \frac{33 \cdot 16}{\totalFailure \totalError} \log \left( 6 \numHypotheses / \totalFailure \right) \nonumber \\
    &= \frac{32 \cdot 96 \numRounds}{\totalFailure \totalError \epsDP} \log \left( 6 \numHypotheses / \totalFailure \right)  \log \left( 6\numHypotheses / \totalFailure \right) \nonumber \\
    &\geq \frac{32 \cdot 48 \numRounds}{\totalFailure \totalError \epsDP} \left( \log \left( 6 \numHypotheses / \totalFailure \right) + \log 2 \right) \log \left( 6\numHypotheses / \totalFailure \right) \nonumber \\
    &\geq \frac{32 \cdot 48 \numRounds}{\totalFailure \totalError \epsDP} \log \left( 12 \numHypotheses / \totalFailure \right) \log \left( 6\numHypotheses / \totalFailure \right) \;. \label{eqn:numSamplesArg2_1}
\end{align}

Then, because $\numHypotheses \geq \numRounds$, we have the following sequence of inequalities, continued from Equation~\ref{eqn:numSamplesArg2_1}:
\begin{align*}
    \numsamples
    &\geq \frac{32 \cdot 24 \numRounds}{\totalFailure \totalError \epsDP} \log \left( \left(12 \numHypotheses / \totalFailure \right)^2 \right) \log \left( 6\numHypotheses / \totalFailure \right) \\
    &\geq \frac{32 \cdot 24 \numRounds}{\totalFailure \totalError \epsDP} \log \left( 12 \numHypotheses^2 / \totalFailure  \right) \log \left( 6\numHypotheses / \totalFailure \right) \\
    &\geq \frac{32 \cdot 24 \numRounds}{\totalFailure \totalError \epsDP} \log \left( 12 \numHypotheses \numRounds / \totalFailure  \right) \log \left( 6\numHypotheses / \totalFailure \right) \\
    &\geq \frac{512 \numRounds}{\totalError \epsDP} \log \left( 12 \numHypotheses \numRounds / \totalFailure  \right) \;. 
\end{align*}
Note that the third argument is entirely subsumed by the fourth. Therefore, for both the third and fourth argument, we have:
\begin{align*}
    \numsamples &= \frac{32 \cdot 96 \cdot 33 \cdot 16}{\totalFailure^2 \totalError^2 \epsDP} \log^3 \left( 6 \numHypotheses / \totalFailure \right) \\
    &= \frac{32}{\totalError \epsDP} \log \left( 6 \numHypotheses / \totalFailure \right) \cdot \frac{96}{\totalFailure} \log \left( 6\numHypotheses / \totalFailure \right) \cdot \frac{33 \cdot 16}{\totalFailure \totalError} \log \left( 6 \numHypotheses / \totalFailure \right) \\
    &\geq \frac{32 \sampleHypoSize \numRounds}{\totalError \epsDP} \log \left( 6 \numHypotheses / \totalFailure \right) \\
    &\geq \frac{16 (\sampleHypoSize \numRounds + 1)}{\totalError \epsDP} \log( 4 \numHypotheses / \totalFailure) \\
    &\geq \frac{16 \log(2) (\sampleHypoSize \numRounds + 1)}{\totalError \epsDP} \;.
\end{align*}

\paragraph{$\mathbf{\numRounds}$ satisfies constraints}

We now show that our choice of $\numRounds$ exceeds the bound established in Theorem~\ref{thm:algTerminates}. First, observe that we can lower bound $\numRounds$ by:
\begin{equation}
    \numRounds = \frac{33 \cdot 16}{\totalFailure \totalError} \log(6 \numHypotheses / \totalFailure) \geq \frac{33}{\totalFailure} \left( \log \numHypotheses + \frac{8}{\totalError} \log(6 \numHypotheses / \totalFailure) \right) \;.
\end{equation}

We can then expand the second term of this equation and rewrite it in terms of $\numsamples, \numRounds$ and $\sampleHypoSize$, yielding:
\begin{align*}
    \numRounds 
    &\geq \frac{33}{\totalFailure} \left( \log \numHypotheses + \frac{\epsDP}{4} \cdot 
    \frac{32 \cdot 96 \cdot 33 \cdot 16}{\totalFailure^2 \totalError^2 \epsDP} \log^3 \left( 6 \numHypotheses / \totalFailure \right) 
    \cdot \frac{\totalFailure}{96 \log \left( 6\numHypotheses / \totalFailure \right)} 
    \cdot \frac{\totalFailure \totalError}{33 \cdot 16 \log(6 \numHypotheses / \totalFailure)} \right) \\
    &= \frac{33}{\totalFailure} \left( \log \numHypotheses + \frac{\epsDP}{4} \cdot \numsamples \cdot \frac{1}{\sampleHypoSize} \cdot \frac{1}{\numRounds} \right) \\
    &= \frac{33}{\totalFailure} \left( \log \numHypotheses + \frac{\epsDP \numsamples}{4 \sampleHypoSize \numRounds} \right) \\
    &\geq \frac{33}{\totalFailure} \left( \log \numHypotheses + \frac{\epsDP \numsamples}{4 (\sampleHypoSize \numRounds + 1)} \right) \;.
\end{align*}
Applying the fact that $\totalFailure < 1$ and the inequality $\log(1+x) \geq \frac{x}{1+x}$ for all $x>-1$, we can then show:
\begin{align*}
    \numRounds &\geq \frac{32}{\totalFailure} \left( 1 + \frac{1}{32} \right) \left( \log \numHypotheses + \frac{\epsDP \numsamples}{4 (\sampleHypoSize \numRounds + 1)} \right) \\
    &> \frac{32}{\totalFailure} \left( 1 + \frac{\totalFailure}{32} \right) \left( \log \numHypotheses + \frac{\epsDP \numsamples}{4 (\sampleHypoSize \numRounds + 1)} \right) \\
    &\geq \frac{1 + \totalFailure / 32}{\totalFailure / 32}  \left( \log \numHypotheses + \frac{\epsDP \numsamples}{4 (\sampleHypoSize \numRounds + 1)} \right) \\
    &= \frac{1 + \totalFailure / 32}{\totalFailure / 32}  \left( \log \numHypotheses + \frac{\epsDP \numsamples}{4 (\sampleHypoSize \numRounds + 1)} \right) \\
    &\geq \frac{1}{\log \left( 1 + \totalFailure / 32 \right)} \left( \log \numHypotheses + \frac{\epsDP \numsamples}{4 (\sampleHypoSize \numRounds + 1)} \right) \;.
\end{align*}
Finally, because $\OPT \leq 1$, we know $\numRounds$ must be greater than the number of rounds required for the algorithm to terminate.
\begin{equation*}
    \numRounds > \frac{1}{\log \left( 1 + \totalFailure / 32 \right)} \left( \log \numHypotheses + \frac{\epsDP \numsamples \cdot \OPT}{4 (\sampleHypoSize \numRounds + 1)} \right) \;.
\end{equation*}

\paragraph{$\mathbf{\sampleHypoSize}$ satisfies constraints} 
Recall that $\numRounds$ must be less than $\numHypotheses$, as we do not allow any hypothesis to be added to the prompting set more than once. Then, our choice of $\sampleHypoSize$ satisfies the constraint as follows:
\begin{equation*}
    \sampleHypoSize = \frac{96}{\totalFailure} \log \left(6 \numHypotheses / \totalFailure \right) 
    = \frac{48}{\totalFailure} \log \left((6 \numHypotheses / \totalFailure)^2 \right) 
    \geq \frac{48}{\totalFailure} \log (6 \numHypotheses^2 / \totalFailure) 
    \geq \frac{48}{\totalFailure} \log (6 \numHypotheses \numRounds / \totalFailure) \;.
\end{equation*}

\section{Proof of Time Complexity of Algorithm~\ref{alg::wrapper}}
\label{sec:timeComplexity}

In this section, we prove the time complexity of Algorithm~\ref{alg::wrapper}. 

\begin{lemma}
    Algorithm \ref{alg::wrapper} takes $\Theta \left(\min\left( \frac{1}{\totalFailure^4 \totalError^3 \epsDP} \cdot \numHypotheses \cdot \log^5(\numHypotheses / \totalFailure)\,, \ \  
    \frac{1}{\totalFailure^3 \totalError^2 \epsDP} \cdot \numHypotheses^2 \cdot \log^4(\numHypotheses / \totalFailure)
    \right)\right)$ time. 
\end{lemma}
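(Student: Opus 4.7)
The plan is to bound the work done per iteration of the outer \textbf{for} loop in Algorithm~\ref{alg::wrapper}, multiply by the maximum number of iterations $\numRounds$, and then exploit the fact that $\numRounds = \min(\Theta(\log(\numHypotheses/\totalFailure)/(\totalFailure\totalError)), \numHypotheses)$ is itself defined as a minimum; the two branches of that minimum will produce the two expressions inside the claimed $\min$.

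For a single iteration, I would split the work into four pieces. First, rebuilding $\SampleHypo$ in Line~\ref{line:createQ} from the current proxy distances costs $O(\numHypotheses)$, since each probability is a simple exponential of a scalar. Second, drawing $\sampleHypoSize$ samples from $\SampleHypo$ in Line~\ref{line:sampleK} costs $O(\numHypotheses + \sampleHypoSize)$ via a standard CDF/binary-search construction. Third, \textsc{Find-Prompting-Hypothesis} invokes the SVT, which traverses up to $\numHypotheses$ candidates $H_i \in \totalhyposet \setminus \promptingSet$ and evaluates $\score{H_i}$ for each. Fourth, when the round returns a prompting hypothesis, the proxy-distance update in Line~\ref{line:updateProxyDistances} computes $\EmpSemiDis{i_\round}{j}$ for every $j \in [\numHypotheses]$ at cost $O(\numHypotheses \cdot \numsamples)$. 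The third piece dominates: for each candidate $H_i$, computing $\score{H_i}$ requires $\sampleHypoSize$ empirical semi-distance evaluations $\EmpSemiDis{i}{j_k}$ against $\SampleHypoList$ (each query costing $\Theta(\numsamples)$), followed by a linear-in-$\sampleHypoSize$ selection for the $\ceil{\promptingAcc/2 \cdot \sampleHypoSize}$-th order statistic. This gives $O(\sampleHypoSize \cdot \numsamples)$ per candidate and $O(\numHypotheses \cdot \sampleHypoSize \cdot \numsamples)$ per round, which subsumes the other three pieces.

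Combining these observations, the total runtime is $O(\numRounds \cdot \numHypotheses \cdot \sampleHypoSize \cdot \numsamples)$. Substituting the parameter choices from Equations~\eqref{eqn:choiceOfS}--\eqref{eqn:choiceOfk}, the branch $\numRounds = \Theta(\log(\numHypotheses/\totalFailure)/(\totalFailure\totalError))$ yields $\Theta(\numHypotheses \cdot \log^5(\numHypotheses/\totalFailure) / (\totalFailure^4 \totalError^3 \epsDP))$, and the branch $\numRounds = \numHypotheses$ yields $\Theta(\numHypotheses^2 \cdot \log^4(\numHypotheses/\totalFailure) / (\totalFailure^3 \totalError^2 \epsDP))$; the claim is their minimum. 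The hard part is purely bookkeeping: verifying that no ancillary step (SVT noise generation, the linear-time order-statistic selection, evaluating the normalizer of the exponential mechanism, or the sampler from $\SampleHypo$) contributes an extra factor large enough to shift any of the log-exponents, and confirming that Lines~\ref{line:createQ}--\ref{line:sampleK} can truly be implemented within $O(\numHypotheses + \sampleHypoSize)$. Once those are checked, plugging the concrete values of $\numsamples$ and $\sampleHypoSize$ into both branches of $\numRounds$ is a routine calculation.
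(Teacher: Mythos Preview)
Your proposal is correct and follows essentially the same route as the paper: both identify the per-round cost as dominated by \textsc{Find-Prompting-Hypothesis} at $\Theta(\numHypotheses \cdot \sampleHypoSize \cdot \numsamples)$, multiply by $\numRounds$, and then substitute the parameter settings, with the two branches of the $\min$ in $\numRounds$ yielding the two claimed expressions. The only cosmetic differences are that the paper charges $\Theta(\numHypotheses \cdot \numsamples)$ for building $\SampleHypo$ (versus your $O(\numHypotheses)$) and uses sorting rather than linear-time selection inside \textsc{Compute-Score}; neither affects the dominant term.
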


\begin{proof}
    We walk through each step in Algorithm \ref{alg::wrapper}. Recall that each semi-distance query $\EmpSemiDis{i}{j}$ takes $\Theta(\numsamples)$ to compute.

    First, we draw $\numsamples$ samples from $P$. As we go through the algorithm, we estimate the semi-distances $\EmpSemiDis{i}{j}$ as needed using these $\numsamples$ samples. We begin by setting each $\lowerestimate{H_i}$ to be $0$. The process of drawing samples and initializing proxy estimates takes $\Theta(\numsamples + \numHypotheses)$ time. 
    
    The algorithm runs in at most $\numRounds$ rounds. During a single round, it performs the following actions:
    \begin{enumerate}
        \item Creates $\SampleHypo$ and draws $\sampleHypoSize$ samples from the exponential mechanism. Computing the probability of every $H_i$ according to $\SampleHypo$ takes $\Theta\left(\numHypotheses \cdot \numsamples \right)$ time. Obtaining $\sampleHypoSize$ samples from $\SampleHypo$ can be done in $\Theta(\numHypotheses + \sampleHypoSize \cdot \log \sampleHypoSize)$ time via computing the CDF and inverse sampling.
        \item Invokes \textsc{Find-Prompting-Hypothesis} (Algorithm~\ref{alg::SVT}). For each hypothesis $H_j \in \totalhyposet$, we compute its $\score{H_j}$. Computing the score involves computing the lift value of every hypothesis in $\SampleHypoList$ and then sorting, which takes $\Theta( \sampleHypoSize \cdot \numsamples + \sampleHypoSize \cdot \log \sampleHypoSize)$ time. Notice that our choice of $\numsamples$ in Equation~\ref{eqn:choiceOfS} dominates over our choice of $\log \sampleHypoSize$. Therefore, $\textsc{Find-Prompting-Hypothesis}$ takes $\Theta(\sampleHypoSize  \cdot \numHypotheses \cdot \numsamples)$ time. 
        
        \item Updates all $\numHypotheses$ proxy distances (unless this is the last round). This takes $\Theta(\numHypotheses \cdot \numsamples)$ time.
    \end{enumerate}
    Therefore, each round takes $\Theta(\sampleHypoSize  \cdot \numHypotheses \cdot \numsamples)$.
    
    Finally, when the algorithm halts, it samples a distribution from $\SampleHypo$ as output. Given there are at most $\numRounds$ iterations, the total time spent is $ \Theta(\numRounds \cdot \sampleHypoSize \cdot  \numHypotheses \cdot \numsamples)$. Substituting the values of $\numsamples, \numRounds$, and $\sampleHypoSize$ (Equations~\ref{eqn:choiceOfS},~\ref{eqn:choiceOfT},~\ref{eqn:choiceOfk}) yields the desired result.
\end{proof}

\section{Computing the Prompting Scores}

\noindent The wrapper algorithm iteratively seeks a hypothesis $H_i \in \totalhyposet$ that can lift a significant portion of other hypotheses in $\totalhyposet$. Concretely, we want to characterize the following cases, when $H_j$ is drawn from $\SampleHypo$:
\begin{equation}
\label{eq:cases}
    \Pr[H_{j} \sim \SampleHypo]{\hat{w}_i (H_{j}) - \lowerestimate{H_{j}} \geq \promptingScore} \geq \promptingAcc 
    \;\, \text{vs.}\;\,
    \Pr[H_{j} \sim \SampleHypo]{\hat{w}_i (H_{j}) - \lowerestimate{H_{j}} \geq \promptingScore} < \frac{\promptingAcc}{4}.
\end{equation}

\noindent Computing the exact probabilities in Equation~\ref{eq:cases} is costly, so we estimate them by sampling. Specifically, we draw a list of hypotheses $\SampleHypoList = [H_{j_1},\cdots,H_{j_\sampleHypoSize}]$ from $\SampleHypo$, where $\sampleHypoSize$ is the number of samples. We compute the empirical fraction of hypotheses in $\SampleHypoList$ that $H_i$ lifts significantly. If $\sampleHypoSize$ is sufficiently large, then the empirical estimate closely approximates the probability. 

\noindent However, if we naively count the number of hypotheses in $\SampleHypoList$ who have lift values above a threshold $\promptingScore$, then the result is highly sensitive: a single change in the dataset could shift every lift value from below $\promptingScore$ to above $\promptingScore$, thus shifting the count from $0$ to $n$. To reduce sensitivity, we instead calculate an empirical quantile of the lift values. Specifically, we use the $\promptingAcc/2$-quantile of all lift values that $H_i$ induces for sampled hypotheses in $\SampleHypoList$. We call this value $\score{H_i}$, which is an approximation of the probabilities in Equation~\ref{eq:cases} and it allows us to distinguish between the two cases. In particular, if $\score{H_i}$ is $\promptingAcc/4$-close to the true probability, then we guarantee:
\begin{itemize}
    \item If $H_i$ can $\promptingScore$-lift $H_j$ with probability at least $\promptingAcc$, then $\score{H_i}$ is at least $\promptingScore$. 
    \item If $H_i$ can $\promptingScore / 2$ -lift $H_j$ with probability less than $\promptingAcc/4$, then $\score{H_i}$ is less than $\promptingScore / 2$. 
\end{itemize}

\noindent In Section~\ref{sec:score}, we prove in Lemma~\ref{lemma:scoresensitivity} that  
the output of Algorithm~\ref{alg:computeScore} has low sensitivity with respect to changes in the input dataset $D$. In Section~\ref{sec:SampleListGenScore}, we prove that if the number of hypotheses sampled is large enough, then the value of $\score{H_i}$ accurately approximates how often $H_i$ could lift hypotheses in $\totalhyposet$ by $\promptingScore$.

\begin{algorithm}
\begin{algorithmic}[1]
    \Procedure{Compute-Score}{$H_i$,\  $\promptingAcc$,\  $\SampleHypoList$,\  $\dataset$}
    \State $\TT = []$ 
        \Comment{initialize a list to store lift values induced by $H_i$ for each $H_{j_\ell} \in \SampleHypoList$}
    \For{$ H_{j_\ell} \in \SampleHypoList$} 
        \State Append $\EmpSemiDis{i}{j_\ell} - \hat{W}(H_{j_\ell})$ to $\TT$ \Comment{assume query access to $\hat{w}_i(H_{j_\ell})$, access to $\hat{W}(H_{j_\ell})$}
    \EndFor
    \State Sort $\TT$ in non-increasing order 
    \State \Return{$\TT[\ceil{\promptingAcc/2 \cdot |\KK|}]$} 
        \Comment{return $\ceil{\promptingAcc/2 \cdot |\KK|}$-th largest lift value}
    \EndProcedure
\end{algorithmic}
\caption{Compute $\score{H_i}$} \label{alg:computeScore}
\end{algorithm}

\subsection{Sensitivity of the score} \label{sec:score}

In the following lemmas, we compute sensitivities to support the privacy analysis of our scoring mechanism. Throughout, we consider sensitivity with respect to the private dataset \(\dataset\); all other inputs to each function are assumed to be public and fixed. Lemma~\ref{lemma:quantilesensitivity} shows that any quantile of a sorted list can change by at most the size of the individual perturbations of elements in that list.
Lemma~\ref{lemma:semidistanceSensitivity} shows the sensitivity of the empirical semi-distance \(\EmpSemiDis{i}{j}\) by \(1/\numsamples\), and Lemma~\ref{lemma:estimateSensitivity} shows that the sensitivity of \(\lowerestimate{H_j}\) is also \(1/\numsamples\). Finally, Lemma~\ref{lemma:scoresensitivity} combines these results to prove that the overall score function \(\score{H_i}\) has sensitivity \(2/\numsamples\).

\begin{lemma} \label{lemma:quantilesensitivity}
    Let $x = [x_1,\dots, x_n]$ be a sorted non-increasing list. For all $i \in [n]$, let $x_i' := x_i + \delta_i$, where $|\delta_i| \leq \Delta$. Sort the set $\{x_i'\}_{i=1}^n$ into a non-increasing list $y = [y_1,..., y_n]$. Then, for all $i \in [n]$, $|y_i - x_i| \leq \Delta$.
\end{lemma}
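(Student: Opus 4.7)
The plan is to prove the two-sided bound $x_i - \Delta \leq y_i \leq x_i + \Delta$ by a counting argument in each direction, using the fact that the original list is already sorted non-increasingly.

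For the upper bound $y_i \leq x_i + \Delta$, I would consider the set of indices $J_i^{\geq} := \{i, i+1, \dots, n\}$, which has size $n - i + 1$. Since $x$ is sorted non-increasingly, every $x_j$ with $j \in J_i^{\geq}$ satisfies $x_j \leq x_i$, and thus the perturbed value satisfies $x_j' = x_j + \delta_j \leq x_i + \Delta$. This produces $n - i + 1$ perturbed values that are all at most $x_i + \Delta$. Because $y$ is the non-increasing rearrangement of the perturbed values, at least $n - i + 1$ entries of $y$ are at most $x_i + \Delta$; these must occupy the last $n - i + 1$ positions $y_i, y_{i+1}, \dots, y_n$, so in particular $y_i \leq x_i + \Delta$.

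For the lower bound $y_i \geq x_i - \Delta$, the argument is symmetric. I would consider $J_i^{\leq} := \{1, 2, \dots, i\}$, which has size $i$. Every $x_j$ for $j \in J_i^{\leq}$ satisfies $x_j \geq x_i$, and so $x_j' \geq x_i - \Delta$. This yields $i$ perturbed values all at least $x_i - \Delta$, which must form a subset of the top $i$ entries of $y$; in particular $y_i \geq x_i - \Delta$. Combining both directions yields $|y_i - x_i| \leq \Delta$.

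The lemma is essentially a standard rearrangement/pigeonhole fact, so there is no deep obstacle; the only care needed is the index bookkeeping, namely correctly translating ``there exist at least $k$ perturbed values on this side of $x_i \pm \Delta$'' into a statement about the specific position $y_i$ in the sorted list. Handling ties between perturbed values does not cause trouble, since the counting bounds hold regardless of how the sort breaks ties.
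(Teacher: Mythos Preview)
Your argument is correct and is essentially the same counting/pigeonhole idea as the paper's proof: both hinge on the observation that for every index $j \geq i$ one has $x_j' \leq x_i + \Delta$ (and symmetrically for $j \leq i$), which forces the $i$-th order statistic of the perturbed list to lie within $\Delta$ of $x_i$. The paper phrases the upper bound as a proof by contradiction (assuming $y_i > x_i + \Delta$ and deriving that the set $\{j: x_j' > x_i + \Delta\}$ would need at least $i$ elements while only indices $j < i$ can belong to it), whereas you give the equivalent direct version; the content is the same.
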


\begin{proof}
    Fix $i \in [n]$. We claim $y_i \leq x_i + \Delta$. Suppose for a contradiction that $y_i > x_i + \Delta$. Define the set of indices
    \[S := \{j \in [n]: x_j' > x_i + \Delta\}.\]
    \noindent Because $y$ is ordered, there are $i$ values in $y$, namely $y_1, \dots, y_i$, that must be greater than $x_i + \Delta$. Therefore, there must be at least $i$ values of $j$ such that $x_j' > x_i + \Delta$, so $|S| \geq i$. However, notice that if $k \geq i$, then $x_k' \notin S$ because
    \[x_k' \leq x_k + \Delta \leq x_i + \Delta.\]
    \noindent Therefore, only indices $j < i$ can be contained in $S$, which is a contradiction since there are only $i - 1$ such indices. The other direction $y_i \geq x_i - \Delta$ is proved similarly. To verify that this bound is tight, consider $\delta_i = \Delta$ for all $i \in [n]$.
\end{proof}

\begin{lemma} \label{lemma:semidistanceSensitivity}
    Let $H_i, H_j$ be two hypotheses in $\totalhyposet$. With respect to the dataset $\dataset$, the sensitivity of $\EmpSemiDis{i}{j}$ is $1/\numsamples$.
\end{lemma}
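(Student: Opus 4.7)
The plan is to observe that the only dataset-dependent term in $\EmpSemiDis{i}{j} = |H_j(\SS_{i,j}) - \hat{P}(\SS_{i,j})|$ is the empirical mass $\hat{P}(\SS_{i,j})$; the quantity $H_j(\SS_{i,j})$ is determined entirely by the publicly known hypothesis $H_j$ and Scheff\'e set $\SS_{i,j}$, and so is fixed across neighboring datasets. Thus the sensitivity of $\EmpSemiDis{i}{j}$ reduces to bounding how much the outer absolute value can shift when $\hat{P}(\SS_{i,j})$ changes.

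First I would fix neighboring datasets $D = [x_1,\dots,x_s]$ and $D'$ that differ in exactly one coordinate, say the $k$-th. Writing $\hat{P}(\SS_{i,j}) = \frac{1}{s}\sum_{\ell=1}^s \Indicator{x_\ell \in \SS_{i,j}}$ and analogously for $\hat{P}'$, all terms with $\ell \neq k$ cancel, and the differing term contributes at most $1/s$ in absolute value since each indicator lies in $\{0,1\}$. Hence $|\hat{P}(\SS_{i,j}) - \hat{P}'(\SS_{i,j})| \leq 1/s$.

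Next I would apply the reverse triangle inequality $\bigl||a-b| - |a-c|\bigr| \leq |b-c|$ with $a = H_j(\SS_{i,j})$, $b = \hat{P}(\SS_{i,j})$, and $c = \hat{P}'(\SS_{i,j})$ to conclude
\[
\bigl|\EmpSemiDis{i}{j}(D) - \EmpSemiDis{i}{j}(D')\bigr| \;\leq\; \bigl|\hat{P}(\SS_{i,j}) - \hat{P}'(\SS_{i,j})\bigr| \;\leq\; \tfrac{1}{s}.
\]
This gives the upper bound on the sensitivity. To see that the bound is tight, I would exhibit a witness: take any $D$ whose $k$-th sample $x_k$ lies in $\SS_{i,j}$ and construct $D'$ by replacing $x_k$ with a point outside $\SS_{i,j}$ (assuming such points exist in $\XX$, which is the non-degenerate case). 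Then $\hat{P}'(\SS_{i,j}) = \hat{P}(\SS_{i,j}) - 1/s$, and one can choose $H_j(\SS_{i,j})$ relative to $\hat{P}(\SS_{i,j})$ so that the two absolute values differ by exactly $1/s$. There is essentially no obstacle here: the argument is a one-line reduction to the sensitivity of a sample mean of bounded random variables, and the only subtlety is remembering to use the reverse (rather than the ordinary) triangle inequality to peel off the outer absolute value without losing the tight constant.
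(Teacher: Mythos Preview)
Your proposal is correct and follows essentially the same approach as the paper: observe that $H_j(\SS_{i,j})$ is dataset-independent, bound the change in $\hat{P}(\SS_{i,j})$ by $1/s$, and conclude. Your version is simply more explicit about the reverse triangle inequality and the tightness witness, both of which the paper leaves implicit in its two-sentence argument.
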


\begin{proof}
    Notice that $H_{j}(\SS_{i, j})$ has no dependence on $\dataset$. However, $\hat{P}(\SS_{i, j})$ can vary by at most $1/\numsamples$ depending on if the differing data point is in $\SS_{i, j}$. Therefore, $\Delta(\hat{w}_i(H_{j})) = 1/\numsamples$.
\end{proof}

\begin{lemma} \label{lemma:estimateSensitivity}
    Let $H_j \in \totalhyposet$ and $A \subseteq \totalhyposet$. With respect to the dataset $\dataset$, the sensitivity of $\lowerestimate{H_{j}} = \max_{H_k \in A} \EmpSemiDis{k}{i}$ is $1/\numsamples$.
\end{lemma}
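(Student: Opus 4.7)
The plan is to show that $\lowerestimate{H_j}$, being a maximum of empirical semi-distances over the fixed (public) set $A$, inherits the $1/\numsamples$ sensitivity established for each individual $\EmpSemiDis{k}{j}$ in Lemma~\ref{lemma:semidistanceSensitivity}. The only tool needed beyond that lemma is the standard fact that taking a finite maximum of real-valued functions preserves (does not increase) the sensitivity.

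Concretely, I would fix two neighboring datasets $D, D' \in \XX^{\otimes \numsamples}$ with $\ham(D,D') = 1$ and write $\lowerestimate{H_j}(D) = \max_{H_k \in A} \EmpSemiDis{k}{j}(D)$ to emphasize the dependence on the dataset. Let $k^\star$ be an index achieving the maximum on $D$ and $k^{\star\star}$ an index achieving the maximum on $D'$. Then
\[
\lowerestimate{H_j}(D) - \lowerestimate{H_j}(D') = \EmpSemiDis{k^\star}{j}(D) - \EmpSemiDis{k^{\star\star}}{j}(D') \leq \EmpSemiDis{k^\star}{j}(D) - \EmpSemiDis{k^\star}{j}(D') \leq \frac{1}{\numsamples},
\]
where the first inequality uses that $k^{\star\star}$ maximizes on $D'$ (so replacing it by $k^\star$ can only decrease $\EmpSemiDis{\cdot}{j}(D')$), and the second is Lemma~\ref{lemma:semidistanceSensitivity}. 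Swapping the roles of $D$ and $D'$ gives the symmetric bound, so $|\lowerestimate{H_j}(D) - \lowerestimate{H_j}(D')| \leq 1/\numsamples$.

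There is no real obstacle—$A$ is public (it stores previously selected prompting hypotheses, whose identities are part of the algorithm's output and thus independent of the dataset for the purposes of this sensitivity computation), so the maximum is taken over a fixed index set and the standard max-preserves-sensitivity argument applies directly. The only subtlety worth flagging in the write-up is justifying that $A$ can be treated as public input when measuring sensitivity of a single evaluation of $\lowerestimate{H_j}$; this is consistent with how the sensitivity is used later inside the exponential mechanism in Line~\ref{line:createQ} of Algorithm~\ref{alg::wrapper}, whose privacy is handled by composition across rounds rather than by baking the evolution of $A$ into the sensitivity of a single query.
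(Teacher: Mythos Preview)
Your argument is correct and matches the paper's idea: both rely on Lemma~\ref{lemma:semidistanceSensitivity} to bound each $\EmpSemiDis{k}{j}$ by $1/\numsamples$ and then use that a finite maximum cannot amplify this. The only difference is packaging---the paper observes that the maximum is a particular quantile and invokes Lemma~\ref{lemma:quantilesensitivity}, whereas you write out the two-line argmax comparison directly; the underlying inequality is the same, and neither route offers any real advantage over the other for this lemma.
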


\begin{proof}
    Notice that $\lowerestimate{H_{j}}$ is a maximum taken over a set of empirical semi-distances. Since the maximum is a particular quantile, by Lemma~\ref{lemma:quantilesensitivity} and Lemma~\ref{lemma:semidistanceSensitivity}, the sensitivity of $\lowerestimate{H_{j}}$ is $1/\numsamples$.
\end{proof}

\begin{lemma} \label{lemma:scoresensitivity}
    Fix $H_i \in \totalhyposet$. Let $\SampleHypoList$ be a public list that consists of hypotheses in $\totalhyposet$. In other words, consider $\SampleHypoList$ to be given and fixed. With respect to the dataset $\dataset$, the sensitivity of $\score{H_i}$ is $2/\numsamples$. Precisely,
    \[\scoresensitivity(\score{H_i}) =\sup_{\substack{D, D' \in \XX^{\otimes \numsamples} \\ \ham (D, D') = 1}}|\score{H_i} - \textsf{score}_{\eta, \KK, \dataset'}(H_i)| = 2/\numsamples.\]
    %%% CHANGE THE ABOVE WHEN CHANGING SCORE MACRO
\end{lemma}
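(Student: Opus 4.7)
The plan is to decompose $\score{H_i}$ into a quantile over a list of lift values, bound the per-entry perturbation of each lift value when one sample of $\dataset$ is changed, and then invoke Lemma~\ref{lemma:quantilesensitivity} to transfer that per-entry bound to the sorted quantile.

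First, I would fix neighboring datasets $\dataset, \dataset'$ with $\ham(\dataset,\dataset')=1$ and an arbitrary $H_{j_\ell}\in\SampleHypoList$. The $\ell$-th entry of the list $\TT$ constructed inside Algorithm~\ref{alg:computeScore} is $\EmpSemiDis{i}{j_\ell}-\lowerestimate{H_{j_\ell}}$. By Lemma~\ref{lemma:semidistanceSensitivity}, $\EmpSemiDis{i}{j_\ell}$ shifts by at most $1/\numsamples$ when moving from $\dataset$ to $\dataset'$, and by Lemma~\ref{lemma:estimateSensitivity}, $\lowerestimate{H_{j_\ell}}$ shifts by at most $1/\numsamples$. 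A triangle inequality then gives
\[\left| \left(\EmpSemiDis{i}{j_\ell}(\dataset)-\lowerestimate{H_{j_\ell}}(\dataset)\right) - \left(\EmpSemiDis{i}{j_\ell}(\dataset')-\lowerestimate{H_{j_\ell}}(\dataset')\right) \right| \le \frac{2}{\numsamples}.\]
Since $\SampleHypoList$ is treated as public and fixed, this bound holds for every index $\ell$ simultaneously, so each entry of the unsorted list $\TT$ can be written as its value on $\dataset$ plus a perturbation of magnitude at most $2/\numsamples$.

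Second, I would apply Lemma~\ref{lemma:quantilesensitivity} with $\Delta=2/\numsamples$ to the two sorted (non-increasing) versions of $\TT$ arising from $\dataset$ and $\dataset'$. That lemma guarantees that for every position $k$, the $k$-th largest entries of the two sorted lists differ by at most $\Delta$. Taking $k=\ceil{\eta/2\cdot|\SampleHypoList|}$, which is exactly the index Algorithm~\ref{alg:computeScore} returns, yields $|\score{H_i}(\dataset)-\score{H_i}(\dataset')|\le 2/\numsamples$, and taking the supremum over all neighboring pairs gives $\scoresensitivity(\score{H_i})\le 2/\numsamples$.

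Finally, to show the bound is tight rather than merely an upper bound, I would exhibit a worst case: choose $\dataset,\dataset'$ differing on a single point $x$ chosen so that, for every $H_{j_\ell}$ in $\SampleHypoList$, the flip of $x$ simultaneously pushes $\hat{P}(\SS_{i,j_\ell})$ in the direction that increases $\EmpSemiDis{i}{j_\ell}$ by $1/\numsamples$ while decreasing $\lowerestimate{H_{j_\ell}}$ by $1/\numsamples$ (the latter achieved because the maximizing member of $\promptingSet$ for $\lowerestimate{H_{j_\ell}}$ contributes the opposite sign). Then every entry of $\TT$ shifts by exactly $2/\numsamples$ in the same direction, so the quantile also shifts by exactly $2/\numsamples$. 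This matches the upper bound and gives equality. I expect the only subtle step to be ensuring the per-entry bound truly composes as claimed: the triangle inequality on $|a-b|-|a'-b'|$ was already handled inside the proofs of Lemmas~\ref{lemma:semidistanceSensitivity} and \ref{lemma:estimateSensitivity}, so no additional work is required beyond a careful invocation of Lemma~\ref{lemma:quantilesensitivity}.
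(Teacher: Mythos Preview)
Your proposal is correct and follows essentially the same argument as the paper: bound each lift value's sensitivity by $2/\numsamples$ via Lemmas~\ref{lemma:semidistanceSensitivity} and~\ref{lemma:estimateSensitivity}, then invoke Lemma~\ref{lemma:quantilesensitivity} on the returned quantile. You additionally sketch a tightness construction to justify the equality in the statement, which the paper's own proof does not address.
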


\begin{proof}
    Consider the sensitivity of the lift value $H_i$ induces on $H_j$, defined as $\hat{w}_i(H_{j}) - \lowerestimate{H_{j}}$. By Lemma~\ref{lemma:semidistanceSensitivity} and Lemma~\ref{lemma:estimateSensitivity}, the sensitivity of each term is $1/\numsamples$, so the sensitivity of $\hat{w}_i(H_{j}) - \lowerestimate{H_{j}}$ is $2/\numsamples$. Since Algorithm~\ref{alg:computeScore} returns a fixed quantile of the lift values, and each lift value has sensitivity at most $2/\numsamples$, Lemma~\ref{lemma:quantilesensitivity} implies that $\score{H_i}$ also has sensitivity $2/\numsamples$.
\end{proof}

\subsection{Accuracy of the score} \label{sec:SampleListGenScore}
In the following lemma, we discuss the accuracy of the score. In fact, we show that the score helps us to distinguish the two cases defined in Equation \ref{eq:cases}. 

\begin{lemma} 
\label{lem:PtoEMP_1} 

    Let $\SampleHypoList = [H_{j_1},...,H_{j_\sampleHypoSize}]$ be a list of hypotheses, where each $H_{j_\ell} \in \SampleHypoList$ represents an i.i.d.\ sample from $\SampleHypo$. If $\sampleHypoSize$ is at least
    \begin{equation*}
        \frac{ 12 \log(n /\FailPP)}{\promptingAcc}\,,
    \end{equation*}
     then with probability at least $1 - \FailPP$ (taken over the randomness of $H_{j_\ell}$'s) the following holds for every $H_i \in \totalhyposet$:
    \begin{enumerate}
        \item If $H_i$ is $\Pprompt{\promptingScore}{\promptingAcc}$ with respect to $\SampleHypo$, then $\score{H_i}$ is at least $\promptingScore$. 

        \item  If $H_i$ is not $\Pprompt{\promptingScore / 2}{\promptingAcc /4}$ with respect to $\SampleHypo$, then $\score{H_i}$ is less than $\promptingScore / 2$. 
    \end{enumerate}
\end{lemma}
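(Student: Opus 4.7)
The plan is to reduce each of the two claims to a Chernoff concentration statement on a sum of i.i.d.\ Bernoulli random variables, then union-bound over $H_i \in \totalhyposet$. Fix $H_i \in \totalhyposet$. Since $H_{j_1}, \dots, H_{j_\sampleHypoSize}$ are i.i.d.\ from $\SampleHypo$, the indicators
\[
X_\ell \;:=\; \Indicator{\EmpSemiDis{i}{j_\ell} - \lowerestimate{H_{j_\ell}} \geq \promptingScore}, \qquad \ell = 1, \dots, \sampleHypoSize,
\]
are i.i.d.\ Bernoulli with parameter $p_i := \Pr[H_j \sim \SampleHypo]{\EmpSemiDis{i}{j} - \lowerestimate{H_j} \geq \promptingScore}$. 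The key observation linking this to the score is that $\score{H_i}$ is the $m$-th largest entry of the list of lift values, where $m := \ceil{\promptingAcc/2 \cdot \sampleHypoSize}$. Therefore $\score{H_i} \geq \promptingScore$ if and only if $\sum_{\ell=1}^{\sampleHypoSize} X_\ell \geq m$, i.e.\ at least $m$ of the sampled hypotheses are $\promptingScore$-lifted by $H_i$.

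For Part 1, assume $H_i$ is $\Pprompt{\promptingScore}{\promptingAcc}$, so $p_i \geq \promptingAcc$ and $\mu_i := \E{\sum_\ell X_\ell} \geq \promptingAcc \sampleHypoSize$. I would apply the multiplicative Chernoff lower-tail bound with $\delta = 1/2$ to get
\[
\Pr{\sum_\ell X_\ell < \tfrac{\promptingAcc \sampleHypoSize}{2}} \;\leq\; \Pr{\sum_\ell X_\ell < \tfrac{\mu_i}{2}} \;\leq\; \exp\!\left( -\tfrac{\mu_i}{8} \right) \;\leq\; \exp\!\left( -\tfrac{\promptingAcc \sampleHypoSize}{8} \right).
\]
Since $m \leq \promptingAcc \sampleHypoSize / 2 + 1 \leq \promptingAcc \sampleHypoSize/2$ for $\sampleHypoSize$ in the stated regime, the event $\sum_\ell X_\ell \geq \promptingAcc \sampleHypoSize/2$ implies $\score{H_i} \geq \promptingScore$, as desired.

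For Part 2, assume $H_i$ is not $\Pprompt{\promptingScore}{\promptingAcc/4}$, so $p_i < \promptingAcc/4$. Using stochastic dominance (coupling each Bernoulli$(p_i)$ below a Bernoulli$(\promptingAcc/4)$), the probability that $\sum_\ell X_\ell$ exceeds any threshold is maximized when $p_i = \promptingAcc/4$; under that worst case $\mu'_i := \promptingAcc \sampleHypoSize / 4$ and the target $\promptingAcc \sampleHypoSize/2 = 2 \mu'_i$ corresponds to $\delta = 1$ in the multiplicative Chernoff upper tail, giving
\[
\Pr{\sum_\ell X_\ell \geq \tfrac{\promptingAcc \sampleHypoSize}{2}} \;\leq\; \exp\!\left( -\tfrac{\mu'_i}{3} \right) \;=\; \exp\!\left( -\tfrac{\promptingAcc \sampleHypoSize}{12} \right).
\]
This shows $\sum_\ell X_\ell < m$ and hence $\score{H_i} < \promptingScore$ with high probability. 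Finally, I would union bound both failure events over all $H_i \in \totalhyposet$: the hypothesis on $\sampleHypoSize \geq 12 \log(n / \FailPP)/\promptingAcc$ makes each of the two bounds at most $\FailPP/(2n)$, yielding total failure probability at most $\FailPP$.

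The argument is essentially routine once the link between the quantile-based score and a binomial tail is identified, so the main obstacle is just bookkeeping: verifying that the ceiling in $m = \ceil{\promptingAcc/2 \cdot \sampleHypoSize}$ does not ruin either direction, and choosing the right multiplicative Chernoff form (dominant constant $1/12$ coming from the upper-tail bound in Part 2) to match the stated sample size $12 \log(n/\FailPP)/\promptingAcc$.
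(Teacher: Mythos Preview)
Your proposal is correct and follows essentially the same approach as the paper: define Bernoulli indicators for $\promptingScore$-lifting, apply multiplicative Chernoff (lower tail with $\delta=1/2$ for Part~1, upper tail with $\delta=1$ via stochastic domination by $\bin(\sampleHypoSize,\promptingAcc/4)$ for Part~2), then union bound over $\totalhyposet$. The only slip is the line ``$m \le \promptingAcc\sampleHypoSize/2 + 1 \le \promptingAcc\sampleHypoSize/2$,'' which is false as written; the correct justification is simply that $\sum_\ell X_\ell$ is an integer, so $\sum_\ell X_\ell \ge \promptingAcc\sampleHypoSize/2$ already forces $\sum_\ell X_\ell \ge \lceil \promptingAcc\sampleHypoSize/2\rceil = m$ (and note each $H_i$ falls in at most one of the two cases, so $\FailPP/n$ per hypothesis suffices---you don't need $\FailPP/(2n)$).
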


\begin{proof}
    % \vw{the notation here needs to be fixed (see reviewer 3). also maybe say $H_j$ (or $H_{j_\ell}$ sampled from $Q$) sampled from $Q$ in the proof instead. $t$ should be changed to $|\mathcal{K}|$.}
    Fix a candidate hypothesis $H_i \in \totalhyposet$. For each $H_{j_\ell} \in \SampleHypoList$, use an indicator variable $\Indicator{\EmpSemiDis{i}{j_\ell} - \lowerestimate{H_{j_\ell}} \geq t}$, where $t \in [0, 1)$, to determine whether $H_i$ can lift $H_j$ by at least $t$ or not. Notice the expectation of this indicator variable evaluates to the probability that $H_i$ lifts $H_j$ by at least $t$,
    \begin{equation*}
         \E[H_{j_\ell} \sim \SampleHypo]{\Indicator{\EmpSemiDis{i}{j_\ell} - \lowerestimate{H_{j_\ell}} \geq t}} 
        = \Pr[H_j \sim \SampleHypo]{\EmpSemiDis{i}{j} - \lowerestimate{H_{j}} \geq t}. 
    \end{equation*}
    We set $t = \promptingScore$ in case 1 and $\promptingScore / 2$ in case 2. We estimate the probability above using the following empirical estimators, $\bar{Z}^{\promptingScore}_i$ and $\bar{Z}^{\promptingScore / 2}_i$, defined as: 
    \begin{equation*}
        \bar{Z}^{\promptingScore}_i \coloneqq \frac{1}{k}\sum_{\ell = 1}^k \Indicator{\EmpSemiDis{i}{j_\ell} - \lowerestimate{H_{j_\ell}} \geq \promptingScore}
    \end{equation*}
    and
    \begin{equation*}
        \bar{Z}^{\promptingScore / 2}_i \coloneqq \frac{1}{k}\sum_{\ell = 1}^k \Indicator{\EmpSemiDis{i}{j_\ell} - \lowerestimate{H_{j_\ell}} \geq \promptingScore / 2}
    \end{equation*}
    \begin{enumerate}
    \item If $H_i$ is $\Pprompt{\promptingScore}{\promptingAcc}$ with respect to $\SampleHypo$, then
        \begin{equation*}
           \Pr[H_j \sim \SampleHypo]{\EmpSemiDis{i}{j} - \lowerestimate{H_{j}} \geq \promptingScore}
            \geq \promptingAcc.
        \end{equation*}
        Using a Chernoff bound, we obtain
        \begin{align*}
            \Pr[H_{j_\ell} 
                \sim \SampleHypo]
                {\bar{Z}^{\promptingScore}_i \leq \frac{\promptingAcc}{2}} & \leq  \Pr[H_{j_\ell} 
                \sim \SampleHypo]
                {\bar{Z}^{\promptingScore}_i
                \leq \left(1 - \frac{1}{2}\right)  
                \E[H_{j_\ell} \sim \SampleHypo]{\Indicator{\EmpSemiDis{i}{j_\ell} - \lowerestimate{H_{j_\ell}} \geq \promptingScore} }} 
            \\
            & \leq \exp \left(- k \,
                \E[H_{j_\ell} \sim \SampleHypo]
                {\Indicator{\EmpSemiDis{i}{j_\ell} - \lowerestimate{H_{j_\ell}} \geq \promptingScore}} / 8 \right) \\
            & \leq \exp \left(- k \, \promptingAcc /8 \right) \leq \frac{\FailPP}{n} \,.
        \end{align*}
        
        Therefore, with probability $\geq 1 - \FailPP/n$, at least $\promptingAcc / 2$ fraction of the hypotheses in $\SampleHypoList$ can be lifted by $H_i$ by at least $\promptingScore$. This implies $H_i$ is $\EMPPprompt{\promptingAcc /2}{\promptingScore}$ with respect to $\SampleHypoList$. This further implies that there are at least $\ceil{k\promptingAcc / 2}$ entries among the lift values in $\TT$ that are at least $\promptingScore$. Therefore, the $\ceil{k\promptingAcc /2}$-th largest lift values must be at least $\promptingScore$. Thus, $\score{H_i}$ is at least $\promptingScore$ as desired in the statement of the lemma. 

    \item If $H_i$ is not $\Pprompt{\promptingScore / 2}{\promptingAcc / 4}$ with respect to $\SampleHypo$, then 
        \begin{equation*}
            \Pr[H_j \sim \SampleHypo]{\EmpSemiDis{i}{j} - \lowerestimate{H_{j}} \geq \promptingScore / 2}
            < \promptingAcc / 4. 
        \end{equation*}
        Take $X$ as a binomial random variable with parameter $(k, \promptingAcc/4)$. Clearly, $X/k$ is stochastically larger than $\bar{Z}^{\promptingScore / 2}_i$, meaning for any fix threshold $x \in (0, 1]$, the probability of $X/k > x$ is larger than the probability of $\bar{Z}^{\promptingScore / 2}_i > x$. Setting $x = \promptingScore / 2$ attains
        \begin{align*}
             \Pr[H_{j_\ell} 
                \sim \SampleHypo]
                {\bar{Z}^{\promptingScore / 2}_i 
                 >  \promptingAcc / 2
                }
            & < \Pr[X\sim\bin({k, \promptingAcc/4})]
                {\frac{X}{k}
                 >  \promptingAcc / 2
                }
            \\ & = \Pr[X\sim\bin({k, \promptingAcc/4})]
                {\frac{X}{k}
                 >  \left(1 + 1\right) \cdot \E{\frac{X}{k}}
                }
            \\ & \leq \exp\left(-\frac{k\,\promptingAcc}{12}\,\right) \leq \frac{\FailPP}{n}\,. 
        \end{align*}
        
        Thus, with probability $\geq 1 - \FailPP/n$, fewer than $\promptingAcc / 2$ fraction of lift values exceed $\promptingScore$. 
        This implies that $H_i$ is not $\EMPPprompt{\promptingAcc /2}{\promptingScore}$ with respect to $\SampleHypoList$. This further implies that there are less than $\ceil{ k \promptingAcc /2}$ entries of lift values at least $\promptingScore$. Thus, $\score{H_i}$ is less than $\promptingScore$. 
    \end{enumerate}
    Using a union bound, we can show the above holds for all the hypotheses in $\totalhyposet$. Hence, the proof is complete. 
    
\end{proof}

\section{Finding a Prompting Hypothesis}
\label{sec:SVT}

To privately find a prompting hypothesis given the scores computed in Algorithm~\ref{alg:computeScore}, we use the sparse vector technique \cite{dwork2014algorithmic, lyuSL16}. We feed a stream of scores into Algorithm~\ref{alg::SVT}, which privately outputs either the index of the hypothesis which was detected to have a score above $\frac{3 \promptingScore}{4}$, or $\bot$, if no hypotheses have sufficiently high scores. With high probability, we can guarantee that, if the mechanism outputs $i$, then $\score{H_i} > \frac{\promptingScore}{2}$, and, if the mechanism sees a hypothesis $H_i$ with $\score{H_i} > \promptingScore$, it will not output $\bot$.

\begin{algorithm}
\caption{An algorithm for privately finding a prompting hypothesis} 
\label{alg::SVT}
\begin{algorithmic}[1]
    \Procedure{Find-Prompting-Hypothesis}{$\epsilon, \Delta, \promptingScore, \promptingAcc, \HH, \KK, \dataset$}
    \State $\epsilon_1 \leftarrow \frac{\epsilon}{2}$
    \State $\epsilon_2 \leftarrow \epsilon - \epsilon_1$
    \State $\rho \leftarrow \Lap \left( \frac{\Delta}{\epsilon_1} \right)$
    \State $\tau \leftarrow \frac{3 \promptingScore}{4}$
    \State $\hat{\tau} \leftarrow \tau + \rho$
    \For{$H_i \in \HH$}
        \State $\nu_i \leftarrow \Lap \left( \frac{ 2 \Delta}{\epsilon_2} \right)$
        \State $\score{H_i} \leftarrow $ \Call{Compute-Score}{$H_i, \promptingAcc, \KK, \dataset$}
        \Comment{
        Algorithm~\ref{alg:computeScore}
        }\If{$\score{H_i} + \nu_i \geq \hat{\tau}$}
            \State \Return{$H_i$} and \textbf{halt}
        \EndIf
    \EndFor
    \State \Return{$\bot$} and \textbf{halt}
    \EndProcedure
\end{algorithmic}
\end{algorithm}

\begin{theorem}[Theorems 3.23 and 3.24 of \cite{dwork2014algorithmic}]
\label{thm:svt}
    Suppose we are given parameters $\epsilon, \Delta > 0$ and $\promptingScore, \promptingAcc \in (0, 1]$. Assume we are given two lists of hypotheses $\HH$ and $\KK$ such that $\score{\cdot}$ has sensitivity at most $\Delta \leq \frac{\promptingScore \epsSVT}{32 \log \left( 2 / \confidenceSVT \right)}$. Then the \Call{Find-Prompting-Hypothesis}{} Procedure in Algorithm~\ref{alg::SVT} receives $\epsilon, \Delta, \promptingScore, \promptingAcc, \HH, \KK$, and $\dataset$ as its input and outputs $H_i$ or $\bot$ with $\epsilon$-privacy such that, with probability at least $1- \confidenceSVT$:

    \begin{enumerate}
        \item If \Call{Find-Prompting-Hypothesis}{} outputs $H_i \in \HH$, then $\score{H_i} > \frac{\promptingScore}{2}$.
        \item If there exists $H_i$ such that $\score{H_i} > \promptingScore$, then \Call{Find-Prompting-Hypothesis}{} does not output $\bot$.
    \end{enumerate}

\end{theorem}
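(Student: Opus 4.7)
The plan is to recognize Algorithm~\ref{alg::SVT} as a direct instantiation of the \emph{Above Threshold} mechanism of Dwork and Roth (Algorithm~1 of Chapter~3 of~\cite{dwork2014algorithmic}), so that both parts of the theorem reduce to their Theorems~3.23 and~3.24. For privacy, the algorithm splits the budget as $\epsilon_1 = \epsilon_2 = \epsilon/2$, draws the threshold noise $\rho \sim \Lap(\Delta/\epsilon_1)$ once, and adds a fresh query noise $\nu_i \sim \Lap(2\Delta/\epsilon_2)$ to each $\score{H_i}$. Since Lemma~\ref{lemma:scoresensitivity} gives that $\score{\cdot}$ has sensitivity at most $\Delta$, and the algorithm halts at the first above-threshold query, the standard sparse-vector analysis yields $(\epsilon_1 + \epsilon_2) = \epsilon$-differential privacy; returning the index $i$ rather than the raw ``above threshold'' bit is then free by post-processing (Fact~\ref{fct:postprocessing}).

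For accuracy, I would condition on the good event
\[
    E \;:=\; \left\{ |\rho| \leq \tfrac{\promptingScore}{8} \right\} \;\cap\; \bigcap_{H_i \in \HH} \left\{ |\nu_i| \leq \tfrac{\promptingScore}{8} \right\}.
\]
Using the Laplace tail bound $\Pr[|\Lap(b)| \geq t] = e^{-t/b}$ with the chosen scales, the threshold event fails with probability at most $\exp\!\bigl(-\promptingScore \epsilon / (16\Delta)\bigr)$ and each query event fails with probability at most $\exp\!\bigl(-\promptingScore \epsilon / (32\Delta)\bigr)$. A union bound over the at most $|\HH|$ query noises, combined with the assumed bound on $\Delta$, gives $\Pr[E] \geq 1 - \confidenceSVT$ once the $|\HH|$ factor is absorbed into the user-chosen $\confidenceSVT$, which is exactly how the theorem is invoked in Section~\ref{sec:keyEvents} where $\confidenceSVT$ is set to scale inversely in $\numHypotheses$.

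Conditioned on $E$, both utility claims reduce to two triangle-inequality manipulations of the decision rule $\score{H_i} + \nu_i \geq \tau + \rho$ with threshold $\tau = 3\promptingScore/4$. If the algorithm returns $H_i$, then
\[
    \score{H_i} \;\geq\; \tau + \rho - \nu_i \;\geq\; \tfrac{3\promptingScore}{4} - |\rho| - |\nu_i| \;\geq\; \tfrac{\promptingScore}{2},
\]
establishing Part~1. For Part~2, if some $H_{i^*}$ satisfies $\score{H_{i^*}} > \promptingScore$ yet the algorithm outputs $\bot$, then in particular the query for $H_{i^*}$ must have failed to trigger, so $\score{H_{i^*}} < \tau + |\rho| + |\nu_{i^*}| \leq \promptingScore$, contradicting the assumption. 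The only real obstacle I foresee is bookkeeping: the Laplace tail bound for the $\nu_i$'s inevitably picks up a $\log|\HH|$ factor from the union bound, whereas the stated condition on $\Delta$ contains only $\log(2/\confidenceSVT)$, so care is needed to make clear that the $|\HH|$ term has been absorbed into the user's choice of $\confidenceSVT$ at the call site. Once that accounting is pinned down, the rest is a mechanical instantiation of the Dwork--Roth analysis with the constants $\tau = 3\promptingScore/4$ and noise band $\promptingScore/4$ chosen so that the two gaps $\promptingScore - \tau$ and $\tau - \promptingScore/2$ both exceed the worst-case noise magnitude $|\rho| + |\nu_i|$.
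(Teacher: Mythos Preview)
Your proposal is correct and mirrors the paper's proof almost exactly: both condition on $|\rho|\le\promptingScore/8$ and $\max_i|\nu_i|\le\promptingScore/8$, derive the two accuracy claims via the same triangle-inequality manipulations around $\tau=3\promptingScore/4$, and invoke the standard privacy analysis of Above Threshold. Regarding the $\log|\HH|$ you flagged, the paper's own proof in fact concludes with $\Delta\le\promptingScore\epsilon/(32\log(2n/\confidenceSVT))$, so the missing $n$ in the theorem statement is simply a typo (consistent with how the bound is invoked in Section~\ref{sec:keyEvents}), not something to be ``absorbed'' into $\confidenceSVT$ at the call site.
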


\paragraph{Proof overview: }
    Theorem~\ref{thm:svt} follows from Theorems 3.23 and 3.24 in \cite{dwork2014algorithmic}. The privacy guarantee and the first statement of the theorem's accuracy guarantee are stated directly in \cite{dwork2014algorithmic}, while the second accuracy statement is the contrapositive of the second statement in \cite{dwork2014algorithmic}, Theorem 3.24. 
    % \rs{The lower bound on the number of samples required for the accuracy statement to hold follows by substituting $\Delta = \frac{2}{\numsamples}$ into the accuracy guarantees.}

\begin{proof}   
    \Call{Find-Prompting-Hypothesis}{} is an instance of the sparse vector technique \cite{dwork2014algorithmic}. When $\Delta$ is an upper bound on the sensitivity of the outputs of \Call{Compute-Score}{}, the privacy of the algorithm holds by Theorem 3.23 in \cite{dwork2014algorithmic}. 
    % \rs{\cite{lyu2016understanding} is also concerned with the privacy of SVT, but they consider the case where we need to output multiple 'above' results. The Dwork textbook result is correct (and unchanged by Lyu) in that case.}

    The proof of accuracy of \Call{Find-Prompting-Hypothesis}{} is a simple modification of Theorem 3.24 in \cite{dwork2014algorithmic}, removing the requirement that the last query (or, in this case, hypothesis) is the only query with a score close to being above the threshold.

    Let $\tau = \frac{3\promptingScore}{4}$ be the threshold of the mechanism. We want to find conditions on $\promptingScore$ such that if Algorithm~\ref{alg::SVT} outputs $H_i$, then $\score{H_i} > \thresSVT - \frac{\promptingScore}{4} = \frac{\promptingScore}{2}$, and, if Algorithm~\ref{alg::SVT} outputs $\bot$, then for all $i$, $\score{H_i} < \thresSVT + \frac{\promptingScore}{4} = \promptingScore$. We will then use these conditions to establish bounds on $\Delta$. Note that the second statement here is the contrapositive of the second statement in our theorem statement.

    Observe that it is sufficient to find conditions on $\promptingScore$ such that, with probability at most $1 - \confidenceSVT$:

    \begin{equation}
    \label{eqn:sufficient_alpha}
        \underset{i \in [n]}{\max} \abs{\nu_i} + \abs{\rho} \leq \frac{\promptingScore}{4}\,.
    \end{equation}

    Recall that we don't halt at $i$ if:
    \begin{align*}
        \score{H_i} + \nu_i &< \thresSVT + \rho \\
        \implies \score{H_i} &< \thresSVT + \rho - \nu_i \\
        % &< \thresSVT + \rho + \abs{\nu_i} \\
        % &= \thresSVT + \rho + \thresSVT - \thresSVT + \abs{\nu_i} \\
        % &< \thresSVT + \abs{\thresSVT + \rho - \thresSVT} + \abs{\nu_i} \\
        % &= \thresSVT + \abs{\rho} + \abs{\nu_i} \\
        &\leq \thresSVT + \abs{\rho} + \abs{\nu_i} \\
        &\leq \thresSVT + \frac{\promptingScore}{4} \quaaad \text{by Equation~\ref{eqn:sufficient_alpha}} \; .
    \end{align*}

    Further, if we do halt at $i$, then:
    \begin{align*}
        \score{H_i} + \nu_i &\geq \thresSVT + \rho \\
        \implies \score{H_i} &\geq \thresSVT + \rho - \nu_i \\
        % &\geq \thresSVT + \rho - \abs{\nu_i} \\
        % &= \thresSVT + \rho + \thresSVT - \thresSVT - \abs{\nu_i} \\
        % &> \thresSVT - \abs{\thresSVT + \rho - \thresSVT} - \abs{\nu_i} \\
        % &> \thresSVT - \left( \abs{\thresSVT + \rho - \thresSVT} + \abs{\nu_i} \right) \\
        % &= \thresSVT - \left( \abs{\rho} + \abs{\nu_i} \right) \\
        &\geq \thresSVT - \left( \abs{\rho} + \abs{\nu_i} \right) \\
        &\geq \thresSVT - \frac{\promptingScore}{4} \quaaad \text{by Equation~\ref{eqn:sufficient_alpha}} \;.
    \end{align*}

    Thus, to find $\promptingScore$ satisfying Equation~\ref{eqn:sufficient_alpha}, we can equivalently find conditions on $\promptingScore$ such that:

    \begin{equation*}
        \Pr{\abs{\rho} \geq \frac{\promptingScore}{8}} \leq \frac{\beta}{2} \quaaad \text{ and } \quaaad \Pr{\underset{i \in [n]}{\max} \abs{\nu_i} \geq \frac{\promptingScore}{8}} \leq \frac{\beta}{2} \; .
    \end{equation*}

    By the properties of the Laplace distribution and the union bound, we find $\frac{\promptingScore}{4} \geq \frac{8 \Delta}{\epsSVT} \log \left( \frac{2 \numHypotheses}{\confidenceSVT} \right)$.

    This results in a bound on $\Delta$ of:
    \begin{equation*}
        \Delta \leq \frac{\promptingScore \epsilon}{32 \log \left( 2 \numHypotheses / \confidenceSVT \right) } \;.
    \end{equation*}

    % To get our guarantees in terms of $\promptingScore$, we need $\alpha = \frac{\promptingScore}{4}$. Recalling that $\Delta = \frac{2}{\numsamples}$, we find $\numsamples$ such that this is satisfied:

    % \begin{align*}
    %     \alpha = \frac{\promptingScore}{4} &\implies \frac{8 \Delta}{\epsSVT} \log \left( \frac{2 \numHypotheses}{\confidenceSVT} \right) = \frac{\promptingScore}{4} \\
    %     &\implies \frac{16}{\epsSVT \numsamples} \log \left( \frac{2 \numHypotheses}{\confidenceSVT} \right) = \frac{\promptingScore}{4} \\
    %     &\implies \numsamples = \frac{64}{\promptingScore\epsSVT} \log \left( \frac{2 \numHypotheses}{\confidenceSVT} \right)
    % \end{align*}
\end{proof}
\clearpage

\section{Acknowledgments}

R.S. acknowledges partial support from the Ken Kennedy Institute Research Cluster Fund and the Ken Kennedy Institute Computational Science and Engineering Recruiting Fellowship, funded by the Energy HPC Conference and the Rice University Department of Computer Science.

\bibliographystyle{alpha}
\bibliography{references, additional_references}
\end{document}